\documentclass[11pt,letterpaper,onecolumn]{article}

\newif\ifdraft
\draftfalse

\usepackage[letterpaper,margin=1in]{geometry}
\usepackage{microtype}
\usepackage{lmodern}
\usepackage{setspace}
\usepackage{amsmath,amsfonts,amssymb,amsthm}
\usepackage{thmtools,thm-restate}
\usepackage{mathtools}
\usepackage{nicefrac}

\usepackage{graphicx}
\usepackage{subcaption}
\usepackage{xcolor}
\usepackage{colortbl}
\usepackage{booktabs}
\usepackage{tabularx}
\usepackage{tcolorbox}
\usepackage{enumitem}
\usepackage{comment}

\usepackage{tikz}
\usetikzlibrary{calc,arrows.meta,decorations.markings}
\usepackage{pgfplots}
\pgfplotsset{compat=newest}
\usepackage{pgfplotstable}

\pgfplotsset{
  discard if not/.style 2 args={
    x filter/.code={
      \edef\tempa{\thisrow{#1}}
      \edef\tempb{#2}
      \ifx\tempa\tempb\else
        
      \fi
    }
  }
}
\pgfmathdeclarefunction{lg2}{1}{\pgfmathparse{ln(#1)/ln(2)}}
\pgfmathdeclarefunction{lg10}{1}{\pgfmathparse{ln(#1)/ln(10)}}

\usepackage[ruled,vlined]{algorithm2e}
\DontPrintSemicolon

\usepackage[square,numbers]{natbib}
\usepackage[pagebackref,colorlinks=true,
  linkcolor=red!40!black,
  citecolor=green!40!black,
  urlcolor=blue!40!black,
  filecolor=magenta!40!black,
  pdfdisplaydoctitle=true
]{hyperref}
\providecommand{\doi}[1]{\href{https://doi.org/\detokenize{#1}}{doi: \nolinkurl{#1}}}
\usepackage[nameinlink,sort&compress,capitalize]{cleveref}

\providecommand*{\backref}[1]{}
\renewcommand*{\backref}[1]{}
\providecommand*{\backrefalt}[4]{}
\renewcommand*{\backrefalt}[4]{%
  \ifcase #1%
  \or [p.~#2.]%
  \else [pp.~#2.]%
  \fi%
}

\ifdraft
  \usepackage[mathlines]{lineno}
  \usepackage[backgroundcolor=gray!10,textsize=footnotesize]{todonotes}
\else
  \newcommand{\todo}[2][]{}
\fi

\newtheorem{theorem}{Theorem}[section]
\newtheorem{lemma}[theorem]{Lemma}

\theoremstyle{definition}

\crefname{rrule}{Rule}{Rules}

\newcommand{\boxproblem}[4]{%
  \begin{center}
    \begin{tcolorbox}[
        width=0.96\linewidth,
        colback=white,
        colframe=black!35,
        boxrule=0.45pt,
        arc=4pt,
        outer arc=4pt,
        boxsep=0pt,
        left=7pt,
        right=7pt,
        top=6pt,
        bottom=6pt]
      \centering
      {\normalsize\scshape #1\par}
      \vspace{3pt}
      {\color{black!35}\hrule height 0.45pt}
      \vspace{5pt}
      \setlength{\tabcolsep}{4pt}%
      \renewcommand{\arraystretch}{1.12}%
      \normalfont
      \begin{tabularx}{\linewidth}{@{}>{\normalfont\normalsize\bfseries}l@{\hspace{0.7em}}>{\normalfont\normalsize}X@{}}
        Input:      & #2 \\
        Task:       & #3 \\
        Parameter:  & #4
      \end{tabularx}
    \end{tcolorbox}
  \end{center}
}

\newcommand{\boxproblemnoparam}[3]{%
  \begin{center}
    \begin{tcolorbox}[
        width=0.96\linewidth,
        colback=white,
        colframe=black!35,
        boxrule=0.45pt,
        arc=4pt,
        outer arc=4pt,
        boxsep=0pt,
        left=7pt,
        right=7pt,
        top=6pt,
        bottom=6pt]
      \centering
      {\normalsize\scshape #1\par}
      \vspace{3pt}
      {\color{black!35}\hrule height 0.45pt}
      \vspace{5pt}
      \setlength{\tabcolsep}{4pt}%
      \renewcommand{\arraystretch}{1.12}%
      \normalfont
      \begin{tabularx}{\linewidth}{@{}>{\normalfont\normalsize\bfseries}l@{\hspace{0.7em}}>{\normalfont\normalsize}X@{}}
        Input:  & #2 \\
        Task:   & #3
      \end{tabularx}
    \end{tcolorbox}
  \end{center}
}

\newcommand{\ZZ}{\mathbb{Z}}

\newcommand{\poly}{\textnormal{poly}}

\title{A Single-Exponential FPT Algorithm \\ for 2-Vertex-Connectivity Augmentation}
\author{%
  Tomohiro Koana\thanks{Graduate School of Information Science and Technology, The University of Tokyo, Japan. Email: \href{mailto:tomohiro.koana@gmail.com}{\nolinkurl{tomohiro.koana@gmail.com}}}
  \and
  Soh Kumabe\thanks{CyberAgent, Inc., Tokyo, Japan. Email: \href{mailto:kumabe_soh@cyberagent.co.jp}{\nolinkurl{kumabe_soh@cyberagent.co.jp}}}
}
\date{}

\begin{document}
\ifdraft\linenumbers\fi
\maketitle

\begin{abstract}
We study restricted-link augmentation to $2$-vertex-connectivity.
An instance consists of a graph $G$, possibly disconnected, a set $L$ of admissible links on its vertices, integer link costs in $\{1,\dots,W\}$, and an integer $k$; the task is to add at most $k$ links of minimum total cost so that the resulting multigraph is $2$-vertex-connected.
Recent work gives $O^*(k^{O(k)})$-time algorithms for unweighted $\lambda$-vertex-connectivity augmentation for every $\lambda\leq 4$ [Carmesin and Ramanujan, SODA 2026], and an $O^*((k+\lambda)^{O(k)})$-time algorithm for arbitrary $\lambda$ [Korhonen and Thorup, FOCS 2026].
We give a deterministic algorithm with running time $O^*(36^kW)$.
Thus, for $\lambda=2$, the unweighted running time improves from $O^*(k^{O(k)})$ to $O^*(36^k)$, and the algorithm also handles link costs with pseudo-polynomial dependence on $W$.

We reduce the problem to a boundary-pair variant of \(2\)-vertex-connected spanning subgraph, where each vertex is assigned a pair of incident edges with an associated pair cost.
We solve this variant using a cancellation identity, inspired by Cut\&Count~[Cygan et al., TALG 2022], obtained by applying M\"obius inversion to decompositions along cut vertices: the identity cancels every connected spanning graph with more than one block and keeps exactly the \(2\)-vertex-connected spanning graphs.
\end{abstract}

\newpage

\section{Introduction}

Connectivity augmentation asks which additional edges, called links, should be added to a graph to meet a prescribed edge- or vertex-connectivity requirement.
In this paper, the available links are part of the input: besides a graph $G=(V,E)$, we are given a set $L$ of links on $V$ with costs, and the objective is to add at most $k$ links from $L$ so as to achieve the required connectivity at minimum total cost.

For $\lambda=2$, Eswaran and Tarjan, and independently Ple\v{s}n{\'i}k, gave polynomial-time algorithms in the unrestricted unit-cost model, where every missing edge is available as a link~\citep{DBLP:journals/siamcomp/EswaranT76,Plesnik76}.
Later work gave polynomial-time algorithms for higher edge-connectivity requirements in unrestricted augmentation models~\citep{DBLP:conf/focs/Frank90,DBLP:journals/jct/FrankJ95,DBLP:journals/jcss/WatanabeN87}.
For vertex-connectivity in the unrestricted model, Jackson and Jord{\'a}n showed fixed-parameter tractability with respect to~$\lambda$, and V{\'e}gh gave a polynomial-time algorithm for the augment-by-one case in which the input graph is already $(\lambda-1)$-vertex-connected~\citep{DBLP:journals/jct/JacksonJ05a,DBLP:journals/siamdm/Vegh11}.
It remains open whether unrestricted vertex-connectivity augmentation is polynomial-time solvable or NP-hard when $\lambda$ is part of the input~\citep{DBLP:journals/corr/abs-2605-11757}.

By contrast, the restricted-link model already contains a classical hard problem.
Indeed, if $G$ is the empty graph on $n$ vertices and $L=E(H)$ for an input graph $H$, then deciding whether $G$ can be augmented to $2$-edge- or $2$-vertex-connectivity using at most $n$ links is exactly the \textsc{Hamiltonian Cycle} problem in~$H$.
Thus the problem is NP-hard already for $\lambda=2$.
This hardness has motivated extensive work on approximation algorithms~\citep{DBLP:journals/mp/AngelidakisHS23,DBLP:journals/siamcomp/ByrkaGA23,DBLP:journals/mp/CecchettoTZ24,DBLP:journals/siamcomp/FredericksonJ81,DBLP:journals/combinatorica/Jain01,DBLP:journals/jal/KhullerT93,DBLP:journals/siamcomp/KortsarzKL04,DBLP:journals/tcs/Nutov24,DBLP:conf/focs/TraubZ21,DBLP:conf/stoc/TraubZ23} and on parameterized algorithms with respect to the number $k$ of added links~\citep{DBLP:conf/icalp/BasavarajuFGMRS14,DBLP:conf/soda/Carmesin026,DBLP:journals/networks/GuoU10,DBLP:journals/corr/abs-2605-11757,DBLP:journals/talg/MarxV15,DBLP:journals/dam/Nagamochi03,DBLP:conf/esa/Nutov24}.


Most parameterized results concern augment-by-one instances, where the input is already one connectivity level below the target.
For edge connectivity, augmentation from \(\lambda-1\) to \(\lambda\) amounts to selecting links that cross every minimum cut.
Nagamochi gave an \(O^*(k^{O(k)})\)-time algorithm for the minimum-cardinality case of tree augmentation~\citep{DBLP:journals/dam/Nagamochi03}.\footnote{$O^*(\cdot)$ suppresses factors polynomial in the input size.}
Guo and Uhlmann obtained \(O(k^2)\)-size kernels for both edge- and vertex-connectivity augmentation~\citep{DBLP:journals/networks/GuoU10}.
Marx and V{\'e}gh gave an \(O^*(k^{O(k)})\)-time algorithm and a polynomial kernel for minimum-cost edge-connectivity augment-by-one~\citep{DBLP:journals/talg/MarxV15}.

The single-exponential algorithms known for augment-by-one problems use reductions to \textsc{Steiner Tree} or \textsc{Group Steiner Tree}.
Basavaraju et al.\ reduced minimum-cost edge-connectivity augment-by-one to \textsc{Steiner Tree} and obtained an \(O^*(9^k)\)-time algorithm~\citep{DBLP:conf/icalp/BasavarajuFGMRS14}.
Nutov used \textsc{Group Steiner Tree} to obtain the same bound for vertex-connectivity augmentation from \(\lambda-1\) to \(\lambda\) for every \(\lambda\leq 3\)~\citep{DBLP:conf/esa/Nutov24}.
The cactus representation of minimum cuts reduces the relevant edge-connectivity cases to augmentation from \(1\) to \(2\) or from \(2\) to \(3\)~\citep{DinitzKL76}, and \Cref{sec:preliminaries} recalls the standard reduction from edge-connectivity augmentation to vertex-connectivity augmentation.
Hence Nutov's result also covers the corresponding single-exponential edge-connectivity case.

Beyond augment-by-one, Marx and V{\'e}gh handled the edge-connectivity case of raising connectivity from $0$ to $2$ in time $O^*(k^{O(k)})$~\citep{DBLP:journals/talg/MarxV15}.
For vertex connectivity, fixed-parameter tractability on arbitrary inputs remained open until the recent unweighted results summarized below.

Carmesin and Ramanujan proved that arbitrary-input \(\lambda\)-vertex-connectivity augmentation is solvable in time \(O^*(k^{O(k)})\) for every \(\lambda\leq 4\)~\citep{DBLP:conf/soda/Carmesin026}.
Korhonen and Thorup subsequently obtained an \(O^*(k^{O(k)})\)-time algorithm for arbitrary-input \(\lambda\)-edge-connectivity augmentation and an \(O^*((k+\lambda)^{O(k)})\)-time algorithm for arbitrary-input \(\lambda\)-vertex-connectivity augmentation~\citep{DBLP:journals/corr/abs-2605-11757}.
Both results are unweighted.

These algorithms are fixed-parameter tractable, but the dependence on $k$ is not single-exponential.
The known single-exponential algorithms for augment-by-one problems rely on reductions to \textsc{Steiner Tree}, and no corresponding reduction is known for arbitrary inputs.
We give a single-exponential algorithm for \(\lambda=2\).\footnote{After contracting the connected components of \(G\), the case \(\lambda=1\) is a minimum-spanning-tree problem on the admissible links.}

\begin{table}[t]
  \centering
  \small
  \setlength{\tabcolsep}{4pt}
  \renewcommand{\arraystretch}{1.12}
  \begin{tabularx}{\linewidth}{@{\hspace{5pt}}>{\raggedright\arraybackslash}p{0.14\linewidth}>{\raggedright\arraybackslash}X>{\raggedright\arraybackslash}p{0.22\linewidth}>{\raggedright\arraybackslash}p{0.10\linewidth}>{\raggedright\arraybackslash}p{0.13\linewidth}@{}}
    \toprule
    Connectivity & Augmentation & Running time & Costs & Ref. \\
    \midrule
    Edge
      & Augment-by-one from 1 to 2.
      & $O^*(k^{O(k)})$
      & No
      & \citep{DBLP:journals/dam/Nagamochi03} \\
    Edge
      & Arbitrary input, $\lambda = 2$.
      & $O^*(k^{O(k)})$
      & Yes
      & \citep{DBLP:journals/talg/MarxV15} \\
    Edge
      & Augment-by-one.
      & $O^*(9^k)$
      & Yes
      & \citep{DBLP:conf/icalp/BasavarajuFGMRS14} \\
    Vertex
      & Augment-by-one from $2$ to $3$.
      & $O^*(9^k)$
      & Yes
      & \citep{DBLP:conf/esa/Nutov24} \\
    Vertex
      & Arbitrary input, for every $\lambda\le 4$.
      & $O^*(k^{O(k)})$
      & No
      & \citep{DBLP:conf/soda/Carmesin026} \\
    Edge
      & Arbitrary input.
      & $O^*(k^{O(k)})$
      & No
      & \citep{DBLP:journals/corr/abs-2605-11757} \\
	    Vertex
	      & Arbitrary input.
	      & $O^*((k + \lambda)^{O(k)})$
	      & No
	      & \citep{DBLP:journals/corr/abs-2605-11757} \\
	    \arrayrulecolor{black!35}\midrule\arrayrulecolor{black}
	    Vertex
	      & Arbitrary input, $\lambda=2$.
	      & $O^*(36^kW)$
	      & Yes
	      & This work \\
	    \bottomrule
  \end{tabularx}
  \caption{Known parameterized algorithms for restricted-link connectivity augmentation. Here, \(k\) is the number of added links, \(\lambda\) is the target connectivity, and \(W\) is the maximum link cost. The weighted bound in the last row has pseudo-polynomial dependence on \(W\); its unweighted specialization is FPT parameterized by \(k\).}
  \label{tab:known-fpt-algorithms}
\end{table}

Formally, we study the following problem.

\boxproblem{2-Vertex-Connectivity Augmentation (2VCA)}
{A graph $G=(V,E)$, a set $L$ of links on $V$, an integer $k$, an integer $W\geq 1$, and link costs $c\colon L\to \{1,\dots,W\}$.}
{Compute a minimum-cost set $A\subseteq L$ such that $|A|\leq k$ and the augmented multigraph $(V,E\cup A)$ is $2$-vertex-connected, or report that no such set exists.}
{$k$}

We prove the following.

\begin{restatable}{theorem}{twovcamainthm}\label{thm:main_runtime}
\textsc{2VCA} with positive integer link costs bounded by \(W\) can be solved deterministically in \(O^*(36^kW)\) time.
\end{restatable}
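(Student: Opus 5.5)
We follow the route announced in the abstract: compress the instance to a small graph, reduce to a boundary-pair variant of 2-vertex-connected spanning subgraph, and solve that variant by an algebraic cancellation identity combined with a cut-and-count style dynamic program over subsets.

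\emph{Step 1 (compression).} Let $A$ be a solution with $|A|\le k$, and let $T$ be the set of at most $2k$ link endpoints. In the augmented graph, the $2$-vertex-connectivity requirement only constrains how $A$ interacts with the block structure of $G$. We would compute the block--cut forest of $G$ and express 2-connectivity of $(V,E\cup A)$ combinatorially: for every cut vertex $v$ of each component, the links must reconnect the pieces of $G-v$; and the components of $G$ must be tied together without creating new cut vertices. Concretely, each terminal $t\in T$ is used by some links. We then view the solution as a graph $H$ on at most $2k$ ``link endpoints'' glued through $G$.

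\emph{Step 2 (boundary-pair variant).} The key reformulation is as follows. Each vertex $x$ of an auxiliary structure, roughly each terminal together with the portion of $G$ it attaches to, receives a pair of incident edges. The pair records how the vertex's two incident solution edges ``enter and leave'' the part of the block--cut tree it sits in, and the pair cost encodes whether the underlying tree paths in $G$ create or destroy cut vertices. The claim to prove is then: $(V,E\cup A)$ is 2-connected if and only if the resulting boundary-pair structure on at most $2k$ vertices is a 2-vertex-connected spanning graph of prescribed cost. Pair costs are computable in polynomial time by path queries in the block--cut forest. I expect this reduction to be the main obstacle. The difficulty is ensuring that one pair per vertex suffices, so that the state space stays $2^{O(k)}$, and proving correctness for disconnected $G$. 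I would first handle connected $G$ via leaf blocks and cut vertices, then treat the components as additional vertices.

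\emph{Step 3 (counting 2-connected spanning graphs).} On a vertex set $U$ with $|U|\le 2k$, define $f(S)$ as the weighted count of connected spanning graphs on $S$, with weight marked by cost via a polynomial variable $z$ of degree at most $kW$. A connected graph decomposes along its cut vertices into a block tree. Möbius inversion over this decomposition, i.e.\ the classical relation between connected and 2-connected (``blocks'') generating functions, yields a signed sum over rooted decompositions. In this sum every connected graph with more than one block cancels, leaving exactly the 2-connected ones. We evaluate it by dynamic programming over subsets (subset convolution), giving $3^{|U|}=3^{2k}$ terms up to polynomial factors. Tracking the boundary pairs at each vertex contributes another factor bounded by a constant per vertex. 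The cost variable is handled by computing polynomials of degree at most $kW$; we then read off the minimum degree with nonzero coefficient and confirm nonzero counts over the integers, since there is no modular cancellation and hence the procedure is deterministic. Together this gives $9^k\cdot 4^k=36^k$ times $W$ times a polynomial, matching \Cref{thm:main_runtime}. Iterating over the solution size $\le k$ adds only a polynomial factor.
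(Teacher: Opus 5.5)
Your outline has a genuine gap at each of its three steps. Step 1 is not an algorithm. The set $T$ of endpoints of an unknown solution $A$ cannot serve as the vertex set of the small graph without guessing it, which costs $n^{O(k)}$. The bound that is available in advance comes from leaf blocks. Unless $G$ is already $2$-vertex-connected, every leaf block or isolated vertex $B$ needs a link with an endpoint in $X_B$ (roughly, the block minus its cut vertex), so there are at most $2k$ of them. After replacing each block by a cycle, one can then cover $V(G)$ by a spanning path forest with at most $2k$ paths. The remaining edges of $G$ become cost-$0$ links that do not count toward the budget, which is why the second weight coordinate $\chi$ is needed (\Cref{lem:twovca_to_twovca_pf}).

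More seriously, the equivalence you posit in Step 2 is false. After contracting each path $P$ to $v_P$, $2$-vertex-connectivity of the augmented graph does not correspond to $2$-vertex-connectivity of the contracted link graph $H_A$. It corresponds to three conditions (\Cref{lem:vertex_cond_on_contracted_graph}):
\begin{itemize}
\item $H_A$ is bridgeless;
\item every internal vertex $u_{P,t}$ is \emph{covered} by some block of $H_A$ at $v_P$, i.e., two of that block's links land on opposite sides of $u_{P,t}$;
\item an endpoint condition holds at each path.
\end{itemize}
In general $H_A$ has several blocks at $v_P$ covering different intervals of $P$. So one boundary pair per vertex cannot certify feasibility globally, which is exactly the obstacle you flag and leave open. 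The paper resolves it with an outer subset DP over the block--cut structure of $H_A$. Its tables are $a_{X,v_P,i,j}$ (a unique block at $v_P$ covering $\{i,\dots,j\}$) and $b_{X,v_P,i,j}$ (several blocks, only $\{i,\dots,j\}$ possibly uncovered). \textsc{2VCSS-BP} appears only as the transition that builds one block on a guessed set $Z$; there the pair costs are $b$-entries and the attachment sets $S_v$ partition $X\setminus Z$.

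Step 3 then misses the idea that makes the boundary-pair problem single-exponential. The classical connected/biconnected relation counts plain graphs, but here each term also carries a pair $F_v\in\binom{\delta(v)}{2}$ at every vertex. This causes three problems:
\begin{itemize}
\item there are polynomially many, not constantly many, choices per vertex;
\item an edge chosen by both of its endpoints must be charged only once;
\item at a cut vertex the two edges of $F_v$ may lie in different blocks, so terms do not factor over the block decomposition.
\end{itemize}
The paper's key observation is that the pairs glue along shared edges into walks and cycles (pseudo-ears), each of which a Held--Karp DP can sum. These are combined with three further ingredients:
\begin{itemize}
\item a M\"obius sum over 1-sum trees, evaluated via $(u,F)$-open partial packings so that a pair may straddle child parts;
\item a partition-lattice inversion to impose connectivity;
\item an inclusion--exclusion over $U_0$ to force $\{S_v\}$ to be a partition.
\end{itemize}
This costs $O^*(4^{|Z|}2^{|U_0|}W)$ per guessed block for all subuniverses at once. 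Summing over $Z\subseteq V(H)$ gives $6^{p}$, the $b$-from-$a$ min-plus convolutions add $3^{p}$, and $p\le 2k$ yields $36^k$. Your count $3^{2k}\cdot 4^k$ is not attached to any concrete algorithm, so it does not establish the bound.
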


By the standard cost-preserving reduction from edge-connectivity augmentation to vertex-connectivity augmentation, the same bounds hold for 2-edge-connectivity augmentation.

The proof first transforms a \textsc{2VCA} instance into an equivalent instance of \textsc{Path Forest Augmentation} (\textsc{PFA}), which is the same augmentation problem where the input graph is restricted to a path forest with at most \(2k\) paths.
After contracting each path, we solve \textsc{PFA} by subset dynamic programming on the contracted graph.
The main transition of this dynamic programming is an auxiliary problem, \textsc{\(2\)-Vertex-Connected Spanning Subgraph with Boundary-Pair Costs} (\textsc{2VCSS-BP}), which asks for a \(2\)-vertex-connected spanning subgraph with additional local constraints.
The key step is a Cut\&Count-inspired~\citep{CyganNPPVW11} M\"obius-inversion filter over decompositions along cut vertices: in the resulting signed sum, every graph with a cut vertex cancels, while exactly the graphs consisting of one block survive.
\Cref{sec:technical-overview} gives the details.

Two questions remain.
Can the pseudo-polynomial dependence on \(W\) be replaced by dependence polynomial in the bit length of the costs?
Can the \(2^{O(k)}\) dependence be extended to \(\lambda=3\)?

\textbf{Organization.}
Section 2 gives a technical overview of the algorithm.
Section 3 gives the graph-theoretic notation and algebraic tools.
Section 4 reduces \textsc{2VCA} to \textsc{PFA}, Section 5 reduces \textsc{PFA} to \textsc{2VCSS-BP}, and Section 6 defines the polynomial that encodes feasible \textsc{2VCSS-BP} solutions.
Section 7 evaluates the resulting polynomial and proves \Cref{thm:main_runtime}.

\section{Technical Overview}\label{sec:technical-overview}

We outline the algorithm behind \Cref{thm:main_runtime}, focusing on where the single-exponential dependence on \(k\) comes from.
For simplicity, this overview treats the unweighted case, where every link has unit cost and the task is simply to minimize the number of added links.
The weighted algorithm follows the same dynamic programming structure, with additional care for the reduction and cost bookkeeping.

\subsection{Reduction to Path Forests}

\paragraph*{From graphs to path forests.}
By modifying the graph appropriately and treating some original edges as zero-cost links, we reduce to an equivalent \textsc{PFA} instance, which is the same augmentation problem where the input graph is restricted to a path forest with at most \(2k\) paths.
This is proved in \Cref{lem:twovca_to_twovca_pf}; see \Cref{fig:twovca-to-twovca-pf-reduction} for an illustration.
We write \(G^{\mathrm{pf}}\) for this path forest and \(\mathcal P\) for its set of paths.

\paragraph*{Contracting paths.}
Contract each \(P\in\mathcal P\) to a vertex \(v_P\), and map every link to the corresponding edge between contracted vertices.
Let \(H\) be the resulting multigraph.
Since \(|V(H)|=|\mathcal P|\leq 2k\), an algorithm with running time \(O^*(2^{O(|V(H)|)})\) is single-exponential in \(k\).
Contraction does not, however, preserve all information relevant to vertex connectivity.
The two graphs in \Cref{fig:overview-contraction-cover}(a) and \Cref{fig:overview-contraction-cover}(b) contract to the same graph, although only the first is 2-vertex-connected.

For a candidate edge set \(A\subseteq E(H)\), write \(H_A=(V(H),A)\).
Let \(P=(u_1,\ldots,u_\ell)\), and let \(B\) be a block of \(H_A\), that is, a maximal connected subgraph with no cut vertex, containing \(v_P\).
We say that \(B\) \emph{covers} index \(t\) at \(P\) if two incidences of \(B\) at \(v_P\) come from links whose endpoints on \(P\) lie on opposite sides of \(u_t\).
Equivalently, the two links meet \(P\) at \(u_a\) and \(u_b\) for some \(a<t<b\).
In \Cref{fig:overview-contraction-cover}(a), the block containing \(v_{Q_1}\) covers \(2\) and \(3\) at \(P\), while the block containing \(v_{Q_2}\) covers \(3\) and \(4\) at \(P\).
In \Cref{fig:overview-contraction-cover}(b), no block covers index \(3\), and \(u_3\) is a cut vertex.
More generally, an uncovered internal vertex of a path becomes a cut vertex in \(G^{\mathrm{pf}}+A\); hence every internal index of every path must be covered by some block of \(H_A\).
Together with bridgelessness of \(H_A\) and a small endpoint condition at each contracted path, this gives the condition for the uncontracted graph \(G^{\mathrm{pf}}+A\) to be \(2\)-vertex-connected; see \Cref{lem:vertex_cond_on_contracted_graph}.

\begin{figure}[t]
\centering
\begin{tikzpicture}[
  x=0.82cm,y=0.82cm,
  vertex/.style={circle,fill=black,inner sep=0pt,minimum size=1.8mm},
  hvertex/.style={circle,draw=black,fill=white,inner sep=1pt,minimum size=7mm,font=\small},
  path edge/.style={line width=0.95pt,red!70!black},
  blue link/.style={line width=1.05pt,black},
  orange link/.style={line width=1.05pt,black},
  label/.style={font=\scriptsize,inner sep=1pt},
  path label/.style={font=\scriptsize,inner sep=1pt,text=red!70!black},
]
\begin{scope}[shift={(-6.9,0)}]
  \foreach \i/\x in {1/0,2/0.72,3/1.44,4/2.16,5/2.88}
    \coordinate (p\i) at (\x,0);
  \coordinate (q1) at (0,1.15);
  \coordinate (q2) at (0.72,1.15);
  \coordinate (q3) at (1.44,1.15);
  \coordinate (r1) at (1.44,-1.15);
  \coordinate (r2) at (2.16,-1.15);
  \coordinate (r3) at (2.88,-1.15);
  \draw[path edge] (p1)--(p2)--(p3)--(p4)--(p5);
  \draw[path edge] (q1)--(q2)--(q3);
  \draw[path edge] (r1)--(r2)--(r3);
  \draw[blue link] (p1)--(q1) (p4)--(q3);
  \draw[orange link] (p2)--(r1) (p5)--(r3);
  \foreach \i in {1,...,5} {
    \node[vertex] at (p\i) {};
    \node[label] at ($(p\i)+(0,-0.28)$) {\(u_{\i}\)};
  }
  \foreach \x in {q1,q2,q3,r1,r2,r3}
    \node[vertex] at (\x) {};
  \node[path label] at (1.08,1.56) {$Q_1$};
  \node[path label] at (1.45,0.30) {$P$};
  \node[path label] at (2.16,-1.56) {$Q_2$};  \node[label] at (1.44,-2.20) {(a)};
\end{scope}

\begin{scope}[shift={(-1.25,0)}]
  \foreach \i/\x in {1/0,2/0.72,3/1.44,4/2.16,5/2.88}
    \coordinate (p\i) at (\x,0);
  \coordinate (q1) at (0,1.15);
  \coordinate (q2) at (0.72,1.15);
  \coordinate (q3) at (1.44,1.15);
  \coordinate (r1) at (1.44,-1.15);
  \coordinate (r2) at (2.16,-1.15);
  \coordinate (r3) at (2.88,-1.15);
  \draw[path edge] (p1)--(p2)--(p3)--(p4)--(p5);
  \draw[path edge] (q1)--(q2)--(q3);
  \draw[path edge] (r1)--(r2)--(r3);
  \draw[blue link] (p1)--(q1) (p3)--(q3);
  \draw[orange link] (p3)--(r1) (p5)--(r3);
  \foreach \i in {1,...,5} {
    \node[vertex] at (p\i) {};
    \node[label] at ($(p\i)+(0,-0.28)$) {\(u_{\i}\)};
  }
  \foreach \x in {q1,q2,q3,r1,r2,r3}
    \node[vertex] at (\x) {};
  \node[path label] at (0.72,1.56) {$Q_1$};
  \node[path label] at (1.80,0.30) {$P$};
  \node[path label] at (2.16,-1.56) {$Q_2$};
  \node[label] at (1.44,-2.20) {(b)};
\end{scope}

\begin{scope}[shift={(5.15,0)}]
  \node[hvertex] (vp) at (0,0) {$v_P$};
  \node[hvertex] (vq1) at (-0.95,1.35) {$v_{Q_1}$};
  \node[hvertex] (vq2) at (0.95,-1.35) {$v_{Q_2}$};
  \draw[blue link] (vp) to[bend left=22] (vq1);
  \draw[blue link] (vp) to[bend right=12] (vq1);
  \draw[orange link] (vp) to[bend right=22] (vq2);
  \draw[orange link] (vp) to[bend left=12] (vq2);
  \node[label] at (0,-2.20) {(c)};
\end{scope}
\end{tikzpicture}
\caption{Contraction can forget \(2\)-vertex-connectivity. Both (a) and (b) contract to the same graph (c), but (a) is \(2\)-vertex-connected whereas, in (b), the third vertex of \(P\) is a cut vertex.}
\label{fig:overview-contraction-cover}
\end{figure}

\subsection{Dynamic programming on the contracted graph}\label{sec:overview-contracted-dp}

We now seek a minimum-cost edge set \(A\subseteq E(H)\) satisfying the characterization from \Cref{lem:vertex_cond_on_contracted_graph}.
The dynamic programming is indexed by pairs \((X,v_P)\) with \(X\subseteq V(H)\) and \(v_P\in X\), and follows the block-cut tree of the unknown graph \(H_A\), where the block-cut tree is a tree representing the containment relation between vertices and blocks; see \Cref{fig:overview-block-cut-tree}.

\begin{figure}[t]
\centering
\def\overviewblockcutx{1.05}
\def\overviewblockcutcoordinates{%
  \coordinate (p1)  at ({-sqrt(3)/2*\overviewblockcutx},{ 0.5*\overviewblockcutx});
  \coordinate (p2)  at ({-sqrt(3)/2*\overviewblockcutx},{-0.5*\overviewblockcutx});
  \coordinate (p3)  at (0,0);
  \coordinate (p4)  at (0,{-\overviewblockcutx});
  \coordinate (p5)  at ({\overviewblockcutx},0);
  \coordinate (p6)  at ({\overviewblockcutx+\overviewblockcutx/sqrt(2)},{ \overviewblockcutx/sqrt(2)});
  \coordinate (p7)  at ({\overviewblockcutx+\overviewblockcutx/sqrt(2)},{-\overviewblockcutx/sqrt(2)});
  \coordinate (p8)  at ({\overviewblockcutx+sqrt(2)*\overviewblockcutx},0);
  \coordinate (p9)  at ({\overviewblockcutx+sqrt(2)*\overviewblockcutx},{-\overviewblockcutx});
  \coordinate (p10) at ({\overviewblockcutx+sqrt(2)*\overviewblockcutx+\overviewblockcutx/sqrt(2)},{ \overviewblockcutx/sqrt(2)});
  \coordinate (p11) at ({\overviewblockcutx+sqrt(2)*\overviewblockcutx+\overviewblockcutx/sqrt(2)},{-\overviewblockcutx/sqrt(2)});
  \coordinate (p12) at ({\overviewblockcutx+2*sqrt(2)*\overviewblockcutx},0);
  \coordinate (p13) at ({\overviewblockcutx+2*sqrt(2)*\overviewblockcutx+cos(80)*\overviewblockcutx},{ sin(80)*\overviewblockcutx});
  \coordinate (p14) at ({\overviewblockcutx+2*sqrt(2)*\overviewblockcutx+cos(20)*\overviewblockcutx},{ sin(20)*\overviewblockcutx});
  \coordinate (p15) at ({\overviewblockcutx+2*sqrt(2)*\overviewblockcutx+cos(20)*\overviewblockcutx},{-sin(20)*\overviewblockcutx});
  \coordinate (p16) at ({\overviewblockcutx+2*sqrt(2)*\overviewblockcutx+cos(80)*\overviewblockcutx},{-sin(80)*\overviewblockcutx});
  \coordinate (p17) at ({\overviewblockcutx/sqrt(2)},{-\overviewblockcutx/sqrt(2)});
}
\begin{subfigure}[t]{0.48\linewidth}
\centering
\begin{tikzpicture}[
  baseline=(current bounding box.center),
  x=1cm,y=1cm,
  vertex/.style={circle,fill=black,inner sep=0pt,minimum size=1.8mm},
  edge/.style={line width=0.55pt},
  every label/.style={font=\scriptsize,inner sep=1pt}
]
\overviewblockcutcoordinates

\draw[edge]
  (p1) -- (p2)
  (p1) -- (p3)
  (p2) -- (p3)
  (p3) -- (p4)
  (p3) -- (p5)
  (p5) -- (p6)
  (p6) -- (p8)
  (p8) -- (p7)
  (p7) -- (p5)
  (p7) -- (p17)
  (p8) -- (p9)
  (p8) -- (p10)
  (p8) -- (p11)
  (p10) -- (p11)
  (p10) -- (p12)
  (p11) -- (p12)
  (p12) -- (p13)
  (p13) -- (p14)
  (p14) -- (p12)
  (p12) -- (p15)
  (p15) -- (p16)
  (p16) -- (p12);

\foreach \i/\lab/\pos in {
  1/$a$/above left,
  2/$b$/below left,
  3/$c$/above,
  4/$d$/below,
  5/$e$/above,
  17/$f$/below left,
  6/$g$/above,
  7/$h$/below,
  8/$i$/above,
  9/$j$/below,
  10/$k$/above,
  11/$\ell$/below,
  12/$m$/right,
  13/$n$/above,
  14/$o$/right,
  15/$p$/right,
  16/$q$/below}
  \node[vertex,label=\pos:{\lab}] at (p\i) {};
\end{tikzpicture}
\caption{}
\label{fig:overview-block-cut-tree-graph}
\end{subfigure}\hfill
\begin{subfigure}[t]{0.48\linewidth}
\centering
\begin{tikzpicture}[
  baseline=(current bounding box.center),
  x=1cm,y=1cm,
  vertex/.style={circle,fill=black,inner sep=0pt,minimum size=1.8mm},
  bag/.style={circle,draw=black,fill=white,line width=0.55pt,inner sep=0pt,minimum size=7mm,text width=6.5mm,font=\scriptsize,align=center},
  edge/.style={line width=0.55pt},
  every label/.style={font=\scriptsize,inner sep=1pt}
]
\overviewblockcutcoordinates

\node[bag] (b123)      at ({-sqrt(3)/3*\overviewblockcutx},0) {$abc$};
\node[bag] (b34)       at (0,{-0.5*\overviewblockcutx}) {$cd$};
\node[bag] (b35)       at ({0.5*\overviewblockcutx},0) {$ce$};
\node[bag] (bfh)       at ($(p17)!0.5!(p7)$) {$fh$};
\node[bag] (b5678)     at ({\overviewblockcutx+\overviewblockcutx/sqrt(2)},0) {$eghi$};
\node[bag] (b89)       at ({\overviewblockcutx+sqrt(2)*\overviewblockcutx},{-0.5*\overviewblockcutx}) {$ij$};
\node[bag] (b8101112)  at ({\overviewblockcutx+sqrt(2)*\overviewblockcutx+\overviewblockcutx/sqrt(2)},0) {$ik\ell m$};
\node[bag] (b121314)   at ({\overviewblockcutx+2*sqrt(2)*\overviewblockcutx+(cos(80)+cos(20))*\overviewblockcutx/3},{(sin(80)+sin(20))*\overviewblockcutx/3}) {$mno$};
\node[bag] (b121516)   at ({\overviewblockcutx+2*sqrt(2)*\overviewblockcutx+(cos(80)+cos(20))*\overviewblockcutx/3},{-(sin(80)+sin(20))*\overviewblockcutx/3}) {$mpq$};

\draw[edge]
  (b123) -- (p1)
  (b123) -- (p2)
  (b123) -- (p3)
  (b34) -- (p3)
  (b34) -- (p4)
  (b35) -- (p3)
  (b35) -- (p5)
  (bfh) -- (p17)
  (bfh) -- (p7)
  (b5678) -- (p5)
  (b5678) -- (p6)
  (b5678) -- (p7)
  (b5678) -- (p8)
  (b89) -- (p8)
  (b89) -- (p9)
  (b8101112) -- (p8)
  (b8101112) -- (p10)
  (b8101112) -- (p11)
  (b8101112) -- (p12)
  (b121314) -- (p12)
  (b121314) -- (p13)
  (b121314) -- (p14)
  (b121516) -- (p12)
  (b121516) -- (p15)
  (b121516) -- (p16);

\foreach \i/\lab/\pos in {
  1/$a$/above left,
  2/$b$/below left,
  3/$c$/above,
  4/$d$/below,
  5/$e$/above,
  17/$f$/below left,
  6/$g$/above,
  7/$h$/below,
  8/$i$/above,
  9/$j$/below,
  10/$k$/above,
  11/$\ell$/below,
  12/$m$/right,
  13/$n$/above,
  14/$o$/right,
  15/$p$/right,
  16/$q$/below}
  \node[vertex,label=\pos:{\lab}] at (p\i) {};
\end{tikzpicture}
\caption{}
\label{fig:overview-block-cut-tree-tree}
\end{subfigure}
\caption{A graph and its block-cut tree. Black vertices are original vertices, while white vertices are block vertices.}
\label{fig:overview-block-cut-tree}
\end{figure}

Fix a path \(P\) and a set \(X\subseteq V(H)\) with \(v_P\in X\).
For this state, we choose an edge set \(A_X\subseteq E(H[X])\) that represents the part of the solution living inside \(X\).
The graph \((X,A_X)\) may meet the rest of the solution only through \(v_P\).
Therefore the only unfinished information is which indices of \(P\) still have to be covered from outside \(X\).

We use two tables, denoted later by \(a_{X,v_P,i,j}\) and \(b_{X,v_P,i,j}\).
An \(a\)-entry fixes a unique block containing \(v_P\), and the interval \(\{i,\ldots,j\}\) is covered by this block.
A \(b\)-entry allows several blocks containing \(v_P\), and every internal index outside \(\{i,\ldots,j\}\) is already covered.
Thus, only indices in \(\{i,\ldots,j\}\) may still remain uncovered, and their covering is left to the remaining part of the solution.

The transition from \(a\) to \(b\) is straightforward: several \(a\)-pieces are glued at \(v_P\), and the interval of uncovered indices is updated.
This is formalized in \Cref{lem:vertex_b_from_a}.
The converse transition, from \(b\) to \(a\), is the harder part.
Assume that all relevant \(b\)-values on proper subsets of \(X\) are known.
To compute \(a_{X,v_P,i,j}\), we guess the vertex set \(Z\subseteq X\) of the unique block containing \(v_P\), attach already computed \(b\)-subproblems below vertices of \(Z\setminus\{v_P\}\), and optimize over the partition of \(X\setminus Z\).
At the root block, for each \(v_Q\in Z\), we choose the \emph{boundary pair} consisting of the two incident edges in the root block that witness the covering condition on \(Q\).
Informally, a boundary pair \(F\) at \(v_Q\in Z\) specifies where the root block enters and leaves the path \(Q\).
If the corresponding links touch \(Q\) at \(u_{Q,\ell}\) and \(u_{Q,r}\), with \(\ell<r\), then the root block covers \(\{\ell+1,\ldots,r-1\}\) on \(Q\).
At \(v_P\), this interval must contain \(\{i,\ldots,j\}\); for \(Q\neq P\), it specifies the attached \(b\)-entry, and the value of that entry is the cost of choosing the boundary pair.
Extracting this core part gives the following simplified auxiliary problem.\footnote{A closely related \(2\)-vertex-connected spanning subgraph subproblem also appears if the same dynamic programming strategy is applied to \(2\)-edge-connectivity augmentation: the contracted graph is then governed by a different characterization, but realizing a guessed block still requires enforcing \emph{vertex connectivity} inside the block.}

\boxproblemnoparam{$2$-Vertex-Connected Spanning Subgraph with Boundary-Pair Costs (overview form)}{
    A loopless multigraph \(G\), edge costs \(c_e\) for \(e\in E(G)\), and boundary-pair costs \(s_{v,F}\) for \(v\in V(G)\) and \(F\in \binom{\delta_G(v)}{2}\).
}{
    Select \(F_v\in \binom{\delta_G(v)}{2}\) for every \(v\in V(G)\), so that \((V(G),\bigcup_{v\in V(G)}F_v)\) is \(2\)-vertex-connected, while minimizing \(\sum_{e\in \bigcup_{v\in V(G)}F_v} c_e+\sum_{v\in V(G)}s_{v,F_v}\).
}

The formal version is slightly richer.
In the simplified problem, the selected edges are only those in \(\bigcup_{v\in V(G)}F_v\), while the actual root block may also use edges not selected by any boundary pair; the formal problem therefore adds an ordinary edge set \(A\) and charges the corresponding edge weights.
The formal problem also lets each vertex of \(Z\) choose an attachment set, so that the parts hanging below \(Z\) use disjoint subsets of \(X\setminus Z\); see \Cref{lem:vertex_a_from_b}.

\subsection{Solving the Boundary-Pair Subproblem}\label{sec:overview-boundary-dp}

The root-block transition requires a single-exponential algorithm for \textsc{2VCSS-BP}.
Without boundary-pair costs, adapting the ear-decomposition subset dynamic programming of Ameli et al.~\citep{JabalAmeliKNW26} gives an \(O^*(3^{|V(G)|})\)-time algorithm for ordinary \textsc{2VCSS}.
In that dynamic programming, a state is indexed only by a vertex subset \(X\subseteq V(G)\), and stores the best way to build a \(2\)-vertex-connected subgraph on \(X\).
The transition adds one ear whose endpoints are already present and whose internal vertices are new.
Unfortunately, the same idea does not directly extend to \textsc{2VCSS-BP}.
In an ear-based subset dynamic programming, a subproblem on \(X\subseteq V(G)\) would have to remember the boundary-pair information at vertices of \(X\) that may later become endpoints of new ears.
Storing all such boundary-pair data explicitly, however, would destroy the desired single-exponential state space.

\subsubsection{Algebraic Cancellation}\label{sec:overview-boundary-cancellation}

The auxiliary problem still contains a global \(2\)-vertex-connectivity condition.
Our goal is to replace this condition by a signed algebraic sum in which every candidate that is not \(2\)-vertex-connected cancels.
Recall Cut\&Count~\citep{CyganNPPVW11}.
It tests connectivity by counting consistent cuts modulo \(2\): with a fixed root, a graph with \(c\) connected components has \(2^{c-1}\) consistent cuts, so the parity is nonzero exactly when \(c=1\).
Over the integers, the same role is played by the M\"obius coefficients of the partition lattice.
For a graph \(K\), we have the identity
\[
    \sum_{\pi} (-1)^{|\pi|-1}(|\pi|-1)!
    =
    \begin{cases}
        1 & \text{if }K\text{ is connected},\\
        0 & \text{otherwise}.
    \end{cases}
\]
The sum is over the partitions of \(V(K)\) for which every edge has both endpoints in one part.
Thus partition-lattice inversion cancels all disconnected choices and keeps exactly the connected ones.

For \(2\)-vertex-connectivity, once connectedness has been enforced, the remaining obstruction is the presence of more than one block.
The object replacing a partition of connected components is a tree decomposition whose bags are glued by \(1\)-sums at original vertices.
More explicitly, for a connected graph \(K\), let \(\mathcal B_K(v)\) be the set of blocks containing \(v\).
A \emph{1-sum tree} \(T\) on \(V(K)\) is a bipartite tree whose nodes are the original vertices in \(V(K)\) and a family of bag nodes.
Each bag node has degree at least two and is identified with its neighborhood in \(V(K)\), called a bag.
We say that \(T\) is consistent with \(K\) if each bag is the vertex union of a nonempty collection of blocks whose union is connected in \(K\), and these collections partition the set of blocks of \(K\).
For a 1-sum tree \(T\), let \(d_T(v)\) be the number of bags containing \(v\), equivalently the degree of \(v\) in this incidence tree.
\Cref{fig:overview-one-sum-tree-examples} shows one example.

\begin{figure}[t]
\centering
\def\overviewonesumx{1.25}
\def\overviewonesumcoordinates{%
  \coordinate (p1)  at ({-sqrt(3)/2*\overviewonesumx},{ 0.5*\overviewonesumx});
  \coordinate (p2)  at ({-sqrt(3)/2*\overviewonesumx},{-0.5*\overviewonesumx});
  \coordinate (p3)  at (0,0);
  \coordinate (p4)  at (0,{-\overviewonesumx});
  \coordinate (p5)  at ({\overviewonesumx},0);
  \coordinate (p6)  at ({\overviewonesumx+\overviewonesumx/sqrt(2)},{ \overviewonesumx/sqrt(2)});
  \coordinate (p7)  at ({\overviewonesumx+\overviewonesumx/sqrt(2)},{-\overviewonesumx/sqrt(2)});
  \coordinate (p8)  at ({\overviewonesumx+sqrt(2)*\overviewonesumx},0);
  \coordinate (p9)  at ({\overviewonesumx+sqrt(2)*\overviewonesumx},{-\overviewonesumx});
  \coordinate (p10) at ({\overviewonesumx+sqrt(2)*\overviewonesumx+\overviewonesumx/sqrt(2)},{ \overviewonesumx/sqrt(2)});
  \coordinate (p11) at ({\overviewonesumx+sqrt(2)*\overviewonesumx+\overviewonesumx/sqrt(2)},{-\overviewonesumx/sqrt(2)});
  \coordinate (p12) at ({\overviewonesumx+2*sqrt(2)*\overviewonesumx},0);
  \coordinate (p13) at ({\overviewonesumx+2*sqrt(2)*\overviewonesumx+cos(80)*\overviewonesumx},{ sin(80)*\overviewonesumx});
  \coordinate (p14) at ({\overviewonesumx+2*sqrt(2)*\overviewonesumx+cos(20)*\overviewonesumx},{ sin(20)*\overviewonesumx});
  \coordinate (p15) at ({\overviewonesumx+2*sqrt(2)*\overviewonesumx+cos(20)*\overviewonesumx},{-sin(20)*\overviewonesumx});
  \coordinate (p16) at ({\overviewonesumx+2*sqrt(2)*\overviewonesumx+cos(80)*\overviewonesumx},{-sin(80)*\overviewonesumx});
  \coordinate (p17) at ({\overviewonesumx/sqrt(2)},{-\overviewonesumx/sqrt(2)});
}
\def\overviewonesumverticesleft{%
  \foreach \i/\lab/\pos in {
    1/$a$/above left,
    2/$b$/below left,
    3/$c$/above,
    4/$d$/below,
    5/$e$/above,
    17/$f$/below left,
    6/$g$/above,
    7/$h$/below,
    8/$i$/above,
    9/$j$/below,
    10/$k$/above,
    11/$\ell$/below,
    12/$m$/right,
    13/$n$/above,
    14/$o$/right,
    15/$p$/right,
    16/$q$/below}
    \node[vertex,label=\pos:{\lab}] at (p\i) {};
}
\begin{tikzpicture}[
  x=0.95cm,y=0.95cm,
  vertex/.style={circle,fill=black,inner sep=0pt,minimum size=2mm},
  bag/.style={circle,draw=black,fill=white,line width=0.55pt,inner sep=0pt,minimum size=8mm,text width=7mm,font=\scriptsize,align=center},
  edge/.style={line width=0.55pt},
  every label/.style={font=\scriptsize,inner sep=1pt}
]
\overviewonesumcoordinates
\coordinate (A) at ($(p3)!0.5!(p4)$);
\coordinate (B) at ($(p5)!0.5!(p8)$);
\coordinate (C) at ($(p8)!0.5!(p9)$);
\coordinate (D) at ($(p8)!0.5!(p12)$);
\coordinate (E) at ({\overviewonesumx+2*sqrt(2)*\overviewonesumx+(cos(80)+cos(20))*\overviewonesumx/3},{(sin(80)+sin(20))*\overviewonesumx/3});
\coordinate (F) at ({\overviewonesumx+2*sqrt(2)*\overviewonesumx+(cos(80)+cos(20))*\overviewonesumx/3},{-(sin(80)+sin(20))*\overviewonesumx/3});

\draw[edge]
  (A)--(p1) (A)--(p2) (A)--(p3) (A)--(p4) (A)--(p5)
  (B)--(p5) (B)--(p17) (B)--(p6) (B)--(p7) (B)--(p8)
  (C)--(p8) (C)--(p9)
  (D)--(p8) (D)--(p10) (D)--(p11) (D)--(p12)
  (E)--(p12) (E)--(p13) (E)--(p14)
  (F)--(p12) (F)--(p15) (F)--(p16);

\node[bag] at (A) {$abc$\\[-2pt]$de$};
\node[bag] at (B) {$efg$\\[-2pt]$hi$};
\node[bag] at (C) {$ij$};
\node[bag] at (D) {$ik\ell m$};
\node[bag] at (E) {$mno$};
\node[bag] at (F) {$mpq$};
\overviewonesumverticesleft
\end{tikzpicture}
\caption{A 1-sum tree consistent with the graph in \Cref{fig:overview-block-cut-tree}(a). Black nodes are original vertices, while white nodes are bags; the labels inside white nodes are the corresponding bags.}
\label{fig:overview-one-sum-tree-examples}
\end{figure}

\begin{lemma}\label{lem:overview-bag-tree-mobius}
Let \(K\) be a connected graph with no self-loops, and let \(\mathcal T(K)\) be the family of 1-sum trees consistent with \(K\).
Then,
\[
    \sum_{T\in\mathcal T(K)}
    \prod_{v\in V(K)} (-1)^{d_T(v)-1}(d_T(v)-1)!
    =
    \begin{cases}
        1, & \text{if }K\text{ is }2\text{-vertex-connected},\\
        0, & \text{otherwise.}
    \end{cases}
\]
\end{lemma}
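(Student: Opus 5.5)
The plan is to reduce the sum over 1-sum trees to a product, over the cut vertices of \(K\), of independent sums over set partitions, and then to invoke the partition-lattice identity recalled just before the definition of 1-sum trees. Throughout, let \(\mathcal B\) be the set of blocks of \(K\) and let \(\mathcal C\) be its set of cut vertices. Since \(K\) is connected and loopless, every block has at least two vertices. The degenerate one-vertex case has to be fixed by the paper's convention (a single trivial bag); I check it separately.

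\textbf{Step 1: 1-sum trees are partitions of \(\mathcal B\) into subtrees.} A consistent \(T\) determines a partition \(\Pi\) of \(\mathcal B\) in which each class has connected union. A family of blocks has connected union in \(K\) exactly when it induces, together with the cut vertices between its members, a subtree of the block-cut tree. Conversely, such a \(\Pi\) determines \(T\) uniquely:
\begin{itemize}
\item the bags are the vertex unions of the classes;
\item the incidences are vertex--bag containment;
\item the incidence graph is a tree, because it is obtained from the block-cut tree by contracting the subtrees given by the classes;
\item every bag has degree at least two, because every block has at least two vertices.
\end{itemize}
So \(\mathcal T(K)\) is in bijection with partitions of \(\mathcal B\) into block-cut-subtree classes. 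Under this bijection \(d_T(v)\) is the number of classes meeting \(\mathcal B_K(v)\). For a non-cut vertex this is \(1\), so its factor \((-1)^0 0!\) equals \(1\).

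\textbf{Step 2: localize at cut vertices.} I will show that these subtree partitions \(\Pi\) are in bijection with tuples \((\pi_c)_{c\in\mathcal C}\), where each \(\pi_c\) is an arbitrary partition of \(\mathcal B_K(c)\).
\begin{itemize}
\item The forward map restricts \(\Pi\) to \(\mathcal B_K(c)\).
\item The backward map takes the transitive closure of all \(\pi_c\). Each resulting class is connected in the block-cut tree, since merged blocks share a cut vertex.
\item Restricting the closure back to \(\mathcal B_K(c)\) returns \(\pi_c\). Indeed, two blocks at \(c\) that end up in one class are joined by a chain of direct merges. By uniqueness of paths in the block-cut tree, this chain must pass through \(B\!-\!c\!-\!B'\), so the two blocks were already merged in \(\pi_c\).
\item Conversely, the closure of the restrictions of a subtree partition \(\Pi\) recovers \(\Pi\). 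Within a subtree class, consecutive blocks along any tree path share a cut vertex, so they are merged by the corresponding restriction.
\end{itemize}
Under this bijection \(d_T(c)=|\pi_c|\).

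\textbf{Step 3: factor and evaluate.} By Steps 1 and 2, the sum equals
\[
\prod_{c\in\mathcal C}\ \sum_{\pi\text{ partition of }\mathcal B_K(c)} (-1)^{|\pi|-1}(|\pi|-1)!.
\]
The inner sum is the displayed partition-lattice identity applied to the edgeless graph on \(|\mathcal B_K(c)|\) vertices. It is therefore \(1\) if \(|\mathcal B_K(c)|=1\) and \(0\) otherwise.

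Every cut vertex lies in at least two blocks, so the product is \(0\) whenever \(\mathcal C\neq\emptyset\), that is, whenever \(K\) is not \(2\)-vertex-connected. If \(\mathcal C=\emptyset\), the product is empty and equals \(1\); this corresponds to the single tree with one bag \(V(K)\). The convention on small graphs (a single vertex, or a single edge/bridge as a block) must match the paper's definition of \(2\)-vertex-connected, and I would verify that it does.

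The main obstacle is Step 2, specifically the claim that the local partitions can be chosen independently and that closure followed by restriction is the identity. This is the step that relies on the acyclicity of the block-cut tree, and it must be argued carefully.
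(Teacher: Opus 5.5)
Your proposal is correct and is essentially the paper's proof. The paper also puts consistent 1-sum trees in bijection with tuples \((\pi_v)_v\) of partitions of \(\mathcal B_K(v)\), obtained by identifying, in the block-cut tree, the block nodes in each part of \(\pi_v\) and deleting the resulting parallel incidences. This gives \(d_T(v)=|\pi_v|\), so the sum factors into vertex-wise partition-lattice M\"obius sums. Your detour through partitions of the block set into connected classes only makes explicit the well-definedness and injectivity that the paper leaves implicit.

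Three phrasings need tightening:
\begin{enumerate}
\item In Step~1, the tree arises by identifying the block nodes of each class and deleting parallel incidences. Literally contracting the subtree would also absorb the shared cut vertices, which must remain vertex nodes.
\item In Step~2, take a shortest merge chain. Consecutive merges at one cut vertex combine by transitivity inside \(\pi_c\), so the chain's walk is non-backtracking and hence is exactly the path \(B\!-\!c\!-\!B'\).
\item For \(|V(K)|=1\) there is no ``single trivial bag'', since bags need degree at least two. That degenerate case, where \(d_T(v)=0\), is tacitly excluded by the statement.
\end{enumerate}
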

\begin{proof}[Proof sketch]
We construct a bijection between 1-sum trees consistent with \(K\) and the following data on the block-cut tree of \(K\): for each original vertex \(v\), a partition \(\pi_v\) of \(\mathcal B_K(v)\), the set of blocks containing \(v\).
Start with the block-cut tree of \(K\).
For every part of \(\pi_v\), identify the corresponding block nodes adjacent to \(v\), and delete the parallel incidences created at \(v\).
Doing this for all vertices gives a 1-sum tree consistent with \(K\).
Conversely, a 1-sum tree consistent with \(K\) recovers \(\pi_v\) by grouping the blocks of \(\mathcal B_K(v)\) that lie in the same bag incident with \(v\).

Thus the sum over 1-sum trees consistent with \(K\) is the product, over \(v\in V(K)\), of the partition-lattice M\"obius sums on \(\mathcal B_K(v)\).
Indeed, \(d_T(v)=|\pi_v|\), so the factor at \(v\) is exactly \((-1)^{|\pi_v|-1}(|\pi_v|-1)!\).
This vertex-wise M\"obius sum is \(1\) when \(|\mathcal B_K(v)|=1\), and \(0\) otherwise.
Thus the product is nonzero exactly when every vertex of \(K\) belongs to one block.
\end{proof}

Let \(Q(\zeta)\) be the objective polynomial for the overview form:
\[
    Q(\zeta):=
    \sum_{\substack{\mathbf F=(F_v)_{v\in V(G)}\\
        (V(G),\bigcup_{v\in V(G)}F_v)\text{ is }2\text{-vertex-connected}}}
    \zeta^{\operatorname{cost}(\mathbf F)},
\]
where \(\operatorname{cost}(\mathbf F):=\sum_{e\in\bigcup_vF_v}c_e+\sum_v s_{v,F_v}\).
The least exponent with a nonzero coefficient is the optimum cost, and interpolation recovers the coefficients from evaluations of \(Q\).

After connectedness has been imposed, \Cref{lem:overview-bag-tree-mobius} lets us evaluate \(Q(\zeta)\) by summing over pairs consisting of a boundary-pair choice \(\mathbf F\) and a 1-sum tree \(T\) consistent with \(K_{\mathbf F}:=(V(G),\bigcup_{v\in V(G)}F_v)\).
Such a pair contributes the monomial \(\zeta^{\operatorname{cost}(\mathbf F)}\) with coefficient \(\prod_{v\in V(G)}(-1)^{d_T(v)-1}(d_T(v)-1)!\).
The lemma cancels all boundary-pair choices whose graph \(K_{\mathbf F}\) is not \(2\)-vertex-connected.

\subsubsection{Dynamic Programming}

There may be \(|V(G)|^{O(|V(G)|)}\) 1-sum trees, so they cannot be enumerated within the desired running time.
Instead, as in \Cref{sec:overview-contracted-dp}, we apply dynamic programming indexed by subsets of \(V(G)\), following the 1-sum tree of the unknown graph selected by the solution.
Fix a vertex \(u\) and a set \(X\subseteq V(G)\) with \(u\in X\).
The important transition guesses the top bag \(Z\subseteq X\) containing \(u\).
Thus the remaining task is to evaluate the contribution of one fixed top bag.

For a fixed top bag \(Z\), we choose a pair \(F_v\) of incident edges at each \(v\in Z\).
The difficulty is that these choices are not independent: the same edge may be chosen at both endpoints, but its edge cost \(c_e\) must be charged only once.

The decisive observation is that the dependencies have a very restricted shape.
Since every \(F_v\) has size two, each vertex contributes exactly two selected incidences.
Start with one length-two walk for each \(v\in Z\), using the two edges of \(F_v\) as the two end edges incident with \(v\).
The only interaction between two local choices occurs when an edge \(e=vw\) lies in both \(F_v\) and \(F_w\); then we glue the two current walks along the two occurrences of \(e\), and if both occurrences already lie in the same walk, this operation closes a cycle.
After all such gluing operations, every inner vertex has exactly two selected incidences, so the resulting objects are only walks and cycles.
No branching object can appear.
We call these walks and cycles pseudo-ears; the formal definition and an example appear in \Cref{sec:2VCSSBP}.\footnote{Pseudo-ears are used only for this grouping argument; they are unrelated to ears in an ear decomposition.}

For each \(I\subseteq Z\), the contribution of one pseudo-ear with inner-vertex set \(I\) can be computed by a Held--Karp-style dynamic programming.
The table records the set of already used inner vertices and the current end of the partial walk.
Cycle pseudo-ears are handled by the analogous computation after fixing one canonical starting point and direction.

For each partition of \(Z\), we multiply the single-pseudo-ear contributions indexed by its parts, and then sum these products over all partitions of \(Z\).
This partition sum is computed by standard subset dynamic programming in single-exponential time.
The partition-lattice M\"obius inversion explained at the beginning of \Cref{sec:overview-boundary-cancellation} then imposes ordinary connectedness on \(Z\), again in single-exponential time.

\subsection{Runtime Analysis}

Let \(p\) be the number of paths in the \textsc{PFA} instance, so \(|V(H)|=p\).
We can observe that all computations described above can be done in \(O^*(2^{O(p)})\) time.
Since the reduction gives \(p\leq 2k\), this already yields a single-exponential algorithm for \textsc{2VCA}.

The formal algorithm uses several speed-ups based on subset convolution.
In particular, for a fixed top bag \(Z\), the transition to all states \(X\supseteq Z\) can be computed together in \(O^*(2^{|V(H)|})\) time.
This batching is used both in the dynamic programming from \Cref{sec:overview-contracted-dp} and in the dynamic programming from \Cref{sec:overview-boundary-dp}, and subset convolution also appears in the other partitioning steps.
A precise analysis gives an \(O^*(6^p)\)-time algorithm for unweighted \textsc{PFA}.
Together with \(p\leq 2k\), this gives \(O^*(36^k)\) time for unweighted \textsc{2VCA}.
For weighted instances, the same analysis adds a linear factor \(W\), where \(W\) is the maximum link cost.

\section{Preliminaries}\label{sec:preliminaries}

\subsection{Graph-Theoretic Notions}

All graphs are undirected, not necessarily connected, and may have parallel edges and self-loops.
We also use the name \emph{multigraph} to emphasize that multiple edges may be present.
For a graph $G$, we write $V(G)$ and $E(G)$ for its vertex and edge sets, $G[X]$ for the subgraph induced by $X\subseteq V(G)$, and $\delta_G(X)$ for the set of edges with exactly one endpoint in $X$.
For a vertex $v$, we abbreviate $\delta_G(\{v\})$ to $\delta_G(v)$.
For a walk $P$, let $V_{\mathrm{in}}(P)$ be the set of non-endpoint vertices of $P$.
For a cycle $C$, we use the convention $V_{\mathrm{in}}(C):=V(C)$.
For an integer $\lambda\geq 1$, a graph $G$ is \emph{$\lambda$-edge-connected} if $G-F$ is connected for every $F\subseteq E(G)$ with $|F|<\lambda$.
A graph $G$ is \emph{$\lambda$-vertex-connected} if $G-X$ is connected for every $X\subseteq V(G)$ with $|X|<\lambda$.
In particular, we regard $K_\lambda$ as $\lambda$-vertex-connected.

An edge $e$ of $G$ is a \emph{bridge} if $G-e$ has more connected components than $G$.
A graph is \emph{bridgeless} if it is connected and has no bridge; this convention includes the one-vertex graph.
For graphs with at least two vertices, bridgelessness, $2$-edge-connectivity, and every edge lying on a cycle are equivalent.
A vertex $v$ of $G$ is a \emph{cut vertex} if $G-v$ has more connected components than $G$.
A graph with at least three vertices is $2$-vertex-connected if and only if it is connected and has no cut vertex.
For blocks, we use the following convention, which also covers multigraphs with self-loops.
A \emph{non-self-loop block} is a maximal subgraph $B$ with no self-loop and with at least two vertices such that $B$ is $2$-vertex-connected; in particular, $K_2$ is a block.
A single self-loop is a block by itself, called a \emph{self-loop block}.
We call both non-self-loop blocks and self-loop blocks simply \emph{blocks}.
For graphs without self-loops, this agrees with the usual block notion with $K_2$ counted as a block.
For a connected graph \(G\), let \(\mathcal B_G\) be the family of vertex sets of the blocks of \(G\).
The \emph{block-cut tree} of \(G\) is the bipartite graph with node set \(V(G)\cup\mathcal B_G\), where the elements of \(V(G)\) are vertex nodes and the elements of \(\mathcal B_G\) are block nodes.
A vertex node \(v\in V(G)\) is adjacent to a block node \(B\in\mathcal B_G\) exactly when \(v\in B\).
We use the standard fact that this incidence graph is a tree; see \Cref{fig:overview-block-cut-tree}.

\subsection{Reducing Edge-Connectivity Augmentation to Vertex-Connectivity Augmentation}

We focus on vertex-connectivity augmentation because the corresponding edge-connectivity augmentation problem reduces to it while preserving the parameter $k$.
The standard reduction is the following.
For target $\lambda$, replace each vertex $v$ by a clique $K_v$ containing $\lambda$ dummy vertices and one vertex $p_{v,e}$ for every original edge or link $e$ incident to $v$.
For every original edge $e=uv$, add the fixed edge $p_{u,e}p_{v,e}$; for every link $\ell=uv$, add the link $p_{u,\ell}p_{v,\ell}$ with the same cost.
Then, for any selected link set $A$, the graph $G+A$ is $\lambda$-edge-connected if and only if the graph constructed above, with the links corresponding to $A$ added, is $\lambda$-vertex-connected.
Moreover, for $\lambda\geq 2$, this reduction preserves the augment-by-one assumption: if the original graph is already $(\lambda-1)$-edge-connected, then the constructed graph is already $(\lambda-1)$-vertex-connected.

\subsection{Algebraic Tools}

We introduce three algebraic tools: interpolation, subset transforms and convolution, and M\"obius inversion.

\paragraph*{Interpolation of polynomials.}

\begin{lemma}[Interpolation {\cite[Chapter~5]{GathenGerhard2013}}]\label{lem:interpolation}
Let $p(z)\in \mathbb{F}[z]$ be a univariate polynomial of degree at most $d$.
Given values $p(z_i)=p_i$ at distinct points $z_0,\dots,z_d\in\mathbb{F}$, the polynomial is
\[
    p(z)=\sum_{i=0}^{d} p_i
    \prod_{\substack{0\le j\le d\\j\neq i}}\frac{z-z_j}{z_i-z_j}.
\]
Thus every coefficient of $p(z)$ can be computed from these $d+1$ evaluations in polynomial time.
\end{lemma}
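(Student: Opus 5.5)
The statement is the standard Lagrange interpolation lemma, so I will prove it directly rather than invoke anything from the paper.

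\emph{Step 1: the right-hand side agrees with \(p\) at every node.} Define
\[
\ell_i(z):=\prod_{j\neq i}\frac{z-z_j}{z_i-z_j}.
\]
This is well defined because the \(z_j\) are distinct, so every denominator is a nonzero element of \(\mathbb F\). Each \(\ell_i\) has degree exactly \(d\). It satisfies \(\ell_i(z_m)=0\) for \(m\neq i\), since the factor \(z-z_m\) vanishes there, and \(\ell_i(z_i)=1\). Hence \(q(z):=\sum_i p_i\ell_i(z)\) has degree at most \(d\) and satisfies \(q(z_m)=p_m=p(z_m)\) for all \(m=0,\dots,d\).

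\emph{Step 2: uniqueness.} The difference \(p-q\) has degree at most \(d\) and has \(d+1\) distinct roots. Over a field, a nonzero polynomial of degree at most \(d\) has at most \(d\) roots, because each root splits off a linear factor and \(\mathbb F[z]\) is an integral domain. Therefore \(p=q\), which is the displayed formula.

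\emph{Step 3: efficiency.} I expand each \(\ell_i\) into coefficient form. One way is to first form \(N(z)=\prod_j (z-z_j)\) with \(O(d^2)\) field operations. Then obtain \(N(z)/(z-z_i)\) by synthetic division in \(O(d)\) operations, and divide by the scalar \(\prod_{j\neq i}(z_i-z_j)\). Summing \(p_i\ell_i\) over all \(i\) gives every coefficient of \(p\) with \(O(d^2)\) field operations.

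The only real subtlety is what "polynomial time" means, and this is where care is needed. Over \(\mathbb Q\), as the paper will use it with integer evaluation points and integer values, the bit sizes of intermediate numbers must stay polynomial. This holds because every quantity is a sum of at most \(d+1\) products of polynomially many input numbers. Alternatively, one can invert the Vandermonde system by Gaussian elimination, whose polynomial bit complexity is standard; the paper cites \cite[Chapter~5]{GathenGerhard2013} for exactly this. I expect no genuine obstacle here beyond citing that bit-size bound.
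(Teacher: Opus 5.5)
Your argument is correct and is the standard Lagrange-interpolation proof (the basis polynomials vanish at the other nodes, uniqueness follows by counting roots, and the coefficients come out in $O(d^2)$ field operations); the paper does not prove this itself and simply delegates it to \cite[Chapter~5]{GathenGerhard2013}. One correction to your last paragraph: the paper applies the lemma over $\mathbb{F}_p$ for primes $p$ that exceed the interpolation degrees and have polynomial bit length, not over $\mathbb{Q}$, so the bit-size issue never arises there and the $O(d^2)$ field-operation count already gives polynomial time.
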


\paragraph*{Subset transforms and convolution.}
\begin{lemma}[Fast zeta and M\"obius transforms {\cite{DBLP:conf/stoc/BjorklundHKK07}}]\label{lem:fast-zeta-mobius}
Let $U$ be a finite set and let $R$ be a commutative ring.
Given values $f(T)\in R$ for all $T\subseteq U$, all zeta-transform values
\(\sum_{T\subseteq S}f(T)\) and all M\"obius-transform values
\(\sum_{T\subseteq S}(-1)^{|S|-|T|}f(T)\), over all \(S\subseteq U\), can be computed
in $O^*(2^{|U|})$ arithmetic operations in $R$.
\end{lemma}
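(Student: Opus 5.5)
The statement just above is the fast zeta and Möbius transform lemma. I will prove it by Yates' algorithm: handle the elements of \(U=\{x_1,\dots,x_n\}\) one coordinate at a time.

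\textbf{Zeta transform.} Set \(f_0:=f\). For \(i=1,\dots,n\) and every \(S\subseteq U\), define
\[
    f_i(S):=\begin{cases} f_{i-1}(S)+f_{i-1}(S\setminus\{x_i\}), & x_i\in S,\\ f_{i-1}(S), & x_i\notin S.\end{cases}
\]
The invariant, proved by induction on \(i\), is
\[
    f_i(S)=\sum_{T} f(T),
\]
where \(T\) ranges over subsets of \(S\) that agree with \(S\) on \(\{x_{i+1},\dots,x_n\}\). In other words, the first \(i\) coordinates have been summed out freely and the remaining ones are frozen.

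For the inductive step, take \(x_i\in S\). The subsets \(T\) allowed at step \(i\) split according to whether \(x_i\in T\). Those with \(x_i\in T\) are exactly the terms of \(f_{i-1}(S)\). Those with \(x_i\notin T\) are exactly the terms of \(f_{i-1}(S\setminus\{x_i\})\). If \(x_i\notin S\), no allowed \(T\) contains \(x_i\), so the set of terms does not change.

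At \(i=n\) nothing is frozen, so \(f_n(S)=\sum_{T\subseteq S}f(T)\). Each of the \(n\) rounds costs \(2^n\) additions, for a total of \(O(n2^n)=O^*(2^{|U|})\) ring operations.

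\textbf{Möbius transform.} Use the same scheme with the update \(f_i(S):=f_{i-1}(S)-f_{i-1}(S\setminus\{x_i\})\) when \(x_i\in S\). The invariant becomes a signed sum: each \(T\) in the same range is weighted by \((-1)^{|S\cap\{x_1,\dots,x_i\}|-|T\cap\{x_1,\dots,x_i\}|}\). The induction is identical, except that the branch with \(x_i\notin T\) contributes one extra factor \(-1\).

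At \(i=n\) this weight equals \((-1)^{|S|-|T|}\), as required. Alternatively, each round of the Möbius procedure inverts the corresponding round of the zeta procedure. This uses only that \(-1\) exists in \(R\), which holds in any ring.

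\textbf{Remarks.} There is no real obstacle here. The only point needing care is stating the invariant precisely, in particular which coordinates are free and which are frozen. Commutativity of \(R\) is not even needed, since only additions and subtractions are performed. The answer can be stored in place, using a single array indexed by \(2^U\).
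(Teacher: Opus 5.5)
Your proof is correct. The ``first $i$ coordinates free, the rest frozen'' invariant, its signed analogue for the M\"obius transform, and the $O(|U|2^{|U|})$ operation count are all verified properly. The paper itself gives no proof and only cites Bj\"orklund et al.; their argument is exactly this coordinate-by-coordinate (Yates) recurrence, so your route is the intended one.
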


\begin{lemma}[Subset convolution {\cite{DBLP:conf/stoc/BjorklundHKK07}}]\label{lem:subset-convolution}
Let $U$ be a finite set.
For two functions $f,g$ on $2^U$ over a ring, their subset convolution is the function $f*g$ defined by
\[
    (f*g)(X):=\sum_{Y\subseteq X} f(Y)g(X\setminus Y).
\]
All values of $f*g$ over $2^U$ can be computed in $O^*(2^{|U|})$ arithmetic operations.
\end{lemma}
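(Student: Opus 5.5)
The statement immediately above is the subset convolution lemma, which is cited from Bj\"orklund et al. I would prove it by ranked zeta transforms, using only \Cref{lem:fast-zeta-mobius}. Let \(n:=|U|\).

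First, define the ranked functions
\[
f_i(Y):=\begin{cases} f(Y), & |Y|=i,\\ 0, & \text{otherwise,}\end{cases}
\]
for \(0\le i\le n\), and define \(g_i\) in the same way. Apply the fast zeta transform to every \(f_i\) and every \(g_i\), obtaining \(\hat f_i(S)=\sum_{T\subseteq S} f_i(T)\) and \(\hat g_i(S)=\sum_{T\subseteq S} g_i(T)\). This is \(2(n+1)\) transforms, each costing \(O^*(2^n)\) ring operations.

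Second, for each \(0\le k\le n\) and each \(S\subseteq U\), form the pointwise ranked product \(\hat h_k(S):=\sum_{i=0}^{k}\hat f_i(S)\,\hat g_{k-i}(S)\). This costs \(O(n^2 2^n)\) operations. Then apply the fast M\"obius transform to each \(\hat h_k\) to obtain a function \(h_k\).

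Third, I would verify that \(h_k(X)\), evaluated at \(|X|=k\), equals \((f*g)(X)\). Expanding the definitions gives
\[
\hat h_k(S)=\sum_{i}\sum_{\substack{Y,Z\subseteq S\\|Y|=i,\ |Z|=k-i}} f(Y)g(Z).
\]
By inclusion--exclusion, the M\"obius transform inverts the zeta transform. Hence \(h_k(X)=\sum f(Y)g(Z)\), where the sum runs over pairs with \(Y\cup Z=X\) and \(|Y|+|Z|=k\). When \(|X|=k\), the conditions \(Y\cup Z=X\) and \(|Y|+|Z|=|X|\) force \(Y\cap Z=\emptyset\), so \(Z=X\setminus Y\). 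This gives exactly \((f*g)(X)\). The convolution is then read off as \((f*g)(X)=h_{|X|}(X)\).

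The only delicate point is the rank argument in the third step. Without tracking ranks, the zeta-transform product would count all covering pairs \(Y\cup Z=X\), including overlapping ones, rather than only disjoint partitions. Restricting to \(|Y|+|Z|=|X|\) removes the overlapping pairs.

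The total cost is \(O(n)\) transforms plus \(O(n^2 2^n)\) multiplications, which is \(O^*(2^{|U|})\) ring operations. All steps use only ring operations (addition, subtraction, multiplication), so the argument works over any commutative ring.
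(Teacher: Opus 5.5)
Your proposal is correct and is essentially the ranked zeta-transform argument of Bj\"orklund et al.; the paper gives no proof of its own and only cites that source. Your key step, that $Y\cup Z=X$ together with $|Y|+|Z|=|X|$ forces $Z=X\setminus Y$, is exactly the step that argument relies on.
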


\begin{lemma}[Min-plus subset convolution {\cite{DBLP:conf/stoc/BjorklundHKK07}}]\label{lem:min-plus-subset-convolution}
Let $U$ be a finite set and let $\overline{W}$ be an integer.
For two functions $f,g\colon 2^U\to \{0,\dots,\overline{W}\}\cup\{\infty\}$, their min-plus subset convolution is the function $f*_{\min}g$ defined by
\[
    (f*_{\min}g)(X):=\min_{Y\subseteq X}\bigl(f(Y)+g(X\setminus Y)\bigr).
\]
Then all values of $f*_{\min}g$ can be computed in $O^*(2^{|U|}\overline{W})$ arithmetic operations.
\end{lemma}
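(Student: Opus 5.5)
We plan to derive the min-plus version from the ordinary subset convolution of \Cref{lem:subset-convolution}, run over a polynomial ring, by encoding each finite value as an exponent. Let $n:=|U|$ and let $z$ be a formal variable. Define $\hat f(Y):=z^{f(Y)}$ when $f(Y)<\infty$ and $\hat f(Y):=0$ when $f(Y)=\infty$, and define $\hat g$ in the same way. In $\ZZ[z]$ we then have
\[
(\hat f*\hat g)(X)=\sum_{Y\subseteq X}\hat f(Y)\,\hat g(X\setminus Y)=\sum_{\substack{Y\subseteq X\\ f(Y),\,g(X\setminus Y)<\infty}} z^{f(Y)+g(X\setminus Y)} .
\]
Every coefficient of this polynomial is a nonnegative integer, so no cancellation can occur. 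Hence $(f*_{\min}g)(X)$ equals the smallest exponent with a nonzero coefficient in $(\hat f*\hat g)(X)$, and it equals $\infty$ exactly when this polynomial is $0$. Each such polynomial has degree at most $2\overline W$, so every intermediate object can be stored as a coefficient vector of length $O(\overline W)$. The coefficients are bounded by $2^n$, so they have polynomially many bits.

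For the running time we unfold the ranked zeta-transform algorithm behind \Cref{lem:subset-convolution}. For each rank $r\in\{0,\dots,n\}$ we set $\hat f_r(Y):=\hat f(Y)$ if $|Y|=r$ and $0$ otherwise, and similarly define $\hat g_r$. We apply the fast zeta transform of \Cref{lem:fast-zeta-mobius} to each $\hat f_r$ and each $\hat g_r$. Next we form, for every $S\subseteq U$ and every rank $r$, the ranked product $h_r(S):=\sum_{i+j=r}\zeta\hat f_i(S)\,\zeta\hat g_j(S)$. We then apply the Möbius transform to each $h_r$ and read off the value at $X$ from $h_{|X|}$. This yields $(\hat f*\hat g)(X)$, because the terms with overlapping $Y$ and $X\setminus Y$ are removed by the rank condition.

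The cost splits into two kinds of work. The zeta and Möbius transforms perform $O^*(2^n)$ additions of polynomials of degree $O(\overline W)$, and each such addition costs $O(\overline W)$ integer operations. The pointwise step performs $O(2^n n^2)$ multiplications of polynomials. We do each multiplication in $O(\overline W\log\overline W)$ operations by FFT over integers or a suitable modular representation; this is valid because the coefficients are nonnegative and bounded by $2^n$. Since $\overline W$ is given in binary, $\log\overline W$ is polynomial in the input size and is absorbed into $O^*$. The total is therefore $O^*(2^n\overline W)$ arithmetic operations, and the final scan for the minimum nonzero exponent costs $O(2^n\overline W)$.

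The only delicate step is the pointwise polynomial multiplication. Naive multiplication would cost $\overline W^2$ per product and give $O^*(2^n\overline W^2)$ instead of a linear dependence on $\overline W$. So the plan relies on fast polynomial multiplication, or equivalently on evaluating at $O(\overline W)$ roots of unity modulo a suitable prime and interpolating as in \Cref{lem:interpolation} via FFT. We will need to check that the chosen modulus exceeds $2^n$ so that no coefficient vanishes spuriously.
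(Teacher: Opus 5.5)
Your proposal is correct and is essentially the standard argument behind the cited result of Bj\"orklund et al.\ (the paper itself gives no proof beyond the citation): they embed the min-sum semiring into the integer sum-product ring via $x\mapsto\beta^{x}$ with $\beta=2^{n}+1$, which is your encoding $z^{x}$ up to Kronecker substitution, then run the ranked zeta/M\"obius subset convolution with fast multiplication and read off the least nonzero exponent. One small imprecision: the bound $2^n$ holds for the final coefficients and the zeta-transformed inputs, but the pointwise products and signed M\"obius intermediates can reach $2^{O(n)}$; this is harmless, since they still have $O(n)$ bits, and modulo a prime $p>2^n$ only the bound on the final coefficients matters.
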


For a finite set $U$, let $\Pi_i(U)$ denote the set of unordered partitions of $U$ into $i$ nonempty parts.
Let $\Pi(U):=\bigcup_i\Pi_i(U)$.
We use the convention $\Pi_0(\emptyset)=\{\emptyset\}$ and $\Pi_0(U)=\emptyset$ for $U\neq\emptyset$.

\begin{lemma}\label{lem:app-partition-convolution}
Let $R$ be a commutative ring.
Suppose that a value $x_C\in R$ is given for every $C\subseteq U$, with $x_\emptyset=0$.
Then, for every $i\in\{0,\dots,|U|\}$, the values of $p_i(S):=\sum_{\{C_1,\dots,C_i\}\in\Pi_i(S)}\prod_{j=1}^{i}x_{C_j}$ for all $S\subseteq U$ can be computed together in $O^*(2^{|U|})$ arithmetic operations in $R$.
\end{lemma}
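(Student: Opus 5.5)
The approach is the standard ranked (graded) zeta/Möbius technique behind fast subset convolution, as in \Cref{lem:subset-convolution}. Because the parts of a partition are unordered, I will first count ordered tuples and divide by \(i!\) at the end. This division is the one delicate point over a general commutative ring, and I return to it below.

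For a set \(S\) and an integer \(r\), define the ranked function \(\hat x(r,C):=x_C\) if \(|C|=r\), and \(0\) otherwise. Compute its zeta transform \(\hat x^\zeta(r,S)=\sum_{C\subseteq S,|C|=r}x_C\) for each \(r\le |U|\), using \(|U|+1\) calls to \Cref{lem:fast-zeta-mobius}. Next, for each \(S\), compute the ranked products by polynomial multiplication in the rank variable. Set \(q_0(r,S)=[r=0]\) and \(q_{j+1}(r,S)=\sum_{r'}q_j(r',S)\,\hat x^\zeta(r-r',S)\). This takes polynomially many ring operations per \(S\) for all \(j\le |U|\). Applying the Möbius transform of \Cref{lem:fast-zeta-mobius} to \(q_i(|S|,\cdot)\), and reading off rank \(|S|\), gives
\[
\sum_{\substack{(C_1,\dots,C_i)\\ C_1\cup\dots\cup C_i=S,\ \sum|C_j|=|S|}}\prod_j x_{C_j}.
\]
The rank constraint forces the \(C_j\) to be pairwise disjoint, and \(x_\emptyset=0\) kills tuples with empty parts. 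So this quantity equals the sum over ordered partitions of \(S\) into \(i\) nonempty parts, which is \(i!\,p_i(S)\). The total work is \(O^*(2^{|U|})\), because there are only polynomially many ranks and indices \(i\).

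The main obstacle is removing the factor \(i!\) without dividing in \(R\). To avoid division, I will impose a canonical order on the parts instead. Fix a total order on \(U\) and require \(C_1\) to contain the minimum element of \(S\); then recurse on \(S\setminus C_1\). This gives the recurrence
\[
p_i(S)=\sum_{\substack{C\subseteq S\\ \min S\in C}}x_C\,p_{i-1}(S\setminus C),\qquad p_0(S)=[S=\emptyset].
\]
Evaluating this directly costs \(3^{|U|}\). To bring it to \(2^{|U|}\), I use the standard trick of processing by the minimum element. Write \(U=\{u_1<\dots<u_n\}\) and restrict to subsets of the suffixes \(U_t=\{u_t,\dots,u_n\}\). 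For a fixed \(t\), the sets \(S\) with \(\min S=u_t\) satisfy \(S\setminus C\subseteq U_{t+1}\). The recurrence is then a subset convolution over \(U_{t+1}\) of \(C'\mapsto x_{C'\cup\{u_t\}}\) with \(p_{i-1}\) restricted to \(U_{t+1}\). By \Cref{lem:subset-convolution} this costs \(O^*(2^{n-t})\), and summing over \(t\) gives \(O^*(2^{|U|})\) for each \(i\). Doing this for every \(i\) in increasing order stays within the claimed bound.

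If subset convolution over a general commutative ring is itself only available via the ranked method above, this is fine: that method uses only ring additions and multiplications, with no division, so the bound holds over \(R\).
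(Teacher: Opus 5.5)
Your final argument is correct and is exactly the paper's proof: fix an order on \(U\), single out the part containing \(\min S\), and for each minimum element \(a\) compute the recurrence as one subset convolution of \(T\mapsto x_{T\cup\{a\}}\) with \(p_{i-1}\) over the suffix \(\{b\in U:a<b\}\); these costs sum to \(O^*(2^{|U|})\) per \(i\). The ranked ordered-partition computation in your first part yields only \(i!\,p_i(S)\), so you were right to discard it, and it is not needed.
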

\begin{proof}
Fix a linear order on $U$.
Set $p_0(\emptyset)=1$ and $p_0(S)=0$ for $S\neq\emptyset$.
Assume that all values $p_{i-1}(S)$ are known.
For $a\in U$, let $U_a:=\{b\in U:a<b\}$, and define functions on $2^{U_a}$ by $f_a(T):=x_{T\cup\{a\}}$ and $g_a(T):=p_{i-1}(T)$.
Compute $h_a:=f_a*g_a$ over the universe $U_a$ using \Cref{lem:subset-convolution}.
For every nonempty $S\subseteq U$ with minimum element $a$, set $p_i(S):=h_a(S\setminus\{a\})$.
This recurrence chooses the unique part of a partition that contains the minimum element of $S$.
After this part is chosen, the remaining $i-1$ parts form a partition of the remaining vertices.
Thus each unordered partition of $S$ into $i$ nonempty parts is counted exactly once.
For fixed $i$, the total time is $\sum_{a\in U}O^*(2^{|U_a|})=O^*(2^{|U|})$.
Iterating over all $i$ only changes the suppressed polynomial factor.
\end{proof}

\paragraph*{M\"obius inversion.}
Let $(\mathcal L,\preceq)$ be a finite lattice with maximum element $\hat{1}$.
For $x\preceq y$, the \emph{M\"obius function} $\mu_{\mathcal L}$ is defined by $\mu_{\mathcal L}(x,x)=1$ and $\mu_{\mathcal L}(x,y)=-\sum_{x\preceq z\prec y}\mu_{\mathcal L}(x,z)$ for $x\prec y$, where $x\prec y$ means $x\preceq y$ and $x\neq y$.

\begin{lemma}[M\"obius inversion {\cite[Proposition~3.7.1]{Stanley2011}}]\label{lem:mobius-inversion-top}
Let $R$ be a commutative ring.
Let $a_x,b_x\in R$ be indexed by $x\in\mathcal L$, and suppose that $b_y=\sum_{x\preceq y}a_x$ for all $y\in\mathcal L$.
Then $a_y=\sum_{x\preceq y}\mu_{\mathcal L}(x,y)b_x$ for every $y\in\mathcal L$.
Such $a_y$ is called the \emph{M\"obius inversion} of $b$ at $y$.
\end{lemma}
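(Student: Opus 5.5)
The plan is to substitute the hypothesis into the right-hand side, swap the order of summation, and reduce everything to one orthogonality identity for \(\mu_{\mathcal L}\). For fixed \(y\in\mathcal L\),
\[
\sum_{x\preceq y}\mu_{\mathcal L}(x,y)\,b_x=\sum_{x\preceq y}\mu_{\mathcal L}(x,y)\sum_{z\preceq x}a_z=\sum_{z\preceq y}a_z\sum_{z\preceq x\preceq y}\mu_{\mathcal L}(x,y).
\]
It then suffices to show that \(\sum_{z\preceq x\preceq y}\mu_{\mathcal L}(x,y)=[z=y]\) for all \(z\preceq y\). With that identity, only the term \(z=y\) survives and the sum equals \(a_y\). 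Only the poset structure and finiteness are used; the lattice property and the element \(\hat 1\) play no role.

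The main obstacle is that the recursive definition fixes the \emph{first} argument. It gives directly the ``row'' identity \(\sum_{x\preceq w\preceq y}\mu_{\mathcal L}(x,w)=[x=y]\), whereas we need the ``column'' identity, which sums over the first argument with the second held fixed. I would bridge this with the incidence algebra. Consider the \(\mathcal L\times\mathcal L\) integer matrices \(Z\) and \(M\) with \(Z_{x,y}=[x\preceq y]\), and \(M_{x,y}=\mu_{\mathcal L}(x,y)\) for \(x\preceq y\) and \(0\) otherwise.

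First, the row identity says \((MZ)_{x,y}=\sum_{w}M_{x,w}Z_{w,y}=[x=y]\), so \(MZ=I\). Second, order the rows and columns by a linear extension of \(\preceq\). Then \(Z\) is upper unitriangular and hence invertible over \(\mathbb Z\), so the left inverse \(M\) is also a right inverse: \(ZM=I\). Third, read off \((ZM)_{z,y}=\sum_{z\preceq x\preceq y}\mu_{\mathcal L}(x,y)=[z=y]\), which is exactly the column identity we need.

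The result then holds over any commutative ring \(R\), since the coefficients \(\mu_{\mathcal L}(x,y)\) are integers and act on \(R\) through the canonical map \(\mathbb Z\to R\).

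As a fallback that avoids matrices, I could prove the column identity by induction on the length of the interval \([z,y]\), expanding \(\mu_{\mathcal L}(x,y)\) by its recursion and exchanging sums. The matrix argument is cleaner, though, so I would present that one.
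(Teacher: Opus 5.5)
Your proof is correct. The recursive definition gives the row identity $MZ=I$, the unitriangularity of $Z$ upgrades this to $ZM=I$, and the resulting column identity $\sum_{z\preceq x\preceq y}\mu_{\mathcal L}(x,y)=[z=y]$ completes the exchange of summation. The paper gives no proof of its own and only cites Stanley's Proposition~3.7.1, whose argument is this same incidence-algebra one ($\mu$ is the two-sided inverse of $\zeta$, with the left inverse becoming a right inverse in the same way), so your route matches the standard one.
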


In this paper, we use two specific lattices: subset lattices and partition lattices.
For a finite set $S$, let $2^S$ be the lattice whose elements are the subsets of $S$, where $T\preceq T'$ means $T\subseteq T'$.
For a nonempty finite set $S$, let $\Pi(S)$ be the lattice whose elements are the partitions of $S$, where $\pi\preceq \pi'$ means that every part of $\pi$ is contained in a part of $\pi'$.
We record their M\"obius coefficients below.

\begin{lemma}[Subset lattice {\cite[Example~3.8.3]{Stanley2011}}]\label{lem:app-all-subuniverse-ie}
Let $U$ be a finite set.
For $S\subseteq U$, the M\"obius function of $2^S$ satisfies $\mu_{2^S}(T,S)=(-1)^{|S|-|T|}$ for every $T\subseteq S$.
Moreover, given values $x_T$ in a commutative ring for all $T\subseteq U$, the values $\sum_{T\subseteq S}(-1)^{|S|-|T|}x_T$ can be computed in $O^*(2^{|U|})$ arithmetic operations for all $S\subseteq U$.
\end{lemma}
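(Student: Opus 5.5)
The statement to prove is the last lemma, on the subset lattice: for \(T\subseteq S\), \(\mu_{2^S}(T,S)=(-1)^{|S|-|T|}\), and all signed sums \(\sum_{T\subseteq S}(-1)^{|S|-|T|}x_T\) can be computed together fast. I will handle the two parts separately.

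\textbf{M\"obius function.} I will prove the formula by induction on \(|S\setminus T|\), using the recursive definition of \(\mu\). The interval \([T,S]\) in \(2^S\) is isomorphic to the Boolean lattice on \(D:=S\setminus T\), via \(Z\mapsto Z\setminus T\). So it suffices to show \(\mu(\emptyset,D)=(-1)^{|D|}\) in \(2^D\).

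The base case is \(\mu(\emptyset,\emptyset)=1\). For the inductive step, take \(D\neq\emptyset\). The induction hypothesis gives \(\mu(\emptyset,D')=(-1)^{|D'|}\) for every proper subset \(D'\subsetneq D\). The definition then gives
\[
\mu(\emptyset,D)=-\sum_{D'\subsetneq D}(-1)^{|D'|}.
\]
By the binomial theorem, the full sum \(\sum_{D'\subseteq D}(-1)^{|D'|}=(1-1)^{|D|}\) vanishes because \(D\neq\emptyset\). Hence the sum over proper subsets equals \(-(-1)^{|D|}\), and so \(\mu(\emptyset,D)=(-1)^{|D|}\).

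The isomorphism preserves the recursive definition of \(\mu\), because \(\mu(x,y)\) depends only on the interval \([x,y]\). Therefore \(\mu_{2^S}(T,S)=(-1)^{|S|-|T|}\). Note that the ambient universe \(U\) plays no role here.

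\textbf{Fast computation.} For the algorithmic part, define \(f(T):=x_T\) for \(T\subseteq U\). Then, for every \(S\subseteq U\), the requested value \(\sum_{T\subseteq S}(-1)^{|S|-|T|}x_T\) is exactly the M\"obius-transform value of \(f\) at \(S\). \Cref{lem:fast-zeta-mobius} computes all these values in \(O^*(2^{|U|})\) ring operations.

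If a self-contained argument is preferred, I will process the elements \(u_1,\dots,u_n\) of \(U\) one at a time. In round \(i\), for every \(S\) containing \(u_i\), replace \(f(S)\) by \(f(S)-f(S\setminus\{u_i\})\). An induction on \(i\) shows that after round \(i\),
\[
f(S)=\sum_{T} (-1)^{|S\setminus T|}x_T,
\]
where \(T\) ranges over sets agreeing with \(S\) outside \(\{u_1,\dots,u_i\}\) and satisfying \(T\subseteq S\). After all \(n\) rounds this is the desired sum, and the total cost is \(n2^n\) operations.

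\textbf{Main obstacle.} There is no substantial obstacle. The only point needing care is the reduction of the interval \([T,S]\) to a Boolean lattice, so that the induction runs on \(|S\setminus T|\) rather than on \(|S|\).
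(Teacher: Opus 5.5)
Your proposal is correct and follows the paper's approach. The paper gives no separate argument: it cites Stanley for $\mu_{2^S}(T,S)=(-1)^{|S|-|T|}$ and gets the computational claim directly from the fast M\"obius transform (\Cref{lem:fast-zeta-mobius}), which is exactly what your second part invokes. Your induction on $|S\setminus T|$ using $(1-1)^{|D|}=0$, and your optional element-by-element transform, simply make these cited facts self-contained.
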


\begin{lemma}[Partition lattice {\cite[Example~3.10.4]{Stanley2011}}]\label{lem:partition-lattice-mobius}
Let $U$ be a finite set.
For nonempty $S\subseteq U$, the M\"obius function of $\Pi(S)$ satisfies $\mu_{\Pi(S)}(\pi,\{S\})=(-1)^{|\pi|-1}(|\pi|-1)!$ for every $\pi\in\Pi(S)$.
Moreover, given values $x_C$ in a commutative ring for all nonempty subsets $C\subseteq U$, the values $\sum_{\pi\in\Pi(S)}(-1)^{|\pi|-1}(|\pi|-1)!\prod_{C\in\pi}x_C$ can be computed in $O^*(2^{|U|})$ arithmetic operations for all nonempty $S\subseteq U$.
\end{lemma}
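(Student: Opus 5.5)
The statement is the partition-lattice M\"obius lemma (\Cref{lem:partition-lattice-mobius}). I will prove the coefficient formula and the algorithmic claim separately.

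\emph{The M\"obius coefficient.} For a partition $\pi\in\Pi(S)$ with $m=|\pi|$ parts, the interval $[\pi,\{S\}]$ in $\Pi(S)$ is isomorphic to the full partition lattice $\Pi(\pi)$ on the $m$ blocks. This holds because coarsenings of $\pi$ correspond exactly to partitions of its set of parts. Hence $\mu_{\Pi(S)}(\pi,\{S\})=\mu_{\Pi_m}(\hat0,\hat1)$, and it suffices to show $\mu_m:=\mu_{\Pi_m}(\hat0,\hat1)=(-1)^{m-1}(m-1)!$.

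For this I will use the standard polynomial identity. Let $x$ be a positive integer and consider maps $f\colon[m]\to[x]$. Each such map determines its kernel partition $\ker f$. Grouping maps by kernel gives
\[
x^m=\sum_{\sigma\in\Pi_m}(x)_{|\sigma|},
\]
where $(x)_j=x(x-1)\cdots(x-j+1)$. More generally, the number of maps whose kernel is coarser than or equal to $\sigma$ is $x^{|\sigma|}$. Applying \Cref{lem:mobius-inversion-top} on the dual lattice, taken from the bottom, gives
\[
(x)_m=\#\{f : \ker f=\hat0\}=\sum_{\sigma}\mu(\hat0,\sigma)\,x^{|\sigma|}.
\]
Now compare the coefficient of $x^1$ on both sides. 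The only $\sigma$ with $|\sigma|=1$ is $\hat1$, so the right side contributes $\mu_m$. On the left, the linear coefficient of $x(x-1)\cdots(x-m+1)$ is $(-1)^{m-1}(m-1)!$. Since both sides are polynomials agreeing at infinitely many integers, the coefficients match, and $\mu_m=(-1)^{m-1}(m-1)!$.

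\emph{The algorithmic claim.} Group the sum by the number of parts:
\[
\sum_{\pi\in\Pi(S)}(-1)^{|\pi|-1}(|\pi|-1)!\prod_{C\in\pi}x_C=\sum_{i\ge1}(-1)^{i-1}(i-1)!\,p_i(S),
\]
with $p_i$ as in \Cref{lem:app-partition-convolution} and $x_\emptyset:=0$. That lemma computes $p_i(S)$ for all $S\subseteq U$ and each $i\le|U|$ in $O^*(2^{|U|})$ operations. Combining the at most $|U|$ values per $S$ with the integer coefficients then costs only a polynomial factor more.

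The only step needing care is the interval isomorphism and the direction of the M\"obius inversion. One must check that the lattice inversion is applied on intervals of the form $[\hat0,\sigma]$ versus $[\sigma,\hat1]$ correctly; this is routine, because $\Pi_m$ is self-consistent under the counting argument given.
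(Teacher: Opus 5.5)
Your proposal is correct. The algorithmic half is exactly the paper's argument: set $x_\emptyset=0$, compute every $p_i$ by \Cref{lem:app-partition-convolution}, and combine them with the coefficients $(-1)^{i-1}(i-1)!$.

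For the coefficient formula the routes differ. The paper gives no argument and simply cites Stanley. You prove it directly:
\begin{itemize}
\item reduce to $\mu_{\Pi_m}(\hat{0},\hat{1})$ via the interval isomorphism $[\pi,\{S\}]\cong\Pi_m$;
\item count maps $[m]\to[x]$ by their kernel and invert;
\item compare the linear coefficient of $(x)_m$, which is $(-1)^{m-1}(m-1)!$.
\end{itemize}
This is the classical derivation. It makes the lemma self-contained at the cost of a few lines, where the citation is shorter.

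The one shaky spot is how you justify the inversion step. The remark that $\Pi_m$ is ``self-consistent'' does not justify anything; if it is meant as a symmetry claim, note that $\Pi_m$ is not self-dual once $m\ge 4$. Moreover, applying \Cref{lem:mobius-inversion-top} to the dual lattice literally produces the M\"obius function of $\mathcal L^*$. You may replace it by $\mu_{\Pi_m}(\hat{0},\cdot)$ only via the general fact $\mu_{\mathcal L^*}(y,x)=\mu_{\mathcal L}(x,y)$, which you did not state.

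You can avoid the dual lattice altogether. Let $g(\rho)=(x)_{|\rho|}$ and $G(\tau)=x^{|\tau|}=\sum_{\rho\succeq\tau}g(\rho)$. Interchanging sums gives
\[
\sum_{\tau}\mu(\hat{0},\tau)\,G(\tau)=\sum_{\rho}g(\rho)\sum_{\hat{0}\preceq\tau\preceq\rho}\mu(\hat{0},\tau)=g(\hat{0})=(x)_m .
\]
Here the inner sum equals $1$ for $\rho=\hat{0}$ and $0$ otherwise, directly from the recursion the paper uses to define $\mu$. With that one-line replacement your argument is complete.
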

\begin{proof}
The formula for the M\"obius coefficient is the cited one.
For the computation, set $x_\emptyset=0$.
For each $i$, \Cref{lem:app-partition-convolution} computes all unordered partition sums $p_i(S):=\sum_{\pi\in\Pi_i(S)}\prod_{C\in\pi}x_C$ in $O^*(2^{|U|})$ time.
For each nonempty $S\subseteq U$, the required M\"obius inversion is $\sum_i(-1)^{i-1}(i-1)!p_i(S)$.
Summing over all $i$ adds only a polynomial factor.
\end{proof}

\section{From \textsc{2VCA} to \textsc{PFA}}

The first step is to reduce general \textsc{2VCA} instances to the following path-forest form.

\boxproblem{Path Forest Augmentation (PFA)}
{A path forest $G=(V(G),E(G))$ that is a union of at most $p$ vertex-disjoint paths, a set of links $L$ on $V(G)$ whose endpoint sets belong to $\binom{V(G)}{2}$, an integer $k$, link costs $c\colon L\to \mathbb Z_{\geq 0}$, link counts $\chi\colon L\to\{0,1\}$, and an integer $W\geq 0$ such that $c_e\leq W$ for every $e\in L$ and $c_e=0$ whenever $\chi_e=0$.}
{Find a minimum-cost set $A\subseteq L$ such that $\chi(A)\leq k$ and the augmented multigraph $(V(G),E(G)\cup A)$ is $2$-vertex-connected.}
{$p$}

\begin{figure}[t]
\centering
\def\twovcacoordinates{%
  \coordinate (va) at (-3.35, 0.05);
  \coordinate (vb) at (-2.35, 0.05);
  \coordinate (vl) at (-3.05, 0.65);
  \coordinate (vc) at (-1.60, 1.00);
  \coordinate (vm) at (-0.85, 1.32);
  \coordinate (vd) at (-0.05, 1.05);
  \coordinate (ve) at ( 0.35, 0.25);
  \coordinate (vn) at ( 0.05,-0.45);
  \coordinate (vf) at (-0.95,-0.80);
  \coordinate (vg) at (-0.10,-1.35);
  \coordinate (vh) at (-1.15,-1.55);
  \coordinate (vp) at (-1.90, 1.80);
  \coordinate (vo) at (-2.35, 1.35);
  \coordinate (vq) at (-0.05, 1.75);
  \coordinate (vt) at ( 0.55, 1.45);
  \coordinate (vr) at ( 1.10, 0.35);
  \coordinate (vw) at ( 1.55,-0.05);
  \coordinate (vu) at ( 1.05,-0.35);
  \coordinate (vx) at ( 1.70,-0.85);
  \coordinate (vi) at ( 2.35, 0.75);
  \coordinate (vj) at ( 3.35, 0.75);
  \coordinate (vk) at ( 2.85,-0.15);
  \coordinate (vs) at ( 3.72,-1.25);
}
\def\twovcavertices{%
  \foreach \name/\lab/\pos in {
    a/$a$/left,
    b/$b$/above left,
    c/$c$/above,
    m/$d$/above,
    d/$e$/right,
    e/$f$/above right,
    n/$g$/right,
    f/$h$/below left,
    g/$i$/below right,
    h/$j$/below,
    p/$k$/above,
    q/$l$/above,
    t/$m$/right,
    r/$n$/right,
    w/$o$/right,
    u/$p$/below,
    x/$q$/below,
    i/$r$/above left,
    j/$s$/above right,
    k/$t$/below,
    s/$u$/right}
    \node[vertex,label=\pos:{\lab}] at (v\name) {};
}
\begin{subfigure}[t]{0.48\linewidth}
\centering
\begin{tikzpicture}[
  x=0.85cm,y=0.85cm,
  vertex/.style={circle,fill=black,inner sep=0pt,minimum size=2mm},
  edge/.style={line width=0.55pt},
  block label/.style={font=\scriptsize,blue!60!black,inner sep=1pt},
  every label/.style={font=\scriptsize,inner sep=1pt}
]
\twovcacoordinates
\draw[edge,orange!85!black]
  (va) -- (vb);
\draw[edge,blue!70!black]
  (vb) -- (vm)
  (vc) -- (vm)
  (vc) -- (vn)
  (vd) -- (vn)
  (vd) -- (vm)
  (vm) -- (ve)
  (ve) -- (vn)
  (vf) -- (vn)
  (vf) -- (vm)
  (vb) -- (vn);
\draw[edge,green!50!black]
  (vc) -- (vp);
\draw[edge,red!75!black]
  (vd) -- (vq)
  (vq) -- (vt)
  (vt) -- (vd);
\draw[edge,brown!75!black]
  (ve) -- (vw)
  (vw) -- (vu)
  (vu) -- (vr)
  (vr) -- (ve);
\draw[edge,magenta!75!black]
  (vu) -- (vx);
\draw[edge,violet!75!black]
  (vf) -- (vg)
  (vg) -- (vh)
  (vh) -- (vf);
\draw[edge,cyan!60!black]
  (vi) -- (vj)
  (vj) -- (vk)
  (vk) -- (vi);
\twovcavertices
\node[font=\small] at (0.35,-2.12) {(a)};
\end{tikzpicture}
\end{subfigure}\hfill
\begin{subfigure}[t]{0.48\linewidth}
\centering
\begin{tikzpicture}[
  x=0.85cm,y=0.85cm,
  vertex/.style={circle,fill=black,inner sep=0pt,minimum size=2mm},
  edge/.style={line width=0.55pt},
  every label/.style={font=\scriptsize,inner sep=1pt}
]
\twovcacoordinates
\draw[edge,orange!85!black]
  (va) -- (vb);
\draw[edge,blue!70!black]
  (vb) -- (vf)
  (vf) -- (vn)
  (vn) -- (ve)
  (vd) -- (ve)
  (vd) -- (vm)
  (vm) -- (vc)
  (vc) -- (vb);
\draw[edge,green!50!black]
  (vc) -- (vp);
\draw[edge,red!75!black]
  (vd) -- (vq)
  (vq) -- (vt)
  (vt) -- (vd);
\draw[edge,brown!75!black]
  (ve) -- (vu)
  (vu) -- (vw)
  (vw) -- (vr)
  (vr) -- (ve);
\draw[edge,magenta!75!black]
  (vu) -- (vx);
\draw[edge,violet!75!black]
  (vf) -- (vg)
  (vg) -- (vh)
  (vh) -- (vf);
\draw[edge,cyan!60!black]
  (vi) -- (vj)
  (vj) -- (vk)
  (vk) -- (vi);
\twovcavertices
\node[font=\small] at (0.35,-2.12) {(b)};
\end{tikzpicture}
\end{subfigure}

\vspace{0.5em}
\begin{subfigure}[t]{0.48\linewidth}
\centering
\begin{tikzpicture}[
  x=0.85cm,y=0.85cm,
  vertex/.style={circle,fill=black,inner sep=0pt,minimum size=2mm},
  forest edge/.style={line width=0.65pt},
  omitted edge/.style={line width=0.55pt,densely dotted,gray!75!black},
  every label/.style={font=\scriptsize,inner sep=1pt}
]
\twovcacoordinates
\draw[forest edge]
  (va) -- (vb)
  (vb) -- (vc)
  (vc) -- (vp)
  (vc) -- (vm)
  (vm) -- (vd)
  (vd) -- (vq)
  (vq) -- (vt)
  (vd) -- (ve)
  (ve) -- (vr)
  (vr) -- (vw)
  (vu) -- (vx)
  (vu) -- (vw)
  (ve) -- (vn)
  (vn) -- (vf)
  (vf) -- (vg)
  (vg) -- (vh)
  (vi) -- (vj)
  (vj) -- (vk);
\draw[omitted edge]
  (vt) -- (vd)
  (vb) -- (vf)
  (ve) -- (vu)
  (vf) -- (vh)
  (vk) -- (vi);
\twovcavertices
\node[font=\small] at (0.35,-2.12) {(c)};
\end{tikzpicture}
\end{subfigure}\hfill
\begin{subfigure}[t]{0.48\linewidth}
\centering
\begin{tikzpicture}[
  x=0.85cm,y=0.85cm,
  vertex/.style={circle,fill=black,inner sep=0pt,minimum size=2mm},
  path edge/.style={line width=1.05pt,red!70!black},
  omitted edge/.style={line width=0.55pt,densely dotted,gray!75!black},
  path label/.style={font=\small,red!70!black},
  every label/.style={font=\scriptsize,inner sep=1pt}
]
\twovcacoordinates
\draw[path edge]
  (va) -- (vb)
  (vp) -- (vc)
  (vc) -- (vb)
  (vt) -- (vq)
  (vq) -- (vd)
  (vd) -- (vm)
  (ve) -- (vr)
  (vr) -- (vw)
  (vu) -- (vw)
  (vu) -- (vx)
  (ve) -- (vn)
  (vn) -- (vf)
  (vf) -- (vg)
  (vg) -- (vh)
  (vi) -- (vj)
  (vj) -- (vk);
\draw[omitted edge]
  (vt) -- (vd)
  (vb) -- (vf)
  (vc) -- (vm)
  (vd) -- (ve)
  (ve) -- (vu)
  (vf) -- (vh)
  (vk) -- (vi);
\node[path label] at (-2.25,0.82) {$P_1$};
\node[path label] at (-0.50,1.60) {$P_2$};
\node[path label] at (0.45,-1.02) {$P_3$};
\node[path label] at (2.85,1.10) {$P_4$};
\node[path label] at (3.72,-0.82) {$P_5$};
\twovcavertices
\node[font=\small] at (0.35,-2.12) {(d)};
\end{tikzpicture}
\end{subfigure}
\caption{Reduction from \textsc{2VCA} to \textsc{PFA}. (a) The input graph $G$; edge colors indicate its blocks. (b) The graph $G'$ obtained by replacing each block with at least three vertices by a cycle, chosen so that specified cut vertices become consecutive; for example, the cycle in (b) replacing the blue block on $\{b,c,d,e,f,g,h\}$ makes $b$ and $h$ consecutive, and the cycle replacing the brown block on $\{f,n,o,p\}$ makes $f$ and $p$ consecutive. (c) A spanning forest $S$ of $G'$ whose leaves, in each component with at least two blocks, lie in the parts corresponding to leaf blocks. (d) The red edges form the path forest $F$; dotted edges are omitted edges of $G'$ and are added to the \textsc{PFA} instance as zero-cost links.}
\label{fig:twovca-to-twovca-pf-reduction}
\end{figure}

The following lemma reduces \textsc{2VCA} to \textsc{PFA} on at most $2k$ paths.

\begin{lemma}\label{lem:twovca_to_twovca_pf}
Given an instance $(G,L,k,W,c)$ of \textsc{2VCA}, one can in polynomial time either solve the instance directly, correctly conclude that no feasible solution exists, or construct an equivalent instance of \textsc{PFA} whose path forest has at most $2k$ paths, whose link budget is $k$, and whose maximum link cost is $W$.
\end{lemma}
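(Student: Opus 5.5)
Following Figure~\ref{fig:twovca-to-twovca-pf-reduction}, the plan is to replace $G$ by a path forest $F$ such that, for every $A\subseteq L$, the graph $G+A$ is $2$-vertex-connected if and only if $F+A+(E(G')\setminus E(F))$ is $2$-vertex-connected. Here $G'$ is an intermediate graph, and the omitted edges $E(G')\setminus E(F)$ become zero-cost links with $\chi=0$. Once such an $F$ is built, the \textsc{PFA} instance is $(F, L\cup(E(G')\setminus E(F)), k, W)$, with $\chi\equiv 1$ and the original costs on $L$. Any optimal \textsc{PFA} solution can be assumed to take all zero-cost links, since adding edges preserves $2$-vertex-connectivity; hence optima and budgets correspond exactly.

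\textbf{Step 1 (cheap cases).} If $|V|\le 2$, solve the instance by brute force. Let $c$ be the number of components of $G$ and $\ell$ the number of leaf blocks, counted so that an isolated vertex or a $2$-connected component has weight $2$ and a component with a tree of blocks contributes its number of leaf blocks. Every added link lowers the total deficiency by at most $2$, the Eswaran--Tarjan lower bound. So if $\sum(\text{deficiency})>2k$, report infeasible.

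\textbf{Step 2 (cycles replace blocks).} Replace each block of size at least $3$ by a Hamiltonian cycle on its vertex set. Choose the cycle so that two designated cut vertices of the block, namely those through which a chosen ``spine'' of the block-cut tree passes, are consecutive. Blocks that are $K_2$ stay unchanged. The removed non-cycle edges are kept as zero-cost links.

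The key claim is that the cut vertices of $G'+A$ that matter are unchanged. Any cycle on a block's vertex set is $2$-connected, so $G'$ has the same block-cut tree as $G$. It follows that $G'+A$ and $G+A$ differ only by edges inside blocks of $G'$, and adding edges inside a $2$-connected block never changes $2$-vertex-connectivity of the whole graph. Since we re-add those edges as free links anyway, equivalence is immediate, and this step is actually just a convenience.

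\textbf{Step 3 (deleting cycle edges to get paths).} In each component, take a spanning tree $S$ of $G'$ whose leaves lie in leaf blocks. Choosing a cycle edge to drop in each cycle block, with the designated cut vertices adjacent so that the spine can pass through, makes $S$ a tree with at most $\ell$ leaves. Then decompose $S$ into paths by deleting one edge at each branching vertex until every vertex has degree at most $2$. A tree with $\ell$ leaves decomposes into at most $\ell-1$ paths by edge deletions; for $2$-connected or trivial components this gives one path. Counting per component, the number of paths is at most roughly half the total deficiency, so it is at most $2k$ after Step~1. All deleted edges become zero-cost links. Since $F$ together with all free links is exactly $G'$, equivalence holds trivially. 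The real content of this step is the path count.

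\textbf{Main obstacle.} The hard part is matching the path count to the deficiency bound: a component with $\ell$ leaf blocks should yield at most about $\ell$ paths, isolated vertices count $1$, and the total deficiency is at most $2k$. I would try first to charge each path to a leaf block or an isolated vertex, requiring each path endpoint to lie in a leaf block. This is why the cycle orientation in Step~2 must be chosen so that internal blocks are traversed without creating new endpoints. If the naive charging gives $\ell-1$ paths per component plus $1$ per isolated or $2$-connected component, then the inequality $\sum \le 2k$ follows from the deficiency bound $\lceil \text{def}/2\rceil\le k$.
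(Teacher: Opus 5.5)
Your reduction follows the paper's proof. Both replace each block by a cycle with two cut vertices made consecutive, take a spanning forest with exactly one leaf per leaf block, partition it greedily into paths, and turn every discarded edge into a link with $c=\chi=0$. The equivalence rests on the same two facts: $B\setminus\{x\}$ stays connected inside a block, and adding free links is monotone.

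What you call the main obstacle is already settled by your Step 3, provided you make \emph{any} two cut vertices of every non-leaf block consecutive; no global spine is needed. Each such block's path then runs between two cut vertices, and both receive further tree edges from their other blocks. So $S$ has exactly one leaf per leaf block, and your bound of $\ell_C-1$ paths per multi-block component holds (the paper settles for $\ell_C$). One small correction: the equivalence is not ``immediate'' from re-adding removed edges, because the new cycle edges need not belong to $G$. It is your block argument that carries it.

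The step that fails as written is Step 1. You give a $2$-connected component weight $2$ but never test whether $G$ is already $2$-vertex-connected. For such $G$ with $k=0$ your test reports infeasibility, although $A=\emptyset$ is optimal; the paper handles this case first.

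Your justification of the threshold $2k$ is also too weak. Under your own weighting the target graph has deficiency $2$, so ``each link lowers the deficiency by at most $2$'' only gives $|A|\ge(\mathrm{def}-2)/2$. For example, three isolated vertices have deficiency $6$ and need three links, but that argument only certifies two. Use endpoint counting instead, as the paper does. If $G$ is not $2$-vertex-connected, the following sets are pairwise disjoint: $V(B)\setminus\{a_B\}$ for each leaf block of a multi-block component, and the whole vertex set of each single-block component or isolated vertex. Each set needs at least one endpoint of a link leaving it (two, in your weighting for the latter sets), and each link supplies only two endpoints.

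With that fixed, your path count is at most $\sum_C(\ell_C-1)$ plus the number of single-block components and isolated vertices. This is at most the deficiency, hence at most $2k$. It is not ``at most half the deficiency'' as you wrote, but you do not need that.
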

\begin{proof}
Instances on at most two vertices are handled directly, so assume $|V(G)|\geq 3$.
Without loss of generality, we may also assume that $G$ has no self-loops and no parallel edges: self-loops can be deleted, and parallel edges in \(G\) can be replaced by one copy, without changing which link sets make the graph \(2\)-vertex-connected.
If $G$ is already $2$-vertex-connected, then the empty set is optimal.
\Cref{fig:twovca-to-twovca-pf-reduction}(a) shows an example input graph.

For each connected component of $G$, consider its block-cut tree as defined in the preliminaries.
A block $B$ is a \emph{leaf block} if at most one vertex $v\in V(B)$ is a cut vertex of $G$.
Let $\mathcal{L}$ be the set consisting of all leaf blocks and all isolated vertices of $G$, and let $\lambda:=|\mathcal{L}|$.
For $B\in\mathcal{L}$, define $X_B$ as follows.
If $B=\{v\}$ is an isolated vertex, let $X_B:=\{v\}$.
If $B$ is a leaf block in a component with at least two blocks and $a_B$ is the unique cut vertex of $G$ in $B$, then let $X_B:=V(B)\setminus\{a_B\}$.
In the remaining case, $B$ is the unique block of its component, and we let $X_B:=V(B)$.

Since $G$ is not already $2$-vertex-connected, every feasible solution must contain a link with an endpoint in $X_B$ for every $B\in\mathcal{L}$.
The sets $X_B$ are pairwise disjoint, so one link can have endpoints in $X_B$ for at most two members $B\in\mathcal{L}$.
Since a feasible \textsc{2VCA} solution uses at most $k$ links, no feasible solution exists when $\lambda>2k$.

Assume $\lambda\leq 2k$.
Construct a graph $G'$ on $V(G)$ as follows.
For every block $B$ of $G$ with $|V(B)|=2$, keep one edge between the two vertices of $B$.
For every block $B$ with $|V(B)|\geq 3$, replace $B$ by a cycle on $V(B)$.
If $B$ contains at least two cut vertices of $G$, choose the cycle so that two such cut vertices are consecutive on it.
Isolated vertices of $G$ remain isolated in $G'$.
For the example in \Cref{fig:twovca-to-twovca-pf-reduction}(a), this produces the graph shown in \Cref{fig:twovca-to-twovca-pf-reduction}(b).
For every $A\subseteq L$, the graph $(V(G),E(G)\cup A)$ is $2$-vertex-connected if and only if $(V(G),E(G')\cup A)$ is $2$-vertex-connected.
To see this, fix a vertex $x$.
For every block $B$ of $G$, both $G[V(B)]-x$ and $G'[V(B)]-x$ are connected on the vertex set $V(B)\setminus\{x\}$; this is immediate for $K_2$ blocks, and for blocks with at least three vertices because $G[V(B)]$ is $2$-vertex-connected and $G'[V(B)]$ is a cycle.
Moreover, the links of $A$ have the same endpoints in the two graphs.
Therefore any path in $(V(G),E(G)\cup A)-x$ can be converted into a path in $(V(G),E(G')\cup A)-x$ by replacing each maximal subpath inside a block $B$ by a path in $G'[V(B)]-x$ with the same endpoints, and the converse conversion is identical with $G[V(B)]-x$ in place of $G'[V(B)]-x$.
Thus $(V(G),E(G)\cup A)-x$ is connected if and only if $(V(G),E(G')\cup A)-x$ is connected.

We next choose a spanning forest $S$ of $G'$.
For the running example, this is the forest shown in \Cref{fig:twovca-to-twovca-pf-reduction}(c).
Inside a $K_2$ block, take its single edge.
Inside a block with at least three vertices and at least two cut vertices, take the path obtained from its chosen cycle by deleting the edge between the two consecutive cut vertices.
Inside a leaf block with at least three vertices in a component with at least two blocks, take a spanning path whose one endpoint is its unique cut vertex.
Inside a component consisting of a single block with at least three vertices, take any spanning path of its chosen cycle.
An isolated vertex is kept as a one-vertex tree.
Gluing these paths at the cut vertices gives a spanning forest $S$ of $G'$.
In every component with at least two blocks, the resulting tree has one leaf in $X_B$ for each leaf block $B$ and no other leaves.
Indeed, in a non-leaf block, the chosen path has both endpoints at cut vertices of $G$, and each of these endpoints also belongs to another block, so an edge from that other block is incident with it after gluing.
In a leaf block, the endpoint at its unique cut vertex is similarly incident with an edge from another block, while the other endpoint lies in $X_B$ and remains a leaf.

By the standard greedy leaf-stripping argument, a tree with $q$ leaves can be partitioned into at most $q$ vertex-disjoint paths.
Applying this to each component with at least two blocks, and using one path for each single-block component and isolated vertex, partition $V(G)$ into at most $\lambda\leq 2k$ vertex-disjoint paths of $S$.
Let $F$ be the resulting path forest.
For the running example, the red edges in \Cref{fig:twovca-to-twovca-pf-reduction}(d) show this path forest.
Let $L_0$ be the set containing, for every edge $f\in E(G')\setminus E(F)$, one zero-cost link indexed by $f$ and with the same endpoints as $f$.
Extend $c$ to $L_0\uplus L$ by setting $c_f=0$ for each $f\in L_0$.
Define $\chi_f:=0$ for $f\in L_0$ and $\chi_e:=1$ for $e\in L$.
The \textsc{PFA} instance is $(F,L_0\uplus L,k,c,\chi,W)$, which has at most $2k$ paths, link budget $k$, and maximum link cost $W$.
It also satisfies $c_e=0$ whenever $\chi_e=0$.

If $A\subseteq L$ is feasible for \textsc{2VCA}, then $A':=L_0\cup A$ satisfies $(V(F),E(F)\cup A')=(V(G),E(G')\cup A)$.
By the equivalence between $G$ and $G'$, the graph $(V(G),E(G')\cup A)$ is $2$-vertex-connected, and hence $A'$ is feasible for \textsc{PFA}, has cost $c(A)$, and satisfies $\chi(A')=|A|\leq k$.
Conversely, let $A'\subseteq L_0\uplus L$ be feasible for \textsc{PFA}, and set $A:=A'\cap L$.
The graph $(V(G),E(G')\cup A)$ is obtained from $(V(F),E(F)\cup A')$ by adding the links of $L_0$ that are not in $A'$.
Since adding edges preserves $2$-vertex-connectivity, $(V(G),E(G')\cup A)$ is $2$-vertex-connected.
By the equivalence between $G$ and $G'$, the graph $(V(G),E(G)\cup A)$ is $2$-vertex-connected.
Also $c(A)=c(A')$ because all links in $L_0$ have cost $0$.
Moreover, $|A|=\chi(A)\leq\chi(A')\leq k$.
Thus the two instances are equivalent.
\end{proof}

\section{Reducing \textsc{PFA} to \textsc{2VCSS-BP}}\label{subsec:pf-to-bp}

In this section, we reduce \textsc{PFA} to the auxiliary problem \textsc{2VCSS-BP}, which is defined below and solved in \Cref{sec:2VCSSBP,sec:2VCSSBP-evaluation}.
We first contract every path and prove \Cref{lem:vertex_cond_on_contracted_graph}, which characterizes the $2$-vertex-connectivity condition by conditions on the contracted multigraph.
We then optimize over edge sets satisfying the conditions in \Cref{lem:vertex_cond_on_contracted_graph} by dynamic programming over subsets of the contracted vertices.
The main transition of this dynamic programming is formulated as an instance of \textsc{2VCSS-BP}.

\paragraph*{Contracting the graph.}
Let $(G,L,k,c,\chi,W)$ be a \textsc{PFA} instance, and let $\mathcal{P}$ be the set of paths of $G$.
For each $P\in\mathcal{P}$, write $P=(u_{P,1},\dots,u_{P,|P|})$; if $P$ has one vertex, then $P=(u_{P,1})$.
We form a multigraph $H$ by contracting each path of $G$ to a single vertex; see \Cref{fig:twovca-auxiliary-graph-contraction} for an example.
Precisely, $H$ has vertex set $\{v_P\mid P\in\mathcal{P}\}$.
For a link $e\in L$ whose endpoints lie on paths $P$ and $Q$, define $\pi(e)$ to be the edge of $H$ joining $v_P$ and $v_Q$; this is a self-loop when $P=Q$.
We keep parallel edges and regard each edge of $H$ as indexed by the corresponding link of $L$, so $\pi^{-1}(f)$ is defined for every edge $f\in E(H)$.
For $f\in E(H)$, write $c_f:=c_{\pi^{-1}(f)}$ and $\chi_f:=\chi_{\pi^{-1}(f)}$.

\begin{figure}[t]
\centering
\begin{subfigure}[t]{0.49\linewidth}
\centering
\begin{tikzpicture}[
  x=1.06cm,y=1.06cm,
  vertex/.style={circle,fill=black,inner sep=0pt,minimum size=2mm},
  path edge/.style={line width=1.05pt,red!70!black},
  link edge/.style={line width=0.6pt,densely dotted,black},
  link label/.style={fill=white,inner sep=1pt,font=\small},
  path label/.style={font=\small,red!70!black},
]
\coordinate (p11) at (-2.4, 1.25);
\coordinate (p12) at (-1.35,1.25);
\coordinate (p13) at (-0.3, 1.25);
\coordinate (p21) at (-2.4,-0.15);
\coordinate (p22) at (-1.35,-0.15);
\coordinate (p23) at (-0.3,-0.15);
\coordinate (p24) at ( 0.75,-0.15);
\coordinate (p31) at (-2.4,-1.55);
\coordinate (p32) at (-1.35,-1.55);
\coordinate (p33) at (-0.3,-1.55);

\draw[path edge] (p11) -- (p12) -- (p13);
\draw[path edge] (p21) -- (p22) -- (p23) -- (p24);
\draw[path edge] (p31) -- (p32) -- (p33);

\draw[link edge] (p11) to[out=115,in=65,looseness=1.6] node[link label,above] {$e_1$} (p13);
\draw[link edge] (p11) to[out=210,in=150] node[link label,left] {$e_2$} (p21);
\draw[link edge] (p13) to[out=-35,in=35] node[link label,right] {$e_3$} (p23);
\draw[link edge] (p24) to[out=-45,in=25] node[link label,right] {$e_4$} (p33);
\draw[link edge] (p31) to[out=-115,in=-65,looseness=1.6] node[link label,below] {$e_5$} (p33);
\draw[link edge] (p11) to[out=210,in=150,looseness=1.35] node[link label,left] {$e_6$} (p31);

\foreach \p in {p11,p12,p13,p21,p22,p23,p24,p31,p32,p33}
  \node[vertex] at (\p) {};
\node[path label] at (-1.35,1.55) {$P_1$};
\node[path label] at (-0.82,0.15) {$P_2$};
\node[path label] at (-1.35,-1.25) {$P_3$};
\end{tikzpicture}
\caption{}
\label{fig:twovca-auxiliary-graph-before-contraction}
\end{subfigure}\hfill
\begin{subfigure}[t]{0.45\linewidth}
\centering
\begin{tikzpicture}[
  x=1cm,y=1cm,
  hvertex/.style={circle,draw=black,fill=white,inner sep=1pt,minimum size=8mm,font=\small},
  link edge/.style={line width=0.6pt,densely dotted,black},
  link label/.style={fill=white,inner sep=1pt,font=\small},
]
\node[hvertex] (h1) at (0, 1.25) {$v_{P_1}$};
\node[hvertex] (h2) at (0,-0.15) {$v_{P_2}$};
\node[hvertex] (h3) at (0,-1.55) {$v_{P_3}$};

\draw[link edge] (h1) to[out=130,in=50,looseness=7] node[link label,above] {$e_1$} (h1);
\draw[link edge] (h1) to[out=-105,in=105] node[link label,left] {$e_2$} (h2);
\draw[link edge] (h1) to[out=-75,in=75] node[link label,right] {$e_3$} (h2);
\draw[link edge] (h2) -- node[link label,right] {$e_4$} (h3);
\draw[link edge] (h3) to[out=-130,in=-50,looseness=7] node[link label,below] {$e_5$} (h3);
\draw[link edge] (h1) to[out=-160,in=160,looseness=1.45] node[link label,left] {$e_6$} (h3);
\end{tikzpicture}
\caption{}
\label{fig:twovca-auxiliary-graph-after-contraction}
\end{subfigure}
\caption{Construction of the auxiliary multigraph $H$ for a \textsc{PFA} instance. (a) A path forest $G$ with links: red edges are path edges, and black dotted edges are links. (b) Contracting each path $P_i$ to $v_{P_i}$ maps each link $e_j$ to the edge $\pi(e_j)$ of $H$; in the figure, we abuse notation and write $e_j$ also for $\pi(e_j)$. The edges $e_1$ and $e_5$ are self-loops, and $e_2$ and $e_3$ are parallel edges.}
\label{fig:twovca-auxiliary-graph-contraction}
\end{figure}

\paragraph*{Rephrasing the $2$-vertex-connectivity constraint.}
We next express the condition that $(V(G),E(G)\cup A)$ is $2$-vertex-connected as conditions on the multigraph $H$.
Let $A'\subseteq E(H)$, and let $B$ be a block of $(V(H),A')$ with $v_P\in V(B)$.
For $t\in\{2,\dots,|P|-1\}$, we say that $B$ \emph{covers} $t$ \emph{at} $P$ if there are edges $f,f'\in E(B)$ incident with $v_P$, possibly $f=f'$, such that $\pi^{-1}(f)$ has an endpoint $u_{P,i}$ with $i<t$ and $\pi^{-1}(f')$ has an endpoint $u_{P,j}$ with $t<j$.

\begin{figure}[t]
\centering
\begin{subfigure}[t]{0.60\linewidth}
\centering
\begin{tikzpicture}[
  x=1cm,y=1cm,
  vertex/.style={circle,fill=black,inner sep=0pt,minimum size=2mm},
  path edge/.style={line width=1.05pt,red!70!black},
  selected edge/.style={line width=0.65pt,black},
  link label/.style={inner sep=1pt,font=\small},
  path label/.style={font=\small,red!70!black},
  index label/.style={font=\scriptsize,inner sep=1pt},
]
\coordinate (p11) at (-3.0,0);
\coordinate (p12) at (-1.5,0);
\coordinate (p13) at ( 0.0,0);
\coordinate (p14) at ( 1.5,0);
\coordinate (p15) at ( 3.0,0);
\coordinate (p21) at (-2.9,1.55);
\coordinate (p22) at (-1.9,1.55);
\coordinate (p23) at (-0.9,1.55);
\coordinate (p41) at ( 0.0,-1.45);
\coordinate (p42) at ( 1.55,-1.45);
\coordinate (p51) at ( 2.95,-1.45);
\coordinate (p52) at ( 3.0,-1.45);

\draw[path edge] (p11) -- (p12) -- (p13) -- (p14) -- (p15);
\draw[path edge] (p21) -- (p22) -- (p23);
\draw[path edge] (p41) -- (p42);
\draw[path edge] (p51) -- (p52);

\draw[selected edge] (p11) -- node[link label,left] {$e_1$} (p22);
\draw[selected edge] (p23) -- node[link label,right] {$e_2$} (p13);
\draw[selected edge] (p12) to[out=-35,in=-145] node[link label,below,pos=0.38,xshift=-2pt] {$e_3$} (p14);
\draw[selected edge] (p13) -- node[link label,right,pos=0.70] {$e_4$} (p41);
\draw[selected edge] (p42) -- node[link label,below] {$e_5$} (p51);
\draw[selected edge] (p52) -- node[link label,right] {$e_6$} (p15);
\draw[selected edge] (p21) to[out=145,in=35,looseness=1.75] node[link label,above] {$e_7$} (p23);

\foreach \p in {p11,p12,p13,p14,p15,p21,p22,p23,p41,p42,p51,p52}
  \node[vertex] at (\p) {};
\foreach \i/\p in {1/p11,2/p12,3/p13,4/p14,5/p15}
\node[index label,below=2pt] at (\p) {$u_{P_1,\i}$};
\node[path label] at (0,0.32) {$P_1$};
\node[path label] at (-1.9,1.78) {$P_2$};
\node[path label] at (1.3,-1.15) {$P_3$};
\node[path label] at (3.28,-1.15) {$P_4$};
\end{tikzpicture}
\caption{}
\label{fig:vertex-covering-by-several-blocks-original}
\end{subfigure}\hfill
\begin{subfigure}[t]{0.36\linewidth}
\centering
\begin{tikzpicture}[
  x=1cm,y=1cm,
  hvertex/.style={circle,draw=black,fill=white,inner sep=1pt,minimum size=8mm,font=\small},
  selected edge/.style={line width=0.65pt,black},
  edge label/.style={inner sep=1pt,font=\scriptsize},
  block label/.style={font=\small},
]
\node[hvertex] (p1) at (0,0) {$v_{P_1}$};
\node[hvertex] (p2) at (0,1.9) {$v_{P_2}$};
\node[hvertex] (p4) at (-0.95,-1.25) {$v_{P_3}$};
\node[hvertex] (p5) at (0.95,-1.25) {$v_{P_4}$};

\draw[selected edge] (p1) to[out=125,in=-125,looseness=1.15] node[edge label,left] {$e_1$} (p2);
\draw[selected edge] (p1) to[out=55,in=-55,looseness=1.15] node[edge label,right] {$e_2$} (p2);
\draw[selected edge] (p1) to[out=220,in=140,looseness=8] node[edge label,below left] {$e_3$} (p1);
\draw[selected edge] (p1) -- node[edge label,left] {$e_4$} (p4);
\draw[selected edge] (p4) -- node[edge label,below] {$e_5$} (p5);
\draw[selected edge] (p5) -- node[edge label,right] {$e_6$} (p1);
\draw[selected edge] (p2) to[out=145,in=35,looseness=7] node[edge label,above] {$e_7$} (p2);
\node[block label] at (0,0.95) {$B_1$};
\node[block label] at (-0.88,0.05) {$B_2$};
\node[block label] at (0,-0.9) {$B_3$};
\node[block label] at (0,2.55) {$B_0$};
\end{tikzpicture}
\caption{}
\label{fig:vertex-covering-by-several-blocks-auxiliary}
\end{subfigure}
\caption{A selected link set satisfying the conditions of \Cref{lem:vertex_cond_on_contracted_graph}, shown before and after contracting the paths. (a) The path forest before contraction: red edges are path edges, and black edges are the selected links $A=\{e_1,\ldots,e_7\}$. (b) The contracted graph $H_A=(V(H),\pi(A))$. At $P_1$, the blocks $B_1$, $B_2$, and $B_3$ cover the indices $2$, $3$, and $4$, respectively, meaning the vertices $u_{P_1,2}$, $u_{P_1,3}$, and $u_{P_1,4}$. The self-loop block $B_0$ covers index $2$ at $P_2$.}
\label{fig:vertex-covering-by-several-blocks}
\end{figure}
The condition in the following lemma is illustrated in \Cref{fig:vertex-covering-by-several-blocks}.

\begin{lemma}\label{lem:vertex_cond_on_contracted_graph}
Assume $|V(G)|\geq 3$.
For $A\subseteq L$, let $H_A:=(V(H),\pi(A))$.
Then $(V(G),E(G)\cup A)$ is $2$-vertex-connected if and only if the following conditions hold:
\begin{enumerate}
    \item[\rm{(V1)}] $H_A$ is bridgeless,
    \item[\rm{(V2)}] for each $P\in\mathcal{P}$ and each $t\in\{2,\dots,|P|-1\}$, some block $B$ of $H_A$ with $v_P\in V(B)$ covers $t$ at $P$, and
    \item[\rm{(V3)}] for each $P\in\mathcal{P}$, if $|P|=1$, then at most one block of $H_A$ contains $v_P$, and if $|P|\geq 2$, then every block $B$ of $H_A$ with $v_P\in V(B)$ has edges $f,f'\in E(B)$ incident with $v_P$, possibly $f=f'$, such that $\pi^{-1}(f)$ and $\pi^{-1}(f')$ have endpoints at distinct vertices of $P$.
\end{enumerate}
\end{lemma}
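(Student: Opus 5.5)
Throughout, write \(G_A:=(V(G),E(G)\cup A)\). The plan is to prove both directions through one bridge between \(G_A\) and \(H_A\): a correspondence between cycles and blocks of the two graphs. Every cycle \(C\) of \(G_A\) that uses at least one link projects, by contracting each path segment of \(C\), to a closed walk in \(H_A\). This closed walk uses the links of \(C\), each exactly once, so its edges lie in a single block of \(H_A\); the degenerate case is one self-loop, which is its own block. Conversely, any cycle of \(H_A\) through \(v_P\), or any self-loop at \(v_P\), lifts to a cycle of \(G_A\). To lift, we connect consecutive link endpoints on each visited path \(Q\) by the subpath of \(Q\) between them, provided those endpoints are distinct. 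The lifted cycle passes through exactly the vertices of \(P\) between the entry and exit endpoints, so it covers the indices strictly between them.

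For the forward direction, assume \(G_A\) is \(2\)-vertex-connected; we verify (V1)--(V3) in turn.
\begin{itemize}
\item[(V1)] Suppose \(f\in\pi(A)\) is a bridge of \(H_A\). Then \(\pi^{-1}(f)\) is a bridge of \(G_A\), since every cycle of \(G_A\) through it would project to a closed walk using \(f\) exactly once. A bridge contradicts \(2\)-vertex-connectivity when \(|V(G)|\ge 3\). Connectivity of \(H_A\) is clear from connectivity of \(G_A\).
\item[(V2)] If index \(t\) is not covered, then no cycle of \(G_A\) through \(u_{P,t}\) can pass from one side of \(u_{P,t}\) on \(P\) to the other while avoiding \(u_{P,t}\). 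To see this, I would take a path in \(G_A-u_{P,t}\) from \(u_{P,t-1}\) to \(u_{P,t+1}\). Together with the two path edges at \(u_{P,t}\), it forms a cycle of \(G_A\). Its projection gives a closed walk in \(H_A\) leaving \(v_P\) from the left side and returning on the right side, so all its edges lie in one block \(B\). Hence \(B\) covers \(t\) at \(P\).
\item[(V3)] The argument has the same shape. If \(|P|=1\) and \(v_P\) lies in two blocks, then \(u_{P,1}\) is a cut vertex of \(G_A\). If \(|P|\ge2\) and some block \(B\) at \(v_P\) meets \(P\) only at a single vertex \(u\), then \(u\) separates the part of \(G_A\) lifted from \(B\) (away from \(P\)) from the rest of \(P\).
\end{itemize}

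For the backward direction, assume (V1)--(V3) hold. We show that \(G_A-x\) is connected for every vertex \(x\), say \(x=u_{P,t}\). All other paths \(Q\neq P\) remain intact. Two facts then give connectivity. First, each block of \(H_A\) lifts to a connected subgraph of \(G_A\) that avoids \(x\), or else remains connected after deleting \(x\) whenever \(x\) is its only attachment point in \(P\). Second, \(H_A\) with the vertex \(v_P\) split according to the blocks at \(v_P\) is connected, because the block-cut tree is a tree.

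The pieces of \(P-x\) are then handled by the hypotheses. The prefix \(u_{P,1},\dots,u_{P,t-1}\) and the suffix \(u_{P,t+1},\dots\) are each connected. They are joined through the block \(B\) that covers \(t\), which exists by (V2). Every other block at \(v_P\) attaches to \(P\) at two distinct vertices by (V3), so at least one attachment survives the deletion of \(x\). The case of a one-vertex path \(P\) uses the first clause of (V3): \(v_P\) is not a cut vertex of \(H_A\), and (V1) makes the removal safe. Endpoints \(t\in\{1,|P|\}\) need only (V3). Finally, connectivity of \(G_A\) itself follows from (V1).

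The main obstacle is the lifting claim. A block of \(H_A\) need not lift to a \(2\)-connected piece, since consecutive links at a path vertex may share an endpoint. I would handle this through ear decompositions of the bridgeless block: lift each ear, and argue that removing a single vertex \(u_{Q,s}\) leaves the lift connected. This holds because the subpaths on \(Q\) are sub-intervals, and ears keep every lifted link attached on both sides. The bookkeeping for self-loops and parallel edges, which are both blocks of size \(\le2\), needs separate brief cases.
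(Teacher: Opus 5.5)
Your forward proof of (V2) relies on the claim from your first paragraph that the projection of a cycle of $G_A$ is a closed walk whose edges all lie in one block of $H_A$. That claim is false. A closed walk using each edge once may pass through a cut vertex several times, as in a figure-eight. A cycle of $G_A$ that meets $P$ in several disjoint segments projects to exactly such a walk through $v_P$.

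For instance, take $P=u_1\cdots u_5$, $Q=q_1q_2$, $R=r_1r_2$ and links $u_2q_1,q_2u_1,u_1r_1,r_2u_4,r_2u_5$. Then $G_A$ is $2$-vertex-connected, and $u_2q_1q_2u_1r_1r_2u_4$ is a legitimate path from $u_2$ to $u_4$ in $G_A-u_3$. Its projection leaves $v_P$ from the left through the block on $\{v_P,v_Q\}$, which covers no index of $P$. It returns on the right through the different block on $\{v_P,v_R\}$. So neither ``all edges lie in one block'' nor even ``the first and last edges share a block'' follows.

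The missing idea is to shorten the path to the excursion from its last visit to $\{u_{P,1},\dots,u_{P,t-1}\}$ up to its first subsequent visit to $\{u_{P,t+1},\dots,u_{P,|P|}\}$. Its interior avoids $V(P)$, so its projection meets $v_P$ only at its two ends. Then its first and last links lie in a common block at $v_P$ (or are a single self-loop), and that block covers $t$. This is exactly the paper's argument.

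In the backward direction, your final paragraph has the right skeleton, and it matches the paper:
\begin{itemize}
\item the prefix and suffix of $P-u_{P,t}$ are joined by the covering block from (V2);
\item every other block at $v_P$ survives by (V3);
\item the case $|P|=1$ uses the first clause of (V3).
\end{itemize}
Two supporting claims need replacing, though.
\begin{itemize}
\item The statement that $H_A$ with $v_P$ split according to its blocks is connected is false whenever $v_P$ is a cut vertex: the split graph has one component per block at $v_P$. What you actually need is that all edges between $v_P$ and a fixed component of $H_A-v_P$ belong to a single block, by maximality of blocks. You also need that each component of $H_A-v_P$, and $B-v_P$ for the covering block $B$, lifts to a connected subgraph of $G_A-V(P)$.
\item The ear-decomposition plan in your last paragraph is unnecessary. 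The lifting fact above is immediate, because deleting $u_{P,t}$ leaves every path $Q\neq P$ intact. No lifted block ever has to survive the deletion of a vertex of a path other than $P$.
\end{itemize}
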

\begin{proof}
Assume first that $(V(G),E(G)\cup A)$ is $2$-vertex-connected.
Since this graph has at least three vertices, it is $2$-edge-connected.
Contracting connected paths preserves connectedness and cannot create a bridge, so \(H_A\) is bridgeless.
Thus (V1) holds.

Fix $P$.
We first prove (V2) at $P$.
If $|P|\leq 2$, there is nothing to prove.
Assume $|P|\geq 3$, and take any $t\in\{2,\dots,|P|-1\}$.
Since $(V(G),E(G)\cup A)-u_{P,t}$ is connected, there is a path $R$ in $(V(G),E(G)\cup A)-u_{P,t}$ from $u_{P,t-1}$ to $u_{P,t+1}$.
Along $R$, take the segment $R_0$ from the last visit to $\{u_{P,1},\dots,u_{P,t-1}\}$ before the first subsequent visit to $\{u_{P,t+1},\dots,u_{P,|P|}\}$.
Then the internal vertices of $R_0$ do not belong to $P$.
Let $e^-$ and $e^+$ be the first and last links of $R_0$, possibly the same link.
Their endpoints on $P$ lie in $\{u_{P,1},\dots,u_{P,t-1}\}$ and $\{u_{P,t+1},\dots,u_{P,|P|}\}$, respectively.
After contracting the paths, $R_0$ gives a closed walk from $v_P$ to $v_P$ with no internal occurrence of $v_P$, so $\pi(e^-)$ and $\pi(e^+)$ belong to one block of $H_A$ containing $v_P$.
This block covers $t$ at $P$.

We next prove (V3) at $P$.
If $|P|=1$, then more than one block containing $v_P$ would make $(V(G),E(G)\cup A)-u_{P,1}$ disconnected, a contradiction.
Thus at most one block contains $v_P$.
Now assume $|P|>1$.
A self-loop block at $v_P$ comes from a link whose two endpoints are distinct vertices of $P$.
Thus it satisfies (V3).
Consider a non-self-loop block $B$ with $v_P\in V(B)$.
Suppose that every edge of $E(B)$ incident with $v_P$ comes from a link incident with the same vertex $u_{P,s}$.
Then deleting $u_{P,s}$ separates $B-v_P$ from the rest of $(V(G),E(G)\cup A)$, contradicting the connectedness of $(V(G),E(G)\cup A)-u_{P,s}$.
Hence $B$ has two edges whose preimages have endpoints at distinct vertices of $P$.
Thus (V3) holds at $P$.

Conversely, assume (V1), (V2), and (V3).
By (V1), $H_A$ is connected, and therefore $(V(G),E(G)\cup A)$ is connected.
Fix a vertex $u_{P,t}$ of $(V(G),E(G)\cup A)$.
Every component of $H_A-v_P$ is attached to $v_P$ through a block containing $v_P$ and at least one vertex of that component.
If $|P|=1$, (V3) says that there is at most one such block, so $H_A-v_P$ is connected, and hence $(V(G),E(G)\cup A)-u_{P,1}$ is connected.
Assume $|P|>1$.
Contract every path $Q\neq P$ and each of the at most two maximal nonempty subpaths of $P-u_{P,t}$.
We show that the resulting graph is connected.
By (V3), every block containing $v_P$ and a vertex of a component of $H_A-v_P$ has incident edges whose preimages have endpoints at distinct vertices of $P$.
Thus, after deleting $u_{P,t}$, each component of $H_A-v_P$ is adjacent to one of these contracted subpaths.
If $t=1$ or $t=|P|$, there is only one contracted subpath, so all components of $H_A-v_P$ are connected to it.
If $2\leq t\leq |P|-1$, by (V2), some block covers $t$ at $P$.
The two edges certifying this cover have endpoints in the two contracted subpaths, and the block gives a connection between those edges avoiding $v_P$.
If the covering block is a self-loop block, then this self-loop itself gives the connection.
Hence, in all cases, the contracted subpaths and all components of $H_A-v_P$ lie in one connected component.
Thus the contracted graph of $(V(G),E(G)\cup A)-u_{P,t}$ is connected, and expanding the contracted connected subpaths shows that $(V(G),E(G)\cup A)-u_{P,t}$ is connected.
Thus deleting any vertex of $(V(G),E(G)\cup A)$ leaves a connected graph.
Therefore $(V(G),E(G)\cup A)$ is $2$-vertex-connected.
\end{proof}

\paragraph*{States of dynamic programming.}

From this point on, we solve \textsc{PFA} as a problem on the contracted graph $H$.
By \Cref{lem:vertex_cond_on_contracted_graph}, it suffices to find an edge set of $H$ satisfying (V1)--(V3) under the two weight bounds.
The graph $H$ has one vertex for each path of the input path forest, so $|V(H)|=|\mathcal{P}|\leq p$ for \textsc{PFA}.
We use dynamic programming over subsets of $V(H)$ and intervals on a path.

For $e\in E(H)$, write $\mathbf w_e:=(\chi_e,c_e)$, and for $A\subseteq E(H)$, write $\mathbf w(A):=\sum_{e\in A}\mathbf w_e$.
Let $\Omega:=\{0,\dots,k\}\times\{0,\dots,kW\}$, ordered componentwise, that is, $(\alpha,\beta)\leq(\alpha',\beta')$ means $\alpha\leq\alpha'$ and $\beta\leq\beta'$.
Here the first coordinate $\alpha$ and the second coordinate $\beta$ record the two weight-coordinate thresholds.
The cap $kW$ is sufficient because links with $\chi_e=0$ have $c_e=0$.
All dynamic programming entries below are subsets of $\Omega$.
Thus $(\alpha,\beta)$ belongs to an entry exactly when the corresponding object can be realized by an edge set $A$ with $\mathbf w(A)\leq(\alpha,\beta)$.
In particular, every entry is upward closed: if $(\alpha,\beta)$ belongs to an entry and $(\alpha,\beta)\leq(\alpha',\beta')\in\Omega$, then $(\alpha',\beta')$ also belongs to the entry.
For two entries $R,S\subseteq\Omega$, define
\[
    R\boxplus S:=
    \{\mathbf q\in\Omega:\text{there are }\mathbf q_1\in R,\ \mathbf q_2\in S
    \text{ with }\mathbf q_1+\mathbf q_2\leq\mathbf q\}.
\]

Let $X\subseteq V(H)$ and $v_P\in X$.
Assume $|P|>1$.
For $2\leq i\leq j\leq |P|-1$, define $a_{X,v_P,i,j}$ as follows.
The entry $a_{X,v_P,i,j}$ consists of all pairs $(\alpha,\beta)\in\Omega$ for which there is an edge subset $A\subseteq E(H[X])$ with $\mathbf w(A)\leq(\alpha,\beta)$ such that
\begin{itemize}
    \item $(X,A)$ is bridgeless,
    \item for each $v_Q\in X\setminus\{v_P\}$, the conditions in (V2) and (V3) hold at $Q$ in $(X,A)$,
    \item there is a unique block $B$ of $(X,A)$ with $v_P\in V(B)$, and
    \item the condition in (V3) holds at $P$ for $B$, and $B$ covers every index $t$ with $i\leq t\leq j$ at $P$.
\end{itemize}
We also use $a_{X,v_P,1,1}$ for the same definition with no covering requirement at $P$; the unique block containing $v_P$ is still required to satisfy (V3) at $P$.
See \Cref{fig:vertex-a-state-example} for an example of a solution counted by an $a$-value.

\begin{figure}[t]
\centering
\begin{subfigure}[t]{0.60\linewidth}
\centering
\begin{tikzpicture}[
  x=1cm,y=1cm,
  vertex/.style={circle,fill=black,inner sep=0pt,minimum size=2mm},
  path edge/.style={line width=1.05pt,red!70!black},
  selected edge/.style={line width=0.65pt,black},
  path label/.style={font=\small,red!70!black},
  index label/.style={font=\scriptsize,inner sep=1pt},
  block label/.style={font=\scriptsize,inner sep=1pt},
]
\coordinate (p1) at (-3.0, 2.0);
\coordinate (p2) at (-1.5, 2.0);
\coordinate (p3) at ( 0.0, 2.0);
\coordinate (p4) at ( 1.5, 2.0);
\coordinate (p5) at ( 3.0, 2.0);
\coordinate (q11) at (-1.50,1.40);
\coordinate (q12) at (-0.95,0.85);
\coordinate (q13) at (-0.40,0.30);
\coordinate (q21) at (-1.50,-0.25);
\coordinate (q22) at (-0.95,-0.25);
\coordinate (q31) at ( 0.40,0.30);
\coordinate (q32) at ( 1.50,0.85);
\coordinate (q33) at ( 2.60,1.40);
\coordinate (q41) at ( 0.40,-0.25);
\coordinate (q42) at ( 2.60,-0.25);

\draw[path edge] (p1) -- (p2) -- (p3) -- (p4) -- (p5);
\draw[path edge] (q11) -- (q12) -- (q13);
\draw[path edge] (q21) -- (q22);
\draw[path edge] (q31) -- (q32) -- (q33);
\draw[path edge] (q41) -- (q42);

\draw[selected edge] (p2) -- (q11);
\draw[selected edge] (q13) -- (q31);
\draw[selected edge] (q32) -- (p4);
\draw[selected edge] (q11) -- (q21);
\draw[selected edge] (q12) -- (q22);
\draw[selected edge] (q31) -- (q41);
\draw[selected edge] (q33) -- (q42);

\foreach \v in {p1,p2,p3,p4,p5,q11,q12,q13,q21,q22,q31,q32,q33,q41,q42}
  \node[vertex] at (\v) {};
\foreach \i/\v in {1/p1,2/p2,3/p3,4/p4,5/p5}
  \node[index label,above=2pt] at (\v) {$u_{P,\i}$};
\node[path label] at (-3.55,2.0) {$P$};
\node[path label] at (-1.16,0.72) {$Q_1$};
\node[path label] at ($(q21)!0.5!(q22)+(0,-0.33)$) {$Q_2$};
\node[path label] at (2.02,0.72) {$Q_3$};
\node[path label] at (1.50,-0.58) {$Q_4$};
\end{tikzpicture}
\caption{}
\label{fig:vertex-a-state-example-original}
\end{subfigure}\hfill
\begin{subfigure}[t]{0.36\linewidth}
\centering
\begin{tikzpicture}[
  x=1cm,y=1cm,
  hvertex/.style={circle,draw=black,fill=white,inner sep=1pt,minimum size=8mm,font=\small},
  root block/.style={black,line width=1pt},
  side block one/.style={black,line width=1pt},
  side block two/.style={black,line width=1pt},
  side block three/.style={green!55!black,line width=1pt},
  block label/.style={font=\scriptsize},
]
\node[hvertex] (p) at (0,1.82) {$v_P$};
\node[hvertex] (q1) at (-1.05,0.00) {$v_{Q_1}$};
\node[hvertex] (q2) at (-1.05,-1.35) {$v_{Q_2}$};
\node[hvertex] (q3) at (1.05,0.00) {$v_{Q_3}$};
\node[hvertex] (q4) at (1.05,-1.35) {$v_{Q_4}$};

\draw[root block] (p)--(q1)--(q3)--(p);
\draw[side block one] (q1) to[bend left=13] (q2);
\draw[side block one] (q1) to[bend right=13] (q2);
\draw[side block two] (q3) to[bend left=10] (q4);
\draw[side block two] (q3) to[bend right=10] (q4);
\node[block label] at (0.00,0.62) {$B$};
\node[block label] at (-1.45,-0.68) {$C_1$};
\node[block label] at (1.45,-0.68) {$C_2$};
\end{tikzpicture}
\caption{}
\label{fig:vertex-a-state-example-contracted}
\end{subfigure}
\caption{An example of a solution counted by an $a$-value for \textsc{PFA}. In (a), the selected links on the path forest form a solution counted by $a_{X,v_P,3,3}$. After contraction, (b) has a unique block $B$ containing the root vertex $v_P$; the other blocks hang below $B$. The block $B$ covers index $3$ at $P$, but it does not cover every internal index of $P$. At $Q_1$, the block $B$ itself covers the internal index, while the lower block $C_1$ does not. At $Q_3$, the block $B$ does not cover the internal index, while $C_2$ does.}
\label{fig:vertex-a-state-example}
\end{figure}

The following lemma gives the initial values for the dynamic programming.
\begin{lemma}\label{lem:vertex_a_singlevertex}
Assume $|P|>1$.
For $2\leq i\leq j\leq |P|-1$, the entry $a_{\{v_P\},v_P,i,j}$ is the union, over all links $e$ with endpoints $u_{P,\ell}$ and $u_{P,r}$ for some $\ell,r$ with $\ell<i\leq j<r$, of the upward-closed sets $\{(\alpha,\beta)\in\Omega:\mathbf w_e\leq(\alpha,\beta)\}$.
The entry $a_{\{v_P\},v_P,1,1}$ is defined in the same way, but the union ranges over all links $e$ with endpoints $u_{P,\ell}$ and $u_{P,r}$ for some $\ell<r$.
\end{lemma}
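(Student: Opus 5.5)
We unfold the definition of $a_{X,v_P,i,j}$ with $X=\{v_P\}$. Here $E(H[\{v_P\}])$ consists exactly of the self-loops at $v_P$, that is, the edges $\pi(e)$ for links $e$ with both endpoints on $P$. By the PFA definition, the endpoints of every link form a $2$-element set, so each such link joins $u_{P,\ell}$ and $u_{P,r}$ for some $\ell<r$. We prove both inclusions. The key structural fact is that every self-loop forms its own self-loop block under the paper's block convention.

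For the inclusion ``$\supseteq$'', let $e$ be such a link with $\ell<i\le j<r$, and take $A=\{\pi(e)\}$. The graph $(\{v_P\},A)$ is connected with one vertex and has no bridge, since a self-loop is never a bridge. The set $X\setminus\{v_P\}$ is empty, so those conditions are vacuous. There is exactly one block, namely the self-loop block of $\pi(e)$, so the block containing $v_P$ is unique. That block satisfies (V3) at $P$ with $f=f'=\pi(e)$, because the endpoints $u_{P,\ell}$ and $u_{P,r}$ are distinct. It covers every $t\in\{i,\dots,j\}$ at $P$, again with $f=f'=\pi(e)$, because $\ell<t<r$. Hence every $(\alpha,\beta)\ge\mathbf w_e$ lies in the entry. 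For $a_{\{v_P\},v_P,1,1}$ there is no covering requirement, so the same argument works for any $\ell<r$.

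For the inclusion ``$\subseteq$'', take $(\alpha,\beta)$ in the entry and a witnessing set $A$ with $\mathbf w(A)\le(\alpha,\beta)$. Every edge of $A$ is a self-loop at $v_P$, and each self-loop is a separate block. So uniqueness of the block containing $v_P$ forces $|A|\le1$.

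The main subtle point is excluding $A=\emptyset$. We argue that the edgeless one-vertex graph then has no block containing $v_P$: self-loop blocks need a loop, and non-self-loop blocks need at least two vertices. This contradicts the requirement that a unique block exist. Thus $A=\{\pi(e)\}$, and the covering condition, applied with the only available edges $f=f'=\pi(e)$, gives $\ell<t<r$ for all $t\in[i,j]$, that is, $\ell<i\le j<r$. Finally, $\mathbf w_e=\mathbf w(A)\le(\alpha,\beta)$, so $(\alpha,\beta)$ lies in the stated union.

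For the case $(1,1)$, the same reasoning applies with only (V3) checked. Distinct endpoints always hold for a link, so the union runs over all links on $P$ with $\ell<r$.
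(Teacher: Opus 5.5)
Your proof is correct and follows essentially the same route as the paper's. In a one-vertex graph the only possible blocks are self-loop blocks, so uniqueness forces a single loop $\pi(e)$, and for that loop (V3) and the covering condition reduce to $\ell<r$ and $\ell<i\le j<r$. You are more explicit than the paper in ruling out $A=\emptyset$ and $|A|\ge 2$, which the paper leaves implicit.
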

\begin{proof}
Since $H[\{v_P\}]$ has no vertex other than $v_P$, the unique block containing $v_P$ must be a self-loop block.
For $2\leq i\leq j\leq |P|-1$, such a block satisfies the last two conditions in the definition exactly when its edge is $\pi(e)$ for a link $e$ with endpoints $u_{P,\ell}$ and $u_{P,r}$ for some $\ell,r$ with $\ell<i\leq j<r$.
For $a_{\{v_P\},v_P,1,1}$, only the requirement $\ell<r$ remains.
Conversely, every edge described in the statement forms a self-loop block, and the resulting one-vertex graph is bridgeless and satisfies the relevant conditions; it contributes exactly the set of pairs $(\alpha,\beta)\in\Omega$ with $\mathbf w_e\leq(\alpha,\beta)$.
\end{proof}

Assume $|P|>1$.
For $2\leq i\leq j\leq |P|-1$, define $b_{X,v_P,i,j}\subseteq\Omega$ as the set of pairs $(\alpha,\beta)$ for which there is an edge subset $A\subseteq E(H[X])$ with $\mathbf w(A)\leq(\alpha,\beta)$ such that
\begin{itemize}
    \item $(X,A)$ is bridgeless,
    \item for each $v_Q\in X\setminus\{v_P\}$, the conditions in (V2) and (V3) hold at $Q$ in $(X,A)$,
    \item the condition in (V3) holds at $P$ in $(X,A)$, and
    \item for every $t\in\{2,\dots,|P|-1\}\setminus\{i,\dots,j\}$, some block $B$ of $(X,A)$ with $v_P\in V(B)$ covers $t$ at $P$.
\end{itemize}
We define $b_{X,v_P,1,1}$ by requiring every index in $\{2,\dots,|P|-1\}$ to be covered at $P$.
For notational convenience in the recurrences below, when $i>j$ we use $b_{X,v_P,i,j}$ as shorthand for $b_{X,v_P,1,1}$.
This shorthand corresponds to leaving no uncovered internal index of $P$.

\paragraph*{Algorithm output.}

Before giving the dynamic programming recurrences, we explain how to obtain a solution to \textsc{PFA} from the defined $b$-values.
If some path $P$ satisfies $|P|>1$, then an edge set attaining $b_{V(H),v_P,1,1}$ satisfies (V1), (V2), and (V3) in \Cref{lem:vertex_cond_on_contracted_graph}.
Hence the optimum value is the minimum $\beta$ such that $(k,\beta)\in b_{V(H),v_P,1,1}$.
Assume all paths are singleton paths.
We guess a link $e$ in an optimum solution, delete it from $L$, and regard it as the only edge of a new two-vertex path.
Solving the resulting instance with budget $k-\chi_e$ and adding $c_e$ gives the optimum among solutions containing $e$.
Taking the minimum over all links costs only a polynomial factor and reduces the case to one with a path on two vertices.

\begin{figure}[t]
\centering
\begin{subfigure}[t]{0.60\linewidth}
\centering
\begin{tikzpicture}[
  x=0.65cm,y=0.65cm,
  vertex/.style={circle,fill=black,inner sep=0pt,minimum size=2mm},
  path edge/.style={line width=1.05pt,red!70!black},
  selected edge/.style={line width=0.65pt,black},
  path label/.style={font=\small,red!70!black},
  index label/.style={font=\scriptsize,inner sep=1pt},
]
\foreach \i/\x in {1/0,2/1.27,3/2.54,4/3.81,5/5.08,6/6.35,7/7.62,8/8.89,9/10.16}
  \coordinate (p\i) at (\x,3.85);
\coordinate (q11) at (0.00,2.25);
\coordinate (q12) at (1.45,2.25);
\coordinate (q13) at (2.90,2.25);
\coordinate (q21) at (1.45,0.85);
\coordinate (q22) at (2.90,0.85);
\coordinate (q31) at (3.45,2.25);
\coordinate (q32) at (4.35,2.25);
\coordinate (q41) at (5.08,2.25);
\coordinate (q42) at (7.62,2.25);
\coordinate (q51) at (8.89,2.25);
\coordinate (q52) at (10.16,2.25);

\draw[path edge] (p1)--(p2)--(p3)--(p4)--(p5)--(p6)--(p7)--(p8)--(p9);
\draw[path edge] (q11)--(q12)--(q13);
\draw[path edge] (q21)--(q22);
\draw[path edge] (q31)--(q32);
\draw[path edge] (q41)--(q42);
\draw[path edge] (q51)--(q52);
\draw[selected edge] (p1)--(q11);
\draw[selected edge] (q12)--(p3);
\draw[selected edge] (q11)--(q21);
\draw[selected edge] (q13)--(q22);
\draw[selected edge] (p2)--(q31);
\draw[selected edge] (q32)--(p4);
\draw[selected edge] (p5)--(q41);
\draw[selected edge] (q42)--(p7);
\draw[selected edge] (p7)--(q51);
\draw[selected edge] (q52)--(p9);

\foreach \v in {p1,p2,p3,p4,p5,p6,p7,p8,p9,q11,q12,q13,q21,q22,q31,q32,q41,q42,q51,q52}
  \node[vertex] at (\v) {};
\foreach \i in {1,...,9}
  \node[index label,above=2pt] at (p\i) {$u_{P,\i}$};
\node[path label] at (3.18,3.54) {$P$};
\node[path label] at (1.18,1.92) {$Q_1$};
\node[path label] at (2.18,0.58) {$Q_2$};
\node[path label] at (3.90,1.92) {$Q_3$};
\node[path label] at (6.35,1.92) {$Q_4$};
\node[path label] at (9.52,1.92) {$Q_5$};
\end{tikzpicture}
\caption{}
\label{fig:vertex-b-state-example-original}
\end{subfigure}\hfill
\begin{subfigure}[t]{0.36\linewidth}
\centering
\begin{tikzpicture}[
  x=0.82cm,y=0.82cm,
  hvertex/.style={circle,draw=black,fill=white,inner sep=1pt,minimum size=8mm,font=\small},
  left block one/.style={black,line width=1pt},
  left block two/.style={black,line width=1pt},
  right block/.style={black,line width=1pt},
  redundant block/.style={black,line width=1pt},
  note/.style={font=\scriptsize},
]
\node[hvertex] (p) at (0,1.75) {$v_P$};
\node[hvertex] (q1) at (-2.40,0.20) {$v_{Q_1}$};
\node[hvertex] (q2) at (-2.40,-1.10) {$v_{Q_2}$};
\node[hvertex] (q3) at (-0.80,0.20) {$v_{Q_3}$};
\node[hvertex] (q4) at (0.80,0.20) {$v_{Q_4}$};
\node[hvertex] (q5) at (2.40,0.20) {$v_{Q_5}$};

\draw[left block one] (p) to[bend left=12] (q1);
\draw[left block one] (p) to[bend right=12] (q1);
\draw[left block one] (q1) to[bend left=12] (q2);
\draw[left block one] (q1) to[bend right=12] (q2);
\draw[left block two] (p) to[bend left=12] (q3);
\draw[left block two] (p) to[bend right=12] (q3);
\draw[redundant block] (p) to[bend left=12] (q4);
\draw[redundant block] (p) to[bend right=12] (q4);
\draw[right block] (p) to[bend left=12] (q5);
\draw[right block] (p) to[bend right=12] (q5);
\node[note] at (-1.70,1.34) {$B_1$};
\node[note] at (-2.82,-0.45) {$C$};
\node[note] at (-0.42,0.92) {$B_3$};
\node[note] at (0.42,0.92) {$B_4$};
\node[note] at (1.70,1.34) {$B_5$};
\end{tikzpicture}
\caption{}
\label{fig:vertex-b-state-example-contracted}
\end{subfigure}
\caption{An example of a solution counted by a $b$-value for \textsc{PFA}. In (a), the selected links form a solution counted by $b_{X,v_P,4,7}$. It is not counted by $b_{X,v_P,4,5}$, because index $7$ is not covered at $P$. After contraction, (b) has several blocks containing the root vertex $v_P$: the blocks $B_1$ and $B_3$ cover the required left-side indices $2$ and $3$ at $P$, while $B_5$ covers the required right-side index $8$. The block $B_4$ covers the middle index $6$, but it does not contribute to the indices required by this $b$-value. For suitable vertex sets $Y_1,Y_3,Y_4,Y_5$, the subgraphs attached to $v_P$ through $B_1,B_3,B_4,B_5$ are counted by $a_{Y_1,v_P,2,2}$, $a_{Y_3,v_P,3,3}$, $a_{Y_4,v_P,6,6}$, and $a_{Y_5,v_P,8,8}$, respectively; the $b$-value is obtained by gluing such subgraphs at $v_P$.}
\label{fig:vertex-b-state-example}
\end{figure}

\paragraph*{Computing $b$ from $a$.}

Fix $v_P$ with $|P|>1$.
Fix $X\subseteq V(H)$ with $v_P\in X$.
We compute $b_{X,v_P,\cdot,\cdot}$ by gluing $a$-entries at the common vertex $v_P$.
Each glued entry is indexed by a subset $Y\subseteq X$ with $v_P\in Y$; an entry $a_{Y,v_P,\ell,r}$ with $2\leq \ell\leq r\leq |P|-1$ represents a subgraph whose unique block containing $v_P$ covers every index $t$ with $\ell\leq t\leq r$ at $P$.

First define $b^{\mathrm{free}}_{X,v_P}\subseteq\Omega$ by families of entries $a_{Y,v_P,1,1}$, which impose no covering requirement at $P$, for which the sets $Y\setminus\{v_P\}$ over the chosen entries form a partition of $X\setminus\{v_P\}$.
It is computed by
\[
    b^{\mathrm{free}}_{\{v_P\},v_P}=\Omega,
\]
and, for $X\neq\{v_P\}$,
\[
    b^{\mathrm{free}}_{X,v_P}
    =
    \bigcup_{\substack{Y\subsetneq X\\ v_P\in Y}}
    \left(
        b^{\mathrm{free}}_{Y,v_P}
        \boxplus
        a_{(X\setminus Y)\cup\{v_P\},v_P,1,1}
    \right)
\]
For $1\leq t\leq |P|-1$, define $b^{\mathrm{left}}_{X,v_P,t}\subseteq\Omega$ by families of $a$-entries that cover every index $s\in\{2,\dots,|P|-1\}$ with $s\leq t$.
For $2\leq t\leq |P|$, define $b^{\mathrm{right}}_{X,v_P,t}$ symmetrically, requiring every index $s\in\{2,\dots,|P|-1\}$ with $t\leq s$ to be covered.
Clearly,
\[
    b^{\mathrm{left}}_{X,v_P,1}
    =
    b^{\mathrm{right}}_{X,v_P,|P|}
    =
    b^{\mathrm{free}}_{X,v_P}.
\]
For $2\leq t\leq |P|-1$, the recurrences are
\[
\begin{aligned}
    b^{\mathrm{left}}_{X,v_P,t}
    &=
    \bigcup_{\substack{Y\subseteq X\\ v_P\in Y}}
    \bigcup_{\substack{2\leq \ell\leq t\leq r\leq |P|-1}}
        \left(
            b^{\mathrm{left}}_{Y,v_P,\ell-1}
            \boxplus
            a_{(X\setminus Y)\cup\{v_P\},v_P,\ell,r}
        \right),\\
    b^{\mathrm{right}}_{X,v_P,t}
    &=
    \bigcup_{\substack{Y\subseteq X\\ v_P\in Y}}
    \bigcup_{\substack{2\leq \ell\leq t\leq r\leq |P|-1}}
        \left(
            b^{\mathrm{right}}_{Y,v_P,r+1}
            \boxplus
            a_{(X\setminus Y)\cup\{v_P\},v_P,\ell,r}
        \right).
\end{aligned}
\]
Finally, combine the left and right tables.
First, set
\[
    b_{X,v_P,1,1}
    =
    b^{\mathrm{left}}_{X,v_P,|P|-1}.
\]
For $2\leq i\leq j\leq |P|-1$, set
\[
    b_{X,v_P,i,j}
    =
    b_{X,v_P,1,1}
    \cup
    \bigcup_{\substack{Y\subseteq X\\ v_P\in Y}}
    \left(
        b^{\mathrm{left}}_{Y,v_P,i-1}
        \boxplus
        b^{\mathrm{right}}_{(X\setminus Y)\cup\{v_P\},v_P,j+1}
    \right)
\]

\begin{lemma}\label{lem:vertex_b_from_a}
Assume $|P|>1$.
For each $X\subseteq V(H)$ with $v_P\in X$, the above update correctly computes $b_{X,v_P,\cdot,\cdot}$ from the entries $a_{Y,v_P,\ell,r}$ with $Y\subseteq X$ in $O^*(2^{|X|}W)$ time, suppressing polynomial factors in $k$.
\end{lemma}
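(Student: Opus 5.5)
We prove correctness and running time separately. For correctness, the central structural fact is the decomposition of any feasible edge set \(A\subseteq E(H[X])\) at \(v_P\) along the blocks containing \(v_P\). Let \(B_1,\dots,B_m\) be the blocks of \((X,A)\) containing \(v_P\). Deleting \(v_P\) splits \(X\setminus\{v_P\}\) so that each component of \((X,A)-v_P\) meets exactly one \(B_s\). We let \(Y_s\) be \(\{v_P\}\) together with the union of the components attached through \(B_s\), and set \(A_s:=A\cap E(H[Y_s])\), adding to \(A_{1}\) any self-loops at \(v_P\) that form their own blocks; in the self-loop case \(Y_s=\{v_P\}\). The sets \(Y_s\setminus\{v_P\}\) then partition \(X\setminus\{v_P\}\), and the sets \(A_s\) partition \(A\), so \(\mathbf w(A)=\sum_s\mathbf w(A_s)\).

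We then check that each \((Y_s,A_s)\) is bridgeless with a unique block \(B_s\) at \(v_P\). Conditions (V2) and (V3) at every \(v_Q\neq v_P\) hold inside \((Y_s,A_s)\), because blocks at \(v_Q\) are blocks of \((X,A)\) lying entirely in \(Y_s\). Condition (V3) at \(P\) for \(B_s\) is inherited. Hence \(A_s\) witnesses membership in \(a_{Y_s,v_P,\ell_s,r_s}\), where \(\{\ell_s,\dots,r_s\}\) is any interval of indices covered by \(B_s\). Conversely, gluing witnesses of \(a\)-entries with disjoint \(Y\setminus\{v_P\}\) at \(v_P\) yields a graph whose blocks are the union of the blocks of the pieces. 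Bridgelessness and (V2)/(V3) at the other vertices are preserved, and the indices covered at \(P\) are the union of the pieces' covered indices. Note that the set of indices covered by one block need not be an interval, but it always contains an interval around each covered index. This observation justifies using only interval-shaped \(a\)-entries.

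With this decomposition, we verify the recurrences by induction on \(|X|\). For \(b^{\mathrm{free}}\), we peel off the piece containing the largest or any fixed remaining part; the union over \(Y\) covers all orderings, which only affects counting and is harmless here, since entries are sets. For \(b^{\mathrm{left}}_{X,v_P,t}\), we consider a witness covering all indices \(s\le t\). We take the block covering \(t\) and let \([\ell,r]\ni t\) be the maximal interval it covers containing \(t\). Then all indices \(s\le \ell-1\) are covered by the remaining pieces or by this block. In the latter case, we may still charge the covering to the rest by monotonicity. More precisely, we choose \(\ell\) to be the minimal index such that the rest covers \(\{2,\dots,\ell-1\}\) and the chosen block covers \(\{\ell,\dots,t\}\). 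To establish that such a choice exists, we take the block covering the smallest index not covered by the rest and argue by induction on \(t\) that the recurrence captures it. We also check that pieces whose covering is irrelevant are absorbed into \(b^{\mathrm{left}}_{Y,v_P,\ell-1}\), since \(b^{\mathrm{left}}\) allows arbitrary extra pieces. The right table is symmetric. In the final combination, we split any witness for \(b_{X,v_P,i,j}\) into the pieces needed for indices \(<i\) and the remaining pieces needed for indices \(>j\). Every piece is assigned to one side, with unneeded pieces going to either side. The case where nothing is left uncovered is handled by \(b_{X,v_P,1,1}\).

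The main obstacle is this exchange argument in the \(b^{\mathrm{left}}\) recurrence. We must show that selecting the block covering the smallest still-uncovered index, together with its maximal covered interval reaching \(t\), is always consistent with the recursive structure. Care is needed because a single block's covered set may have gaps.

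For the running time, each recurrence is a union over \(Y\subseteq X\) of \(\boxplus\)-products of upward-closed subsets of \(\Omega\). Each upward-closed set is represented by its threshold function \(\alpha\mapsto\min\beta\), and \(\boxplus\) becomes a min-plus convolution in the \(\alpha\) coordinate, of size \(k+1\). The sum over \(Y\) is a min-plus subset convolution over the universe \(X\setminus\{v_P\}\). By Lemma~\ref{lem:min-plus-subset-convolution}, with \(\overline W=kW\) and the extra \(\alpha\)-dimension and index pairs \((\ell,r,t)\) contributing polynomial factors, all values over subsets of \(X\) are obtained in \(O^*(2^{|X|}W)\) time. The \(b^{\mathrm{left}}\) and \(b^{\mathrm{free}}\) tables are filled layer by layer in \(t\), each layer being one such convolution.
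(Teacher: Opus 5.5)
Your overall plan matches the paper's: split a witness at $v_P$ into one $a$-piece per block containing $v_P$, glue pieces back, and evaluate the unions by min-plus subset convolution. The running-time part is fine.

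\textbf{The gap.} The completeness direction of the $b^{\mathrm{left}}$ recurrence, which you yourself flag as the unresolved ``main obstacle'', is not proved, and your route to it rests on a false premise. A single block's covered set never has gaps. By the definition of covering, $B$ covers $t$ at $P$ if and only if $m_B<t<M_B$. Here $m_B$ and $M_B$ are the smallest and largest indices of endpoints on $P$ of the links $\pi^{-1}(f)$ with $f\in E(B)$ incident with $v_P$. So the covered set is exactly the interval $\{m_B+1,\dots,M_B-1\}$.

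With this fact the step is immediate. For the block $B$ covering $t$, take $(\ell,r):=(m_B+1,M_B-1)$, so that $2\le \ell\le t\le r\le |P|-1$. Since $B$ covers no index below $\ell$, the remaining pieces must cover $\{2,\dots,\ell-1\}$, so they lie in $b^{\mathrm{left}}_{Y,v_P,\ell-1}$.

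Your sketch does not reach this:
\begin{itemize}
\item ``Charge the covering to the rest by monotonicity'' is not a valid step. An index $s<\ell$ covered only by the chosen block would simply make the rest fail to lie in $b^{\mathrm{left}}_{Y,v_P,\ell-1}$.
\item The existence of your ``minimal $\ell$'' is asserted, not shown.
\item If gaps were possible, the recurrence would genuinely fail. Take a witness with one block covering $\{2,4\}$ but not $3$, and another block covering only $3$. It cannot be split as the recurrence for $b^{\mathrm{left}}_{X,v_P,4}$ requires: the only piece covering $4$ can be used only with $\ell=4$, and then the rest does not cover $2$. So no exchange argument can replace the interval property.
\end{itemize}

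\textbf{Same gap in the final combination.} You assign each piece to the left or right side, but you never show that no piece is needed on both sides. The paper argues as follows. If every internal index is covered, the witness lies in $b_{X,v_P,1,1}$. Otherwise some uncovered $t$ satisfies $i\le t\le j$. A piece covering some index $<i$ and some index $>j$ would, by the interval property, also cover $t$. Hence the pieces needed on the left and those needed on the right are disjoint.

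\textbf{Two smaller repairs.}
\begin{itemize}
\item Your decomposition puts self-loops at $v_P$ into $A_1$, and the definition $A_s:=A\cap E(H[Y_s])$ puts them into every $A_s$. This destroys the unique-block property, since every self-loop is its own block. Each self-loop at $v_P$ must instead be a separate piece with $Y_s=\{v_P\}$, and be excluded from all other pieces.
\item Conversely, the recurrences may use the same self-loop link in several singleton pieces. The glued edge set then has weight only at most the sum of the pieces' weights. This is harmless because entries are upward closed, but it should be said.
\end{itemize}
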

\begin{proof}
The formula for $b^{\mathrm{free}}$ is the standard decomposition of the blocks glued at $v_P$.
The recurrences may count terms in which the same singleton self-loop $a$-entry is used more than once.
This is harmless because the entries are upward closed: replacing the repeated uses by one copy preserves the covered indices and gives a componentwise no larger weight vector.
Every used $a$-entry already requires its unique block containing $v_P$ to satisfy (V3) at $P$.

For $b^{\mathrm{left}}_{X,v_P,t}$ with $2\leq t\leq |P|-1$, choose one $a$-entry whose unique block containing $v_P$ covers $t$ at $P$.
Write the chosen entry as $a_{(X\setminus Y)\cup\{v_P\},v_P,\ell,r}$.
Since its unique block containing $v_P$ covers $t$ at $P$, we have $\ell\leq t\leq r$.
The remaining entries using $Y\setminus\{v_P\}$ must cover all indices in $\{2,\dots,\ell-1\}$ and are counted by $b^{\mathrm{left}}_{Y,v_P,\ell-1}$; the equality $b^{\mathrm{left}}_{Y,v_P,1}=b^{\mathrm{free}}_{Y,v_P}$ covers the case $\ell=2$.
This gives the recurrence for $b^{\mathrm{left}}$.
The recurrence for $b^{\mathrm{right}}$ is symmetric.
Conversely, any choice in these recurrences gives a family satisfying the corresponding left or right covering requirement by gluing the chosen $a$-entry to the remaining entries at $v_P$.

For the final formulas, $b^{\mathrm{left}}_{X,v_P,|P|-1}$ covers every internal index of $P$; this gives the formula for $b_{X,v_P,1,1}$.
For $2\leq i\leq j\leq |P|-1$, the union over $Y$ covers exactly the two required outside ranges $\{2,\dots,i-1\}$ and $\{j+1,\dots,|P|-1\}$.
The entry $b_{X,v_P,1,1}$ is also included because it covers all internal indices of $P$.

Conversely, take a feasible solution counted by $b_{X,v_P,i,j}$ and split the blocks containing $v_P$ into their corresponding $a$-entries.
If these $a$-entries cover every internal index of $P$, the solution is counted by $b_{X,v_P,1,1}$.
Otherwise, choose an index $t\in\{2,\dots,|P|-1\}$ that is not covered.
Then $i\leq t\leq j$, and this case does not occur for $b_{X,v_P,1,1}$.
No $a$-entry covers both an index smaller than $i$ and an index larger than $j$, since such an $a$-entry would also cover $t$.
Thus the $a$-entries covering indices smaller than $i$ and the $a$-entries covering indices larger than $j$ are disjoint.
Assign all remaining $a$-entries to either side.
This gives one term in the union for $b_{X,v_P,i,j}$.

The updates require convolutions over $Y\subseteq X$ and over pairs $(\alpha,\beta)\in\Omega$.
Consider one such convolution term $\bigcup_{Y\subseteq X}(R_Y\boxplus S_{X\setminus Y})$, where $R_Z$ and $S_Z$ are entries indexed by subsets $Z\subseteq X$.
For an entry $T$, define $f_T(\alpha):=\min\{\beta:(\alpha,\beta)\in T\}$ for $\alpha\in\{0,\dots,k\}$, where the minimum is $\infty$ if the set is empty.
Then $(\alpha,\beta)\in T$ exactly when $f_T(\alpha)\leq\beta$.
After fixing the first coordinates $\alpha_1,\alpha_2$, the second coordinate is handled by min-plus subset convolution: define $r_{\alpha_1}(Y):=f_{R_Y}(\alpha_1)$ and $s_{\alpha_2}(Y):=f_{S_Y}(\alpha_2)$, and compute $\min_{Y\subseteq X}(r_{\alpha_1}(Y)+s_{\alpha_2}(X\setminus Y))$ with costs capped at $kW$.
By \Cref{lem:min-plus-subset-convolution}, this takes $O^*(2^{|X|}W)$ time for each fixed pair $(\alpha_1,\alpha_2)$, and the number of such pairs is polynomial in $k$.
The choices of interval indices are polynomially many.
\end{proof}

\paragraph*{Oracle for computing $a$ from $b$.}

Now consider computing $a_{X,v_P,i,j}$, assuming that all entries $b_{Y,v_Q,\ell,r}$ with $Y\subsetneq X$ are already known.
For this purpose, we use the following auxiliary problem as an oracle; its algorithm is given in \Cref{sec:2VCSSBP,sec:2VCSSBP-evaluation}.

\boxproblemnoparam{$2$-Vertex-Connected Spanning Subgraph with Boundary-Pair Costs (2VCSS-BP)}{
    A multigraph $G$ with no self-loops and $|V(G)|\geq 2$, 
    integers $k$ and $W\geq 0$,
    an edge weight vector $\mathbf w_e=(\chi_e,c_e)\in \{0,1\}\times\{0,\dots,W\}$ for each $e\in E(G)$,
    a finite set $U$, and
    a finite set $\mathcal C_v$ of triples $(F,S,\mathbf s)$ with $F\in\binom{\delta_G(v)}{2}$, $S\subseteq U$, and $\mathbf s\in\{0,\dots,k\}\times\{0,\dots,kW\}$, for each $v\in V(G)$.
}{
    Select $A\subseteq E(G)$. 
    Moreover, for each $v\in V(G)$, select $(F_v,S_v,\mathbf s_v)\in\mathcal C_v$.
    Among $(A,(F_v,S_v,\mathbf s_v)_{v\in V(G)})$ with
    \begin{enumerate}
        \item[(E1)] $\bigcup_{v\in V(G)}F_v\subseteq A$,
        \item[(E2)] $\{S_v\}_{v\in V(G)}$ is a partition of $U$, and
        \item[(E3)] $(V(G),A)$ is $2$-vertex-connected,
    \end{enumerate}
    Output the set of all pairs $(\alpha,\beta)\in\{0,\dots,k\}\times\{0,\dots,kW\}$ for which some such $(A,(F_v,S_v,\mathbf s_v)_{v\in V(G)})$ satisfies
    $\sum_{e\in A}\mathbf w_e+\sum_{v\in V(G)}\mathbf s_v\leq(\alpha,\beta)$.
}

In condition (E3), we use the convention that a two-vertex multigraph with at least two parallel edges is admissible as \(2\)-vertex-connected.
A two-vertex graph with a single edge is irrelevant here, because each vertex must choose a boundary pair of two incident edges.
The output entry is upward closed by definition.
Here \(A\) is a set of edges; hence, even if the same edge belongs to \(F_v\) and \(F_w\), its edge weight is included only once in the sums over \(A\).

Before going into the details, we explain the intuition.
We guess the vertex set $Z$ of the unique block containing the root $v_P$.
The block itself has to be selected as a $2$-vertex-connected subgraph on $Z$.
Each vertex $v_Q\in Z$ may have components hanging outside the block, whose feasible weight vectors are represented by an already computed entry of the form $b_{Y,v_Q,i,j}$ with $Y\subsetneq X$.
Since the indices $i,\dots,j$ of $Q$ are not required to be covered inside the attached part, they must instead be covered by the block on $Z$.
The oracle explicitly chooses the pair of edges $F_{v_Q}$ incident with $v_Q$, which ensures that $i,\dots,j$ is covered by the block on $Z$; the set $S_{v_Q}$ represents the hanging vertex set $Y\setminus\{v_Q\}$, and $\mathbf s_{v_Q}$ records its weight vector.

\paragraph*{Computing $a$ from $b$.}

Fix $v_P$ with $|P|>1$ and consider computing an entry $a_{X,v_P,i,j}$.
The case $X=\{v_P\}$ is handled by \Cref{lem:vertex_a_singlevertex}.
For the remaining case, $|X|\geq 2$, and the unique block containing $v_P$ has vertex set different from $\{v_P\}$.
We guess its vertex set $Z\subseteq X$ with $v_P\in Z$ and $|Z|\geq 2$, and use \textsc{2VCSS-BP} on $H^\star[Z]$, where $H^\star[Z]$ is obtained from $H[Z]$ by deleting self-loops.

For an edge $e\in\delta_{H^\star[Z]}(v_Q)$, let $\mathrm{pos}_Q(e)$ be the index $t$ such that $u_{Q,t}$ is an endpoint of $\pi^{-1}(e)$.
This index is unique because $H^\star[Z]$ has no self-loops.
For $F=\{e,e'\}\in\binom{\delta_{H^\star[Z]}(v_Q)}{2}$, define
\[
    \ell^Q_F:=\min\{\mathrm{pos}_Q(e),\mathrm{pos}_Q(e')\},
    \qquad
    r^Q_F:=\max\{\mathrm{pos}_Q(e),\mathrm{pos}_Q(e')\}.
\]
For $2\leq i\leq j\leq |P|-1$, define the boundary-pair choices $\mathcal C^{P,Z,i,j}_{v_Q}$ as follows.
When $Q=P$, for every $F\in\binom{\delta_{H^\star[Z]}(v_P)}{2}$ with $\ell^P_F<i\leq j<r^P_F$, include the triple $(F,\emptyset,(0,0))$.
When $Q\neq P$ and $|Q|=1$, for every $F\in\binom{\delta_{H^\star[Z]}(v_Q)}{2}$, include the triple $(F,\emptyset,(0,0))$.
When $Q\neq P$ and $|Q|>1$, for every $S\subseteq X\setminus Z$, every $F\in\binom{\delta_{H^\star[Z]}(v_Q)}{2}$ with $\ell^Q_F<r^Q_F$, and every $\mathbf q\in b_{S\cup\{v_Q\},v_Q,\ell^Q_F+1,r^Q_F-1}$, include the triple $(F,S,\mathbf q)$.
For $\mathcal C^{P,Z,1,1}_{v_Q}$, use the same definition except that, when $Q=P$, the condition $\ell^P_F<i\leq j<r^P_F$ is replaced by $\ell^P_F<r^P_F$.
For each candidate $Z$, the graph $H^\star[Z]$, the inherited edge weights, the universe $U:=X\setminus Z$, and the boundary-pair choice sets $\mathcal C^{P,Z,i,j}_{v_Q}$ form a \textsc{2VCSS-BP} instance.
The next lemma says that $a_{X,v_P,i,j}$ is obtained by solving these instances for all choices of $Z\subseteq X$ with $v_P\in Z$ and $|Z|\geq 2$, and then taking the union of the returned entries.
\Cref{fig:vertex-a-from-b-blackbox} illustrates the oracle call for one guessed set $Z$.

\begin{figure}[t]
\centering
\begin{tikzpicture}[
  x=0.70cm,y=0.70cm,
  rootbox/.style={draw=black,rounded corners=3pt,fill=gray!8,line width=0.8pt},
  vertex/.style={circle,draw=black,fill=white,inner sep=1pt,minimum size=7mm,font=\small},
  qvertex/.style={draw=black,fill=white,line width=0.7pt},
  path vertex/.style={circle,fill=black,inner sep=0pt,minimum size=1.6mm},
  covered path vertex/.style={circle,fill=blue!70!black,inner sep=0pt,minimum size=1.7mm},
  uncovered path vertex/.style={circle,fill=green!55!black,inner sep=0pt,minimum size=1.7mm},
  s vertex/.style={circle,draw=black,fill=white,inner sep=0pt,minimum size=1.7mm},
  path edge/.style={line width=0.85pt,red!70!black},
  small path edge/.style={line width=0.7pt,red!70!black},
  small cover/.style={line width=1.55pt,blue!65!black},
  cover/.style={line width=1.5pt,blue!65!black},
  block edge/.style={line width=0.8pt,black},
  incidence/.style={line width=0.8pt,black},
  attach/.style={line width=0.6pt,gray!65},
  setbox/.style={draw=gray!55,dashed,rounded corners=2pt,line width=0.45pt},
  note/.style={font=\scriptsize,align=center,inner sep=1pt},
  tiny note/.style={font=\tiny,align=center,inner sep=0.5pt},
]
\draw[rootbox] (-6.70,0.70) rectangle (6.70,7.90);
\coordinate (vp) at (0,6.10);
\coordinate (qone) at (-4.35,3.40);
\coordinate (qtwo) at (0,2.45);
\coordinate (qthree) at (4.35,3.40);
\draw[qvertex] (vp) circle [radius=1.35];
\draw[qvertex] (qone) circle [radius=1.35];
\draw[qvertex] (qtwo) circle [radius=1.35];
\draw[qvertex] (qthree) circle [radius=1.35];
\node[note] at ($(vp)+(0,1.55)$) {$v_P$};
\foreach \i/\dx in {1/-1.02,2/-0.61,3/-0.20,4/0.20,5/0.61,6/1.02}
  \coordinate (vpp\i) at ($(vp)+(\dx,0.03)$);
\draw[small path edge] (vpp1)--(vpp2)--(vpp3)--(vpp4)--(vpp5)--(vpp6);
\foreach \i in {1,6}
  \node[path vertex] at (vpp\i) {};
\foreach \i in {2,5}
  \node[uncovered path vertex] at (vpp\i) {};
\foreach \i in {3,4}
  \node[covered path vertex] at (vpp\i) {};
\draw[incidence] (vpp2) -- ++(-0.33,-1.75);
\draw[incidence] (vpp5) -- ++(0.33,-1.75);
\node[note] at ($(vp)+(0,-0.36)$) {$F_{v_P}$};
\node[tiny note] at ($(vpp3)+(0,0.28)$) {$i$};
\node[tiny note] at ($(vpp4)+(0,0.28)$) {$j$};
\node[note,red!70!black] at ($(vp)+(0,0.68)$) {$P$};
\foreach \name/\lab in {qone/Q_1,qtwo/Q_2,qthree/Q_3}
  \foreach \i/\dx in {1/-1.02,2/-0.61,3/-0.20,4/0.20,5/0.61,6/1.02}
    \coordinate (\name p\i) at ($(\name)+(\dx,0.08)$);
\foreach \name in {qone,qtwo,qthree}
  \draw[small path edge] (\name p1)--(\name p2)--(\name p3)--(\name p4)--(\name p5)--(\name p6);
\foreach \i in {1,6}
  \node[path vertex] at (qonep\i) {};
\node[uncovered path vertex] at (qonep5) {};
\foreach \i in {2,3,4}
  \node[covered path vertex] at (qonep\i) {};
\foreach \i in {1,6}
  \node[path vertex] at (qtwop\i) {};
\foreach \i in {2,3,4,5}
  \node[uncovered path vertex] at (qtwop\i) {};
\foreach \i in {1,6}
  \node[path vertex] at (qthreep\i) {};
\foreach \i in {2,4,5}
  \node[uncovered path vertex] at (qthreep\i) {};
\node[covered path vertex] at (qthreep3) {};
\node[note,red!70!black] at ($(qone)+(0,-0.58)$) {$Q_1$};
\node[note,red!70!black] at ($(qtwo)+(0,-0.58)$) {$Q_2$};
\node[note,red!70!black] at ($(qthree)+(0,-0.58)$) {$Q_3$};
\node[note] at ($(qone)+(1.88,0.20)$) {$v_{Q_1}$};
\node[note] at ($(qtwo)+(1.88,0.20)$) {$v_{Q_2}$};
\node[note] at ($(qthree)+(1.88,0.20)$) {$v_{Q_3}$};
\node[note] at (-2.15,6.00) {$S_{v_P}=\emptyset$};
\node[note] at (5.10,7.25) {2-vertex\\connected};
\node[note] at (7.12,4.10) {$Z$};

\draw[incidence] (qonep1) -- ++(-0.58,1.62);
\draw[incidence] (qonep5) -- ++(0.58,1.62);
\node[note] at (-4.35,4.15) {$F_{v_{Q_1}}$};
\draw[incidence] (qtwop2) -- ++(-0.58,1.62);
\draw[incidence] (qtwop5) -- ++(0.58,1.62);
\node[note] at (0,3.40) {$F_{v_{Q_2}}$};
\draw[incidence] (qthreep1) -- ++(-0.58,1.62);
\draw[incidence] (qthreep6) -- ++(0.58,1.62);
\node[note] at (4.35,4.15) {$F_{v_{Q_3}}$};

\coordinate (qoneL) at ($(qone)+(-0.42,-1.28)$);
\coordinate (qoneR) at ($(qone)+(0.42,-1.28)$);
\coordinate (a1) at (-5.05,0.25);
\coordinate (b1) at (-3.55,0.25);
\coordinate (c1) at (-4.65,-0.65);
\coordinate (d1) at (-3.15,-0.65);
\draw[setbox] (-5.28,-0.95) rectangle (-2.92,0.62);
\draw[block edge] (qoneL)--(a1)--(b1)--(qoneR);
\draw[block edge] (b1)--(c1)--(d1)--(b1);
\foreach \v/\lab in {a1/{},b1/{},c1/{},d1/{}}
  \node[s vertex] at (\v) {};
\node[note] at (-4.10,-1.30) {$S_{v_{Q_1}}$};

\coordinate (qtwoL) at ($(qtwo)+(-0.42,-1.28)$);
\coordinate (qtwoR) at ($(qtwo)+(0.42,-1.28)$);
\coordinate (a2) at (-1.10,0.10);
\coordinate (b2) at (1.10,0.10);
\coordinate (c2) at (0.75,-0.80);
\coordinate (d2) at (-0.75,-0.80);
\draw[setbox] (-1.30,-1.10) rectangle (1.30,0.45);
\draw[block edge] (qtwoL)--(a2)--(b2)--(qtwoR);
\draw[block edge] (a2)--(d2)--(c2)--(b2);
\foreach \v in {a2,b2,c2,d2}
  \node[s vertex] at (\v) {};
\node[note] at (0,-1.45) {$S_{v_{Q_2}}$};

\coordinate (qthreeL) at ($(qthree)+(-0.42,-1.28)$);
\coordinate (qthreeR) at ($(qthree)+(0.42,-1.28)$);
\coordinate (a3) at (3.55,0.25);
\coordinate (b3) at (5.05,0.25);
\coordinate (c3) at (4.35,-0.70);
\draw[setbox] (3.25,-1.00) rectangle (5.28,0.62);
\draw[block edge] (qthreeL)--(a3)--(b3)--(qthreeR);
\draw[block edge] (a3)--(c3)--(b3);
\foreach \v in {a3,b3,c3}
  \node[s vertex] at (\v) {};
\node[note] at (4.35,-1.30) {$S_{v_{Q_3}}$};
\end{tikzpicture}
\caption{The structure behind the oracle call in \Cref{lem:vertex_a_from_b}. After guessing $Z=\{v_P,v_{Q_1},v_{Q_2},v_{Q_3}\}$, we ask the \textsc{2VCSS-BP} oracle to find a $2$-vertex-connected solution on the vertices of $Z$. In the figure this solution is drawn as the large gray box, and it chooses the incident pairs $F_{v_P}$ and $F_{v_{Q_i}}$ shown by the dangling edges; these edges are not truly dangling, but the figure only indicates the chosen incidences. The blue vertices are the vertices that must be covered by the component on $Z$. The green vertices are already covered outside $Z$, or, for $P$, will be covered outside $Z$ after the oracle call; the black vertices are path endpoints. The red edges indicate the underlying paths. The vertices enclosed below each $v_{Q_i}$ are $S_{v_{Q_i}}$, which together with $v_{Q_i}$ form the smaller subproblem counted by the corresponding $b$-value.}
\label{fig:vertex-a-from-b-blackbox}
\end{figure}

\begin{lemma}\label{lem:vertex_a_from_b}
Assume $|P|>1$ and $|X|\geq 2$.
For each $Z\subseteq X$ with $v_P\in Z$ and $|Z|\geq 2$, let $a'_{X,v_P,Z,i,j}\subseteq\Omega$ be the entry returned by the \textsc{2VCSS-BP} oracle on $H^\star[Z]$ with edge weights inherited from $H$, boundary-pair choice sets $\mathcal C^{P,Z,i,j}_{v_Q}$, and $U:=X\setminus Z$.
Then
\[
    a_{X,v_P,i,j}
    =
    \bigcup_{\substack{Z\subseteq X\\ v_P\in Z,\ |Z|\geq 2}}
    a'_{X,v_P,Z,i,j}.
\]
\end{lemma}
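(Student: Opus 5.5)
We prove the two inclusions separately, organizing both around the block-cut tree of a solution restricted to the root block. Throughout, ``solution'' means an edge set $A\subseteq E(H[X])$ witnessing membership of some $(\alpha,\beta)$ in $a_{X,v_P,i,j}$. For a \textsc{2VCSS-BP} solution, ``feasible tuple'' means a tuple $(A_Z,(F_v,S_v,\mathbf s_v)_{v\in Z})$ satisfying (E1)--(E3).

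\textbf{($\subseteq$).} Take such an $A$ and let $B$ be the unique block of $(X,A)$ containing $v_P$. Set $Z:=V(B)$.
\begin{enumerate}
\item Since $|X|\geq 2$ and $(X,A)$ is bridgeless, hence connected, $B$ is not a self-loop block. So $|Z|\geq 2$ and $E(B)$ contains no self-loops. Thus $A_Z:=E(B)$ is an edge set of $H^\star[Z]$, and $(Z,A_Z)$ is $2$-vertex-connected. If $|Z|=2$, bridgelessness forces at least two parallel edges, which meets the convention for (E3).
\item Removing $E(B)$ from $(X,A)$ splits the rest of the graph into pieces hanging at the vertices $v_Q\in Z$. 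For each $v_Q$, let $S_{v_Q}$ be the set of vertices outside $Z$ whose path in the block-cut tree to $B$ enters $B$ at $v_Q$. These sets partition $X\setminus Z$, giving (E2). Let $A_Q$ be the edges of $A\setminus E(B)$ induced on $S_{v_Q}\cup\{v_Q\}$, including the self-loops at $v_Q$.
\item Choose the boundary pairs. At $v_P$, (V3) for $B$ and the covering of $i,\dots,j$ give edges $f,f'$. If this witness is a self-loop, it would be a self-loop block, not $B$; so $f,f'$ are non-loop edges with $\mathrm{pos}_P(f)<i\leq j<\mathrm{pos}_P(f')$. For $v_Q\neq v_P$ with $|Q|>1$, let $F_{v_Q}$ consist of the two edges of $B$ at $v_Q$ with minimal and maximal $\mathrm{pos}_Q$. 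By (V3) for $B$ these positions are distinct. For $|Q|=1$, take any two edges of $B$ at $v_Q$; two exist because $B$ is $2$-connected or a double edge.
\item Check the $b$-entries. Every index of $Q$ covered by $B$ lies in $\{\ell^Q_F+1,\dots,r^Q_F-1\}$; conversely, this whole range is covered by $B$ via $F_{v_Q}$. Because (V2) holds at $Q$ in $(X,A)$, every index outside this range is covered by a block of $A_Q$. Hence $\mathbf w(A_Q)\in b_{S\cup\{v_Q\},v_Q,\ell+1,r-1}$. Bridgelessness, (V2), and (V3) of the pieces are inherited, since blocks are unchanged by the splitting.
\item Weights add: $\mathbf w(A)=\mathbf w(E(B))+\sum_Q\mathbf w(A_Q)$. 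So $(\alpha,\beta)$ lies in $a'_{X,v_P,Z,i,j}$.
\end{enumerate}

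\textbf{($\supseteq$).} Given a feasible tuple for some $Z$, choose for each $v_Q$ an edge set $A_Q\subseteq E(H[S_{v_Q}\cup\{v_Q\}])$ realizing $\mathbf s_{v_Q}$, and set $A:=A_Z\cup\bigcup_Q A_Q$. Then:
\begin{enumerate}
\item Because the $S_{v_Q}$ are disjoint, the block structure of $(X,A)$ is the $2$-connected graph $(Z,A_Z)$ glued at single vertices to the graphs $(S_{v_Q}\cup\{v_Q\},A_Q)$. So $(Z,A_Z)$ is a block. The union is bridgeless.
\item At $v_P$, $S_{v_P}=\emptyset$ and no self-loops are present, so $B$ is the unique block containing $v_P$. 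The pair $F_{v_P}$ witnesses both (V3) and the covering of $i,\dots,j$.
\item At $v_Q$ with $|Q|>1$, (V3) holds for $B$ via $F_{v_Q}$ and for the other blocks via the $b$-entry. Indices in $\{\ell+1,\dots,r-1\}$ are covered by $B$, and all other internal indices are covered by $A_Q$; this is the required (V2). For $|Q|=1$, the triple forces $S_{v_Q}=\emptyset$, so $B$ is the only block at $v_Q$.
\item For vertices inside some $S_{v_Q}$, conditions hold because nothing new touches them.
\item Weights add. Upward closure handles the inequality.
\end{enumerate}
The $(1,1)$ variant is handled identically.

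\textbf{Main obstacle.} The key step is the exact matching between ``indices of $Q$ not covered by $A_Q$'' and the interval determined by $F_{v_Q}$. In particular, we must confirm that the extreme-position choice of $F_{v_Q}$ captures all of $B$'s coverage at $Q$. We must also confirm that self-loops at $v_Q$ are absorbed into $A_Q$ and never lost when passing to $H^\star[Z]$.
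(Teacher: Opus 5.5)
Your proof is correct and follows essentially the same route as the paper: you take $Z$ to be the vertex set of the unique root block, choose min/max-position boundary pairs at each $v_Q$, let the hanging parts (your block-cut-tree branches, the paper's components of $(X,A)-Z$) supply the $b$-entries, and glue back for the converse. One point to state explicitly in the ($\subseteq$) direction is that $S_{v_P}=\emptyset$ (by uniqueness of the root block) and $S_{v_Q}=\emptyset$ for $|Q|=1$ (by (V3)), since the triples in $\mathcal C^{P,Z,i,j}_{v_Q}$ for these vertices require an empty attachment set.
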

\begin{proof}
Fix $Z$ with $v_P\in Z$ and $|Z|\geq 2$.
We first show that every edge set counted by $a_{X,v_P,i,j}$ whose unique block containing $v_P$ has vertex set $Z$ gives a feasible oracle solution with no larger weight vector.
Take such an edge set $A$, and let $B$ be its unique block containing $v_P$.
Since $|Z|\geq 2$, the block $B$ is not a self-loop block, and hence $E(B)\subseteq E(H^\star[Z])$.
Use $E(B)$ as the selected edge set of the oracle on $H^\star[Z]$.
Since $(X,A)$ is bridgeless, \(B\) cannot be a block consisting of a single non-loop edge; such an edge would be a bridge.
Hence every vertex of $B$ is incident with at least two edges of $B$.
For each $v_Q\in Z$, define $F_{v_Q}$ by choosing two incident edges of $B$ as follows.
If $|Q|>1$, choose edges whose preimage endpoints on $Q$ have minimum and maximum indices among all edges of $B$ incident with $v_Q$.
If $|Q|=1$, choose any two incident edges.

For each $v_Q\in Z$, let $S_{v_Q}$ be the set of vertices of $X\setminus Z$ lying in components of $(X,A)-Z$ adjacent to $v_Q$.
Each component of $(X,A)-Z$ is adjacent to exactly one vertex of $Z$.
Thus the sets $S_{v_Q}$ partition $X\setminus Z$.

We now verify that, for each $v_Q\in Z$, the data $F_{v_Q}$ and $S_{v_Q}$ extend to a member of $\mathcal C^{P,Z,i,j}_{v_Q}$.
For $Q=P$, condition (V3) at $P$ gives $\ell^P_{F_{v_P}}<r^P_{F_{v_P}}$.
If $2\leq i\leq j\leq |P|-1$, the covering requirement at $P$ and the choice of the minimum and maximum indices give $\ell^P_{F_{v_P}}<i\leq j<r^P_{F_{v_P}}$; if $(i,j)=(1,1)$, there is no covering requirement at $P$.
Because the block containing $v_P$ is unique, $S_{v_P}=\emptyset$.
Thus $\mathcal C^{P,Z,i,j}_{v_P}$ contains a choice with pair $F_{v_P}$, set $S_{v_P}$, and weight vector $(0,0)$.
For a vertex $v_Q\in Z$ with $Q\neq P$ and $|Q|=1$, (V3) allows at most one block containing $v_Q$; since $B$ is already such a block, no component can be attached at $v_Q$, and hence $S_{v_Q}=\emptyset$.
Thus $\mathcal C^{P,Z,i,j}_{v_Q}$ contains a choice with pair $F_{v_Q}$, set $S_{v_Q}$, and weight vector $(0,0)$.
For a vertex $v_Q\in Z$ with $Q\neq P$ and $|Q|>1$, (V3) applied to $B$ at $Q$ gives two incident edges of $B$ whose preimage endpoints on $Q$ have distinct indices, and hence $\ell^Q_{F_{v_Q}}<r^Q_{F_{v_Q}}$.
By the choice of the minimum and maximum indices, $B$ covers every index $t\in\{\ell^Q_{F_{v_Q}}+1,\dots,r^Q_{F_{v_Q}}-1\}$ at $Q$.
Every index $t$ of $Q$ with $2\leq t\leq \ell^Q_{F_{v_Q}}$ or $r^Q_{F_{v_Q}}\leq t\leq |Q|-1$ must therefore be covered by the attached subgraph on $S_{v_Q}\cup\{v_Q\}$.
Thus this attached subgraph is counted by $b_{S_{v_Q}\cup\{v_Q\},v_Q,\ell^Q_{F_{v_Q}}+1,r^Q_{F_{v_Q}}-1}$.
Choose the triple $(F_{v_Q},S_{v_Q},\mathbf q_{v_Q})\in\mathcal C^{P,Z,i,j}_{v_Q}$ whose weight vector is the pair given by the attached subgraph.
Summing $\mathbf w(E(B))$ and the chosen weight vectors gives a pair bounded componentwise by $\mathbf w(A)$.
Hence every pair contributed by $A$ to $a_{X,v_P,i,j}$ also belongs to $a'_{X,v_P,Z,i,j}$.

Conversely, take a pair $\mathbf q\in a'_{X,v_P,Z,i,j}$ and a feasible oracle solution on $H^\star[Z]$ witnessing $\mathbf q$.
Let $A_Z$ be the oracle edge set on $Z$.
For $v_Q\in Z$, let $(F_{v_Q},S_{v_Q},\mathbf s_{v_Q})\in\mathcal C^{P,Z,i,j}_{v_Q}$ be selected at $v_Q$.
For each $v_Q\in Z$ with $Q\neq P$ and $|Q|>1$, choose an edge set counted by the entry $b_{S_{v_Q}\cup\{v_Q\},v_Q,\ell^Q_{F_{v_Q}}+1,r^Q_{F_{v_Q}}-1}$ and glue it at $v_Q$.
Let $A$ be the union of $A_Z$ and the glued edge sets.
We show that $A$ is counted by $a_{X,v_P,i,j}$.
For vertices $v_Q$ with $Q\neq P$ and $|Q|=1$, the definition of $\mathcal C^{P,Z,i,j}_{v_Q}$ forces $S_{v_Q}=\emptyset$.
For $v_P$, the definition of $\mathcal C^{P,Z,i,j}_{v_P}$ forces $S_{v_P}=\emptyset$ and $\ell^P_{F_{v_P}}<r^P_{F_{v_P}}$.
If $2\leq i\leq j\leq |P|-1$, it also makes the block on $Z$ cover every index $t\in\{i,\dots,j\}$; for $(i,j)=(1,1)$, it imposes no covering requirement.
The oracle edge set is $2$-vertex-connected on $Z$, so it forms the unique block containing $v_P$ after the glued subgraphs are attached at single vertices.
For each $Q\neq P$ with $|Q|>1$, the chosen edges at $v_Q$ make the block on $Z$ cover every index $t\in\{\ell^Q_{F_{v_Q}}+1,\dots,r^Q_{F_{v_Q}}-1\}$.
The glued $b$-solution covers every index $t$ of $Q$ with $2\leq t\leq \ell^Q_{F_{v_Q}}$ or $r^Q_{F_{v_Q}}\leq t\leq |Q|-1$, and it also ensures the condition in (V3) for the other blocks containing $v_Q$.
For $|Q|=1$, no additional block containing $v_Q$ is glued.
Therefore $A$ is counted by $a_{X,v_P,i,j}$.
For each glued $b$-solution at $v_Q$, its weight vector is componentwise at most the selected vector $\mathbf s_{v_Q}$, because $\mathbf s_{v_Q}$ belongs to the corresponding $b$-entry.
Hence $\mathbf w(A)\leq \mathbf w(A_Z)+\sum_{v_Q\in Z}\mathbf s_{v_Q}\leq \mathbf q$.
Thus $\mathbf q\in a_{X,v_P,i,j}$, and so $a'_{X,v_P,Z,i,j}\subseteq a_{X,v_P,i,j}$.
\end{proof}

\paragraph*{Runtime analysis.}

By \Cref{lem:vertex_a_singlevertex,lem:vertex_b_from_a,lem:vertex_a_from_b} and a deterministic algorithm for \textsc{2VCSS-BP}, the dynamic programming tables can already be computed by calling \textsc{2VCSS-BP} separately for each candidate set \(X\) and each guessed set \(Z\).
Indeed, for a fixed \(Z\subseteq X\), the \textsc{2VCSS-BP} instance on \(H^\star[Z]\) with universe \(X\setminus Z\) gives the candidate entry \(a'_{X,v_P,Z,i,j}\), and taking the union over \(Z\) gives \(a_{X,v_P,i,j}\).
For a \textsc{2VCSS-BP} instance with universe \(U_0\), its \emph{all-subuniverse version} asks for the output entry of every variant indexed by \(U\subseteq U_0\), where the variant for \(U\) replaces (E2) by the condition that \(\{S_v\}_{v\in V(G)}\) is a partition of \(U\).
We further accelerate the dynamic programming by fixing \(Z\) first and using such an all-subuniverse algorithm; this returns all entries \(a'_{X,v_P,Z,i,j}\) with \(X\supseteq Z\) at once.

\begin{lemma}\label{lem:runtime_twovca_via_oracle}
Suppose that the all-subuniverse version of \textsc{2VCSS-BP} described above can be solved in $O^*(\rho^{|V(G)|}\nu^{|U_0|}W)$ time for every instance with graph $G$, universe $U_0$, and maximum edge cost $W$, suppressing polynomial factors in $k$.
Then \textsc{PFA} with $p$ paths and maximum link cost $W$ can be solved in $O^*(\max\{3,\rho+\nu\}^pW)$ time.
Consequently, \textsc{2VCA} with parameter $k$ and maximum link cost $W$ can be solved in $O^*(\max\{3,\rho+\nu\}^{2k}W)$ time.
\end{lemma}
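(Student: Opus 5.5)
We organize the dynamic programming for \textsc{PFA} by processing the sets \(X\subseteq V(H)\) in order of increasing size, and for each \(X\) and each \(v_P\in X\) with \(|P|>1\) we compute first all \(a\)-entries and then all \(b\)-entries. Correctness follows from \Cref{lem:vertex_a_singlevertex,lem:vertex_b_from_a,lem:vertex_a_from_b}. The output is read off from \(b_{V(H),v_P,1,1}\) as explained under ``Algorithm output''. The singleton-path case adds one extra path and a polynomial loop over guessed links. Since that extra path makes \(p\) at most one larger, it multiplies the running time by a constant and is absorbed by the \(O^*\) notation. The final claim for \textsc{2VCA} then follows from \Cref{lem:twovca_to_twovca_pf}, which yields \(p\leq 2k\) with the same \(W\).

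The \(a\)-from-\(b\) direction is handled by batching over the top block \(Z\) rather than over \(X\). We fix \(v_P\) and \(Z\ni v_P\) with \(|Z|\geq 2\), and form the instance on \(H^\star[Z]\) with universe \(U_0:=V(H)\setminus Z\). The choice sets \(\mathcal C^{P,Z,i,j}_{v_Q}\) range over all \(S\subseteq U_0\), which is compatible with every \(X\supseteq Z\), because the variant for the subuniverse \(U=X\setminus Z\) only uses triples with \(S\subseteq U\). One call to the all-subuniverse algorithm therefore returns \(a'_{X,v_P,Z,i,j}\) for all \(X\supseteq Z\) simultaneously, in \(O^*(\rho^{|Z|}\nu^{p-|Z|}W)\) time per pair \((i,j)\).

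This batching creates an ordering issue. The choice sets require \(b\)-entries on \(S\cup\{v_Q\}\) with \(S\subseteq U_0\), and these must be available before the call. To handle this, we process \(Z\) in order of decreasing \(|Z|\), or equivalently by the size of \(S\cup\{v_Q\}\). A term with \(X\supseteq Z\) needs only \(b_{S\cup\{v_Q\},\cdot}\) for \(S\subseteq X\setminus Z\), so \(|S\cup\{v_Q\}|\leq |X|-|Z|+1\leq |X|-1\). I would therefore interleave the computation by levels: at level \(m\), finalize all \(a\)- and \(b\)-entries on sets of size \(m\). The \(a\)-entries at level \(m\) use \(a'\)-entries from calls whose inputs \(b_{Y}\) all have \(|Y|<m\). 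To keep the batching intact, I would instead run each oracle call once, lazily, with entries of sizes beyond the current level treated symbolically. I expect this ordering subtlety to be the main obstacle.

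A cleaner fallback is to apply the oracle call for \(Z\) with the universe restricted to only the needed \(b\)-entries. One could also note that any \(a'_{X,\dots}\) depends only on \(b\)-entries of size at most \(|X|-|Z|+1\). Then calling the oracle for \(Z\) once the entire DP for sets of size \(\le p-|Z|+1\) is complete suffices whenever \(|X|\le p\). This would need to be checked carefully, perhaps by calling for each \(Z\) and each level.

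Summing the oracle cost over \(Z\) gives \(\sum_{Z}\rho^{|Z|}\nu^{p-|Z|}=(\rho+\nu)^p\) times polynomial factors, by the binomial theorem. For the \(b\)-from-\(a\) step, \Cref{lem:vertex_b_from_a} costs \(O^*(2^{|X|}W)\) per \(X\), which sums to \(\sum_X 2^{|X|}=3^p\). The singleton initializations are polynomial, and the unions and the \(\boxplus\)-style bookkeeping are polynomial in \(k\). The total is therefore \(O^*((3^p+(\rho+\nu)^p)W)=O^*(\max\{3,\rho+\nu\}^pW)\).
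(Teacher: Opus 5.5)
Your proposal leaves the scheduling of the oracle calls unresolved, and the concrete schedules you put forward are circular.

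A single call per \(Z\) needs the \(b\)-entries on \(S\cup\{v_Q\}\) for every \(S\subseteq V(H)\setminus Z\), that is, of size up to \(p-|Z|+1\).

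\emph{Decreasing \(|Z|\).} This fails once \(p\ge 4\). The call for a set with \(|Z|=p-1\) already needs \(b\)-entries of size \(2\). These need \(a\)-entries on \(2\)-element sets, which come only from calls with \(|Z|=2\), and those are processed last.

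\emph{Waiting until the DP is complete up to size \(p-|Z|+1\).} This fails for the same reason. For \(|Z|=2\) you would need the DP up to size \(p-1\). But the \(a\)-entries on \(3\)-element sets \(X\supseteq Z\) are unions that include \(a'_{X,v_P,Z,i,j}\), which comes from that very call.

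\emph{Running the oracle ``lazily with symbolic entries''.} The hypothesis does not provide this. It only gives a black-box algorithm on concrete choice sets.

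The correct schedule is the one you mention only in passing, ``calling for each \(Z\) and each level'', and it is exactly the paper's argument. When computing the \(a\)-entries on sets of size \(h+1\), do the following for each \(Z\) with \(|Z|\ge 2\), each root \(v_P\in Z\) and each \((i,j)\):
\begin{itemize}
\item Invoke the all-subuniverse oracle again on \(H^\star[Z]\) with universe \(V(H)\setminus Z\).
\item Build the choice sets only from the \(b\)-entries already known, i.e.\ those on sets of size at most \(h\). Triples needing unknown entries may be omitted or given empty entries.
\item Read off only the outputs for subuniverses \(U=X\setminus Z\) with \(|X|=h+1\).
\end{itemize}
These outputs are correct because a partition of \(U\) uses only triples with \(S\subseteq U\). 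Then \(|S\cup\{v_Q\}|\le |X|-|Z|+1\le h\), since \(|Z|\ge 2\). This is your own observation about the dependence of \(a'_{X,\dots}\), combined with your remark that the choice sets are compatible with every subuniverse.

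The repeated calls cost only an extra factor of \(p\) levels, so the oracle cost is still \(O^*((\rho+\nu)^pW)\). The rest of your accounting then goes through: \(3^p\) for the \(b\)-updates, and \(p\le 2k\) from the reduction.

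A minor point: in the all-singleton case the guessed link merges two singleton paths into one two-vertex path, so \(p\) decreases by one rather than increases. This is harmless.
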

\begin{proof}
We compute the values in increasing order of $|X|$.
For $|X|=1$, the $a$-values are initialized by \Cref{lem:vertex_a_singlevertex}.
For $h\geq 1$, assume that all $b$-values on vertex sets of size at most $h$ have been computed, and consider the computation of the $a$-values on vertex sets of size $h+1$.
Fix a candidate set $Z\subseteq V(H)$ with $|Z|\geq 2$, a root $v_P\in Z$, and interval indices $i,j$.
As explained above, one \textsc{2VCSS-BP} computation on $H^\star[Z]$ with universe $V(H)\setminus Z$ gives all entries $a'_{X,v_P,Z,i,j}$ with $|X|=h+1$ and $Z\subseteq X$.
By the assumed all-subuniverse oracle, this costs $O^*(\rho^{|Z|}\nu^{|V(H)\setminus Z|}W)$ time.
Summing over all choices of $Z$ gives the exponential part $\sum_{Z\subseteq V(H)}\rho^{|Z|}\nu^{|V(H)\setminus Z|}=(\rho+\nu)^{|V(H)|}$; the additional choices of $h$, the root, and the interval indices add only a polynomial factor.
After these oracle computations, \Cref{lem:vertex_a_from_b} gives each $a_{X,v_P,i,j}$ by taking the union over the already computed candidates $a'_{X,v_P,Z,i,j}$.
After the $a$-entries for $X$ are known, \Cref{lem:vertex_b_from_a} computes the $b$-entries for $X$ in $O^*(2^{|X|}W)$ time.
Over all $X\subseteq V(H)$, these $b$-computations take $O^*(\sum_{X\subseteq V(H)}2^{|X|}W)=O^*(3^{|V(H)|}W)\leq O^*(3^pW)$ time.
The oracle calls take $O^*((\rho+\nu)^{|V(H)|}W)\leq O^*((\rho+\nu)^pW)$ time.
Together these bounds give $O^*(\max\{3,\rho+\nu\}^pW)$ time for \textsc{PFA}.
The statement for \textsc{2VCA} follows from \Cref{lem:twovca_to_twovca_pf}.
\end{proof}

\section{Reducing \textsc{2VCSS-BP} to a Pseudo-Ear-Packing Polynomial}\label{sec:2VCSSBP}

We first reduce \textsc{2VCSS-BP} to evaluating a polynomial whose terms encode $2$-vertex-connected pseudo-ear packings.
\Cref{sec:2VCSSBP-evaluation} evaluates this polynomial in single-exponential time.
We begin by slightly rephrasing \textsc{2VCSS-BP}.

\paragraph*{Prescribed subuniverses.}
We actually construct an algorithm that, for all prescribed subsets $U\subseteq U_0$, computes the corresponding output entries simultaneously; this stronger form is used in \Cref{lem:runtime_twovca_via_oracle}.
The ordinary \textsc{2VCSS-BP} instance corresponds to the prescribed subset $U=U_0$.



\paragraph*{Modifying the definition of $A$ to delete disjointness constraint.}

By replacing $A$ by $A\setminus \bigcup_{v\in V(G)}F_v$, we may first rephrase the problem as finding an edge set $A$ and a triple $(F_v,S_v,\mathbf s_v)\in\mathcal C_v$ for each $v\in V(G)$ with
\begin{enumerate}
    \item[(E1)] $\bigcup_{v\in V(G)}F_v\cap A=\emptyset$,
    \item[(E2)] $\{S_v\}_{v\in V(G)}$ is a partition of $U_0$, and
    \item[(E3)] $\left(V,\bigcup_{v\in V(G)}F_v\cup A\right)$ is $2$-vertex-connected,
\end{enumerate}
and with the weight obtained from the edge set $A\cup\bigcup_vF_v$ and from the vectors $\mathbf s_v$.
We can safely remove the constraint (E1): once the pairs $F_v$ are fixed, selecting an edge of $\bigcup_{v\in V(G)}F_v$ also in $A$ does not help connectivity and can only increase the weight.

\paragraph*{Gluing $F$ into walks and cycles.}

A graph $D$ is a \emph{pseudo-ear} if
\begin{itemize}
    \item $D=(v_0,e_0,v_1,\dots, v_{\ell},e_{\ell},v_{\ell+1})$ is a walk with $\ell\geq 1$ such that its inner vertices $(v_1,\dots, v_{\ell})$ are pairwise distinct, $e_0\neq e_1$, and $e_{\ell-1}\neq e_{\ell}$, or
    \item $D$ is a cycle.
\end{itemize}
Note that, in the walk case, the endpoints need not be distinct from the inner vertices.
We also allow $e_0=e_{\ell}$.
For a walk pseudo-ear $D$ as above, recall that $V_{\mathrm{in}}(D)=\{v_1,\dots,v_{\ell}\}$; define $F_D(v_i):=\{e_{i-1},e_i\}$ for each $i\in\{1,\dots,\ell\}$.
Thus, for a walk pseudo-ear, $F_D(v_i)$ may differ from $\delta_D(v_i)$; this can occur when $v_i$ also appears as an endpoint of the walk, because $\delta_D(v_i)$ also sees the endpoint incidence.
For example, in \Cref{fig:choice-f-as-pseudo-ears}, $D_3=(f,g,h,a,g)$, and hence $F_{D_3}(g)$ consists of the edges $fg$ and $gh$, while $\delta_{D_3}(g)$ also contains the edge $ag$.
For a cycle $D$, recall that $V_{\mathrm{in}}(D)=V(D)$; define $F_D(v):=\delta_D(v)$ for every $v\in V(D)$.
A set $\mathcal{D}$ of pseudo-ears is a \emph{pseudo-ear packing}.
It \emph{spans} a vertex set $X$ if $\biguplus_{D\in \mathcal{D}}V_{\mathrm{in}}(D)$ is a partition of $X$; in this case, we call $\mathcal D$ a \emph{spanning pseudo-ear packing of $X$}.

For a choice of $F$, a spanning pseudo-ear packing $\mathcal{D}$ of $V(G)$ is \emph{consistent} with $F$ if, for each $v\in V(G)$, $F_D(v)=F_v$ for the unique $D\in \mathcal{D}$ with $v\in V_{\mathrm{in}}(D)$.
A choice of $F$ naturally defines such a packing as follows.
Each $F_v=\{e,f\}$ can naturally be regarded as a $3$-vertex walk pseudo-ear with inner vertex $v$ and edge set $\{e,f\}$.
We glue these walk pseudo-ears along shared end edges as follows.
If two current walks can be oriented as $D=(x_0,e_0,\dots,x_p,e,x_{p+1})$ and $D'=(x_p,e,x_{p+1},e'_1,y_1,\dots,e'_q,y_q)$, then gluing them along $e$ gives the walk $(x_0,e_0,\dots,x_p,e,x_{p+1},e'_1,y_1,\dots,e'_q,y_q)$.
This is again a walk pseudo-ear, since we glue along an end edge and the two current walks have disjoint sets of inner vertices.
If the two occurrences of the shared end edge belong to the same current walk, the same operation closes the walk into a cycle pseudo-ear.
Gluing exhaustively gives a set $\mathcal{D}$ of maximal walks and cycles.
By construction, $\mathcal{D}$ is a spanning pseudo-ear packing of $V(G)$ with $F_D(v)=F_v$ for all $v\in V_{\mathrm{in}}(D)$.
Moreover, every edge appears in at most one member of $\mathcal{D}$, and for this maximal packing we have
\[
    \biguplus_{D\in \mathcal{D}}E(D)=\bigcup_{v\in V(G)}F_v.
\]
See \Cref{fig:f-to-pseudo-ear-decomposition} for an example of this gluing.

\begin{figure}[t]
\centering
\begin{subfigure}[t]{0.47\linewidth}
\centering
\begin{tikzpicture}[
  x=0.95cm,y=0.95cm,
  >={Latex[length=1.8mm,width=1.2mm]},
  vertex/.style={circle,fill=black,inner sep=0pt,minimum size=2mm},
  vertex label/.style={font=\scriptsize,inner sep=1pt},
  chosen/.style={
    line width=0.6pt,
    postaction={decorate},
    decoration={markings,mark=at position 0.58 with {\arrow{Latex[length=1.8mm,width=1.2mm]}}}
  },
  note/.style={font=\scriptsize},
]
\coordinate (a) at (0,0);
\coordinate (b) at (1.1,0);
\coordinate (c) at (2.2,0);
\coordinate (d) at (3.3,0);
\coordinate (e) at (3.3,-1.15);
\coordinate (f) at (4.4,-1.15);
\coordinate (g) at (2.35,-2.1);
\coordinate (h) at (0.45,-1.85);
\coordinate (i) at (4.45,0.9);

\draw[chosen] (b) -- (a);
\draw[chosen] (b) to[bend left=10] (c);
\draw[chosen] (c) to[bend left=10] (b);
\draw[chosen] (c) -- (d);
\draw[chosen] (e) -- (d);
\draw[chosen] (e) -- (f);
\draw[chosen] (g) -- (f);
\draw[chosen] (g) to[bend left=10] (h);
\draw[chosen] (h) to[bend left=10] (g);
\draw[chosen] (h) to[bend left=10] (a);
\draw[chosen] (a) to[bend left=10] (h);
\draw[chosen] (a) -- (g);
\draw[chosen] (d) to[bend left=10] (i);
\draw[chosen] (i) to[bend left=10] (d);
\draw[chosen] (i) to[bend left=10] (f);
\draw[chosen] (f) to[bend left=10] (i);
\draw[chosen] (f) to[bend left=10] (d);
\draw[chosen] (d) to[bend left=10] (f);

\foreach \v in {a,b,c,d,e,f,g,h,i}
  \node[vertex] at (\v) {};
\node[vertex label,below=2pt] at (a) {$a$};
\node[vertex label,below=2pt] at (b) {$b$};
\node[vertex label,below=2pt] at (c) {$c$};
\node[vertex label,above left=1pt] at (d) {$d$};
\node[vertex label,left=2pt] at (e) {$e$};
\node[vertex label,right=2pt] at (f) {$f$};
\node[vertex label,below=2pt] at (g) {$g$};
\node[vertex label,left=2pt] at (h) {$h$};
\node[vertex label,right=2pt] at (i) {$i$};
\end{tikzpicture}
\caption{}
\label{fig:choice-f-as-directed-edges}
\end{subfigure}\hfill
\begin{subfigure}[t]{0.47\linewidth}
\centering
\begin{tikzpicture}[
  x=0.95cm,y=0.95cm,
  >={Latex[length=1.8mm,width=1.2mm]},
  vertex/.style={circle,fill=black,inner sep=0pt,minimum size=2mm},
  inner one/.style={circle,fill=blue!65!black,inner sep=0pt,minimum size=2mm},
  inner two/.style={circle,fill=teal!70!black,inner sep=0pt,minimum size=2mm},
  inner three/.style={circle,fill=violet!80!black,inner sep=0pt,minimum size=2mm},
  inner four/.style={circle,fill=orange!80!black,inner sep=0pt,minimum size=2mm},
  vertex label/.style={font=\scriptsize,inner sep=1pt},
  walk edge/.style={
    line width=1pt,blue!65!black
  },
  walk edge two/.style={
    line width=1pt,teal!70!black
  },
  walk edge three/.style={
    line width=1pt,violet!80!black
  },
  cycle edge/.style={
    line width=1pt,orange!80!black
  },
  note/.style={font=\scriptsize},
]
\coordinate (a) at (0,0);
\coordinate (b) at (1.1,0);
\coordinate (c) at (2.2,0);
\coordinate (d) at (3.3,0);
\coordinate (e) at (3.3,-1.15);
\coordinate (f) at (4.4,-1.15);
\coordinate (g) at (2.35,-2.1);
\coordinate (h) at (0.45,-1.85);
\coordinate (i) at (4.45,0.9);

\draw[walk edge] (a) -- (b);
\draw[walk edge] (b) -- (c);
\draw[walk edge] (c) -- (d);
\draw[walk edge two] (d) -- (e);
\draw[walk edge two] (e) -- (f);
\draw[walk edge three] (f) -- (g);
\draw[walk edge three] (g) -- (h);
\draw[walk edge three] (h) -- (a);
\draw[walk edge three] (a) -- (g);
\draw[cycle edge] (d) -- (i);
\draw[cycle edge] (i) -- (f);
\draw[cycle edge] (f) -- (d);

\foreach \v in {b,c}
  \node[inner one] at (\v) {};
\node[inner two] at (e) {};
\foreach \v in {a,g,h}
  \node[inner three] at (\v) {};
\foreach \v in {d,f,i}
  \node[inner four] at (\v) {};
\node[vertex label,below=2pt] at (a) {$a$};
\node[vertex label,below=2pt] at (b) {$b$};
\node[vertex label,below=2pt] at (c) {$c$};
\node[vertex label,above left=1pt] at (d) {$d$};
\node[vertex label,left=2pt] at (e) {$e$};
\node[vertex label,right=2pt] at (f) {$f$};
\node[vertex label,below=2pt] at (g) {$g$};
\node[vertex label,left=2pt] at (h) {$h$};
\node[vertex label,right=2pt] at (i) {$i$};
\node[note,blue!65!black] at (1.65,0.42) {$D_1$};
\node[note,teal!70!black] at (2.88,-0.62) {$D_2$};
\node[note,violet!80!black] at (1.25,-2.45) {$D_3$};
\node[note,orange!80!black] at (4.95,0.45) {$D_4$};
\end{tikzpicture}
\caption{}
\label{fig:choice-f-as-pseudo-ears}
\end{subfigure}
\caption{How a choice of $F$ induces pseudo-ears. (a) We display the $F_v$ part of a \textsc{2VCSS-BP} solution: an arrow leaving $v$ means that the corresponding edge belongs to $F_v$, so every vertex has exactly two outgoing edges; if both endpoints select the same edge, we draw the edge twice with opposite directions. (b) Gluing along such bidirected edges gives four pseudo-ears: $D_1$ is a walk from $a$ to $d$, $D_2$ is a walk from $d$ to $f$, $D_3=(f,g,h,a,g)$ is a walk from $f$ to $g$ whose endpoint $g$ is also an inner vertex, and $D_4$ is a cycle on $\{d,i,f\}$. The inner vertices are colored according to the pseudo-ear containing them.}
\label{fig:f-to-pseudo-ear-decomposition}
\end{figure}

Thus, instead of selecting $F$, we can select $\mathcal{D}$.
Here, we can safely ignore the condition that each $D\in \mathcal{D}$ is maximal.
Since the selected packing need not be obtained by the exhaustive gluing above, the same edge may occur more than once in $\bigcup_{D\in\mathcal D}E(D)$.
Such an edge is counted each time it appears in pseudo-ears.
In particular, if an edge $e$ appears as end edges of two walk pseudo-ears that may be the same and remains unglued, then the weight of $e$ is counted twice in the total weight.


\paragraph*{Explicit new formulation.}

Now, the new formulation is:
\boxproblemnoparam{$2$-Vertex-Connected Spanning Subgraph with Boundary-Pair Costs, Modified Form (2VCSS-BP-M)}{
    a multigraph $G$ with no self-loops and $|V(G)|\geq 2$,
    integers $k$ and $W\geq 0$,
    an edge weight vector $\mathbf w_e=(\chi_e,c_e)\in \{0,1\}\times\{0,\dots,W\}$ for each $e\in E(G)$,
    a finite set $U_0$, and
    a finite set $\mathcal C_v$ of triples $(F,S,\mathbf s)$ with $F\in\binom{\delta_G(v)}{2}$, $S\subseteq U_0$, and $\mathbf s\in\{0,\dots,k\}\times\{0,\dots,kW\}$, for each $v\in V(G)$.
}{
    For each $U\subseteq U_0$, do the following:

    Select $A\subseteq E(G)$.
    Furthermore, select a spanning pseudo-ear packing $\mathcal{D}$ of $V(G)$.
    For each $D\in\mathcal D$ and $v\in V_{\mathrm{in}}(D)$, select $(S_{D,v},\mathbf s_{D,v})$ such that $(F_D(v),S_{D,v},\mathbf s_{D,v})\in\mathcal C_v$.
    Among $(A,\mathcal{D},(S_{D,v},\mathbf s_{D,v}))$ with
    \begin{enumerate}
        \item[(E2)] $\{S_{D,v}\}_{D\in\mathcal D,\,v\in V_{\mathrm{in}}(D)}$ is a partition of $U$, and
        \item[(E3)] $\left(V(G),\bigcup_{D\in \mathcal{D}}E(D)\cup A\right)$ is $2$-vertex-connected,
    \end{enumerate}
    Output the set of all pairs $(\alpha,\beta)\in\Omega$ for which some such $(A,\mathcal D,(S_{D,v},\mathbf s_{D,v}))$ satisfies
    $\sum_{e\in A}\mathbf w_e+\sum_{D\in\mathcal D}\left(\sum_{e\in E(D)}\mathbf w_e+\sum_{v\in V_{\mathrm{in}}(D)}\mathbf s_{D,v}\right)\leq(\alpha,\beta)$.
}

In computing the total weight, $E(D)$ is read as the multiset of edge occurrences along $D$.
Thus, if the same edge $e$ appears twice in a single pseudo-ear $D$, then $\mathbf w_e$ is added twice.

\begin{lemma}\label{lem:bp_to_bpm}
Fix $U\subseteq U_0$.
The output entry of the prescribed-$U$ variant of \textsc{2VCSS-BP} equals the output entry of the prescribed-$U$ variant of \textsc{2VCSS-BP-M}.
\end{lemma}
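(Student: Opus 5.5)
We prove the two inclusions between the output entries separately. Both entries are upward closed, so it suffices to show that every witness on one side can be converted into a witness on the other side with a componentwise no larger weight vector. Throughout, \(U\subseteq U_0\) is fixed.

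For \textsc{2VCSS-BP} \(\subseteq\) \textsc{2VCSS-BP-M}, take a witness \((A,(F_v,S_v,\mathbf s_v)_v)\) of \textsc{2VCSS-BP}. Glue the three-vertex walks \(F_v\) exhaustively, as described above, to obtain the maximal packing \(\mathcal D\). This packing is a spanning pseudo-ear packing of \(V(G)\) with \(F_D(v)=F_v\) for the unique \(D\) having \(v\in V_{\mathrm{in}}(D)\). It also satisfies \(\biguplus_{D}E(D)=\bigcup_vF_v\), and each edge occurs exactly once in the multiset union.

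We then set \(A':=A\setminus\bigcup_vF_v\) and \((S_{D,v},\mathbf s_{D,v}):=(S_v,\mathbf s_v)\). Condition (E2) is inherited directly. For (E3), the edge set \(\bigcup_DE(D)\cup A'\) equals \(A\) because of (E1), so it is \(2\)-vertex-connected. For the weight, each edge of \(A\) is counted exactly once, either in \(A'\) or in exactly one pseudo-ear. So the \textsc{2VCSS-BP-M} weight equals the original weight.

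For \textsc{2VCSS-BP-M} \(\subseteq\) \textsc{2VCSS-BP}, take a witness \((A,\mathcal D,(S_{D,v},\mathbf s_{D,v}))\). Since \(\mathcal D\) spans \(V(G)\), every \(v\) lies in \(V_{\mathrm{in}}(D)\) for a unique \(D\), so we may define \(F_v:=F_D(v)\), \(S_v:=S_{D,v}\) and \(\mathbf s_v:=\mathbf s_{D,v}\).

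Two points need checking here.
\begin{itemize}
\item \(F_D(v)\) must be a genuine \(2\)-subset of \(\delta_G(v)\). For walk pseudo-ears this follows from \(e_{i-1}\neq e_i\): at the ends this is assumed explicitly, and at interior positions it holds because the inner vertices are distinct, so a repeated consecutive edge would force \(v_{i-1}=v_{i+1}\). For cycles it holds because \(G\) has no self-loops; for a two-vertex cycle made of two parallel edges this is fine as well. With this, \((F_v,S_v,\mathbf s_v)\in\mathcal C_v\) holds by the selection rule of \textsc{2VCSS-BP-M}.
\item The edge set must be large enough. Let \(A'':=A\cup\bigcup_DE(D)\). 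Then \(\bigcup_vF_v\subseteq\bigcup_DE(D)\subseteq A''\), so (E1) holds. Condition (E3) is the same graph condition as in \textsc{2VCSS-BP-M}, and (E2) is immediate.
\end{itemize}

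Finally, the weight of \(A''\) as a set is at most the \textsc{2VCSS-BP-M} weight, which counts multiset occurrences, because all weights are nonnegative. The \(\mathbf s\)-terms coincide.

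The main obstacle is the bookkeeping in this second direction. The non-maximal packing may repeat edges or contain edges outside every \(F_v\), and the endpoint of a walk may coincide with an inner vertex. The fix is to pass to the set union, using that all weights are nonnegative and that the selected edge set in \textsc{2VCSS-BP} need not equal \(\bigcup F_v\). The remaining detail is the verification above that each \(F_D(v)\) really consists of two distinct edges.
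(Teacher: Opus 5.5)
Your proposal is correct and follows the paper's proof: the maximal glued packing together with $A\setminus\bigcup_vF_v$ gives one direction, and the set union $A\cup\bigcup_{D}E(D)$, with nonnegative weights and upward closure, gives the other. Your extra check that each $F_D(v)$ consists of two distinct edges is harmless but not needed, since the selection rule already requires $(F_D(v),S_{D,v},\mathbf s_{D,v})\in\mathcal C_v$, and that membership forces $F_D(v)\in\binom{\delta_G(v)}{2}$.
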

\begin{proof}
We prove that every feasible \textsc{2VCSS-BP} solution maps to a feasible \textsc{2VCSS-BP-M} solution of the same weight, and that every \textsc{2VCSS-BP-M} solution maps back to a \textsc{2VCSS-BP} solution of componentwise no larger weight.
Let $(A,(F_v,S_v,\mathbf s_v)_{v\in V(G)})$ be a feasible \textsc{2VCSS-BP} solution, and let $\mathcal D$ be the maximal pseudo-ear packing induced by the pairs $F_v$.
For the unique $D\in\mathcal D$ with $v\in V_{\mathrm{in}}(D)$, set $S_{D,v}:=S_v$ and $\mathbf s_{D,v}:=\mathbf s_v$.
Then $(A\setminus\bigcup_vF_v,\mathcal D,(S_{D,v},\mathbf s_{D,v}))$ is a feasible \textsc{2VCSS-BP-M} solution.
The selected edge set and the pseudo-ear occurrences together count every edge of the original set $A$ exactly once, so the weight vector is unchanged.
Thus every pair in the original output entry also belongs to the modified output entry.

Conversely, from $(A,\mathcal D,(S_{D,v},\mathbf s_{D,v}))$, select the edge set $A\cup\bigcup_{D\in\mathcal D}E(D)$ and, for the unique $D\in\mathcal D$ with $v\in V_{\mathrm{in}}(D)$, select the triple $(F_D(v),S_{D,v},\mathbf s_{D,v})\in\mathcal C_v$.
This gives a feasible \textsc{2VCSS-BP} solution.
Its weight vector is at most that of the \textsc{2VCSS-BP-M} solution, because repeated edge occurrences in the pseudo-ears are counted only once in the selected edge set.
Thus, if the \textsc{2VCSS-BP-M} solution witnesses a pair \((\alpha,\beta)\), then the corresponding \textsc{2VCSS-BP} solution witnesses some pair \((\alpha',\beta')\leq(\alpha,\beta)\).
Since the original output entry is upward closed, \((\alpha,\beta)\) also belongs to the original output entry.
Hence the two output entries are equal.
\end{proof}


\paragraph*{The weight polynomial.}
For $U\subseteq U_0$, let $\mathcal F_U$ be the set of tuples $(A,\mathcal D,(S_{D,v},\mathbf s_{D,v})_{D,v})$ satisfying (E2) and (E3), where $(F_D(v),S_{D,v},\mathbf s_{D,v})\in\mathcal C_v$ for each $D\in\mathcal D$ and $v\in V_{\mathrm{in}}(D)$.
For such a tuple, define its weight vector by
\[
    \mathbf w(A,\mathcal D,(S_{D,v},\mathbf s_{D,v})_{D,v})
    :=
    \sum_{e\in A}\mathbf w_e+
    \sum_{D\in\mathcal D}\left(\sum_{e\in E(D)}\mathbf w_e+\sum_{v\in V_{\mathrm{in}}(D)}\mathbf s_{D,v}\right).
\]
Write $\mathbf w(A,\mathcal D,(S_{D,v},\mathbf s_{D,v})_{D,v})=(\chi(A,\mathcal D,(S_{D,v},\mathbf s_{D,v})_{D,v}),c(A,\mathcal D,(S_{D,v},\mathbf s_{D,v})_{D,v}))$.
Define the polynomial \(P_U(\xi,\zeta)\in\ZZ[\xi,\zeta]\) by
\[
    P_U(\xi,\zeta)
    :=
    \sum_{(A,\mathcal D,(S_{D,v},\mathbf s_{D,v})_{D,v})\in\mathcal F_U}
    \xi^{\chi(A,\mathcal D,(S_{D,v},\mathbf s_{D,v})_{D,v})}\zeta^{c(A,\mathcal D,(S_{D,v},\mathbf s_{D,v})_{D,v})}.
\]
The prescribed-$U$ output entry consists of the pairs $(\alpha,\beta)\in\Omega$ for which $P_U(\xi,\zeta)$ has a nonzero monomial $\xi^{\alpha'}\zeta^{\beta'}$ with $(\alpha',\beta')\leq(\alpha,\beta)$.
The $\xi$-degree of $P_U(\xi,\zeta)$ is bounded by a polynomial in the input size and $k$, and the $\zeta$-degree is bounded by that polynomial times $W$.
Thus polynomially many evaluations in $\xi$ and $O^*(W)$ evaluations in $\zeta$ recover $P_U(\xi,\zeta)$ by interpolation.
Afterwards, we inspect only the coefficients $\xi^{\alpha}\zeta^{\beta}$ with $(\alpha,\beta)\in\Omega$.

\paragraph*{Removing the \(S\)-partition constraint.}
We first eliminate the constraint that the sets $S_v$ form a partition of the prescribed universe.
Let $\mathcal M$ be the set of pairs $(A,\mathcal D)$ such that $A\subseteq E(G)$ and $\mathcal D$ is a spanning pseudo-ear packing of $V(G)$.
For $(A,\mathcal D)\in\mathcal M$, let
\[
    H_{A,\mathcal D}:=\left(V(G),A\cup\bigcup_{D\in\mathcal D}E(D)\right).
\]
Let $\mathcal M^{\mathrm{2vcon}}:=\{(A,\mathcal D)\in\mathcal M:H_{A,\mathcal D}\text{ is }2\text{-vertex-connected}\}$.
For edge weights $\kappa_e$ and boundary-pair weights $\eta_{v,F}$, define
\[
    \mathrm{wt}_{A,\mathcal D}(\kappa,\eta):=
    \prod_{e\in A}\kappa_e\cdot
    \prod_{D\in\mathcal D}
    \left(
        \prod_{e\in E(D)}\kappa_e
        \prod_{v\in V_{\mathrm{in}}(D)}\eta_{v,F_D(v)}
    \right).
\]
Define
\[
    Q(\kappa,\eta)
    :=
    \sum_{(A,\mathcal D)\in\mathcal M^{\mathrm{2vcon}}}
    \mathrm{wt}_{A,\mathcal D}(\kappa,\eta).
\]

Introduce variables $\omega$ and $\gamma_p$ for $p\in U_0$, and let
\[
    \kappa_e:=\xi^{\chi_e}\zeta^{c_e}
    \quad\text{and}\quad
    \eta_{v,F}(\gamma):=\sum_{\substack{S\subseteq U_0,\ \mathbf s=(\alpha,\beta)\\ (F,S,\mathbf s)\in\mathcal C_v}}
    \omega^{|S|}\xi^{\alpha}\zeta^{\beta}\prod_{p\in S}\gamma_p .
\]
Thus the edge weight vector $\mathbf w_e$ is encoded in $\kappa_e$, while the boundary-pair weight vector $\mathbf s$ is encoded in the corresponding $\eta$-term.
For $Y\subseteq U_0$, let $\mathbf{1}_Y$ be the Boolean assignment that sends $\gamma_p$ to $1$ if $p\in Y$ and to $0$ otherwise.
We have the following inversion.

\begin{lemma}\label{lem:app-cost-from-q}
For every $U\subseteq U_0$, we have
\[
    P_U(\xi,\zeta)
    =
    [\omega^{|U|}]
    \sum_{Y\subseteq U}(-1)^{|U|-|Y|}
    Q(\kappa,\eta(\mathbf{1}_{Y})).
\]
\end{lemma}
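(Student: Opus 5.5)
The plan is to expand $Q(\kappa,\eta(\mathbf 1_Y))$ as a sum over $(A,\mathcal D)\in\mathcal M^{\mathrm{2vcon}}$ together with a choice of triple at each inner vertex. Inclusion--exclusion over $Y$ should then force the chosen sets $S$ to cover $U$, and the $\omega$-degree should force them to be pairwise disjoint.

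First, expand the product of $\eta$-factors. For fixed $(A,\mathcal D)$, the product $\prod_{D}\prod_{v\in V_{\mathrm{in}}(D)}\eta_{v,F_D(v)}(\gamma)$ is a sum over selections of triples $(F_D(v),S_{D,v},\mathbf s_{D,v})\in\mathcal C_v$, one for each pair $(D,v)$. Since $\mathcal D$ spans $V(G)$, each vertex is the inner vertex of exactly one pseudo-ear, so there is exactly one index $(D,v)$ per vertex $v$. Each selection contributes
\[
\omega^{\sum|S_{D,v}|}\,\xi^{\sum\alpha_{D,v}}\zeta^{\sum\beta_{D,v}}\prod_{D,v}\prod_{p\in S_{D,v}}\gamma_p .
\]
Multiplying by the $\kappa$-factors adds $\sum\mathbf w_e$ to the exponents, where edges are counted with multiplicity along each $D$ and each edge of $A$ is counted once. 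The exponents of $\xi$ and $\zeta$ are therefore exactly $\chi$ and $c$ of the corresponding tuple. So $Q(\kappa,\eta(\gamma))$ is a sum over all tuples $(A,\mathcal D,(S_{D,v},\mathbf s_{D,v}))$ satisfying (E3), with (E2) not imposed, and each term is $\xi^{\chi}\zeta^{c}\,\omega^{\sum|S_{D,v}|}\prod\gamma$.

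Second, substitute $\gamma=\mathbf 1_Y$. The $\gamma$-monomial evaluates to $1$ exactly when $\bigcup S_{D,v}\subseteq Y$, and to $0$ otherwise. For a fixed tuple with $T:=\bigcup S_{D,v}$, the signed sum $\sum_{Y\subseteq U}(-1)^{|U|-|Y|}[T\subseteq Y]$ equals $1$ if $T=U$ and $0$ otherwise; this is \Cref{lem:app-all-subuniverse-ie}, or equivalently the binomial identity on $U\setminus T$, using that $T\subseteq U$ is needed for any nonzero term. After inclusion--exclusion, only tuples whose sets $S_{D,v}$ have union exactly $U$ survive, each with coefficient $1$.

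Third, extract $[\omega^{|U|}]$. Among the surviving tuples, $\sum|S_{D,v}|\ge|U|$, with equality iff the sets are pairwise disjoint. Empty $S$'s are allowed, since (E2) is read with possibly empty blocks as in the original formulation. Hence the coefficient of $\omega^{|U|}$ keeps exactly the tuples satisfying (E2) and (E3), each contributing $\xi^{\chi}\zeta^{c}$, and this sum is precisely $P_U$. The operations commute because everything is a finite polynomial identity in $\omega,\xi,\zeta$.

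The main point needing care is the bookkeeping in the first step. Distinct triples in $\mathcal C_v$ with the same $(S,\mathbf s)$ but different $F$ are separated by the index $F_D(v)$, and duplicate edge occurrences must match the multiset convention in $\mathbf w$. Then the terms of $Q$ correspond bijectively to the tuples counted by $P_U$ without (E2), with no overcounting. I expect this correspondence to be routine once it is stated explicitly.
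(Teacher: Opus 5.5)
Your proposal is correct and follows the paper's proof essentially step by step: you expand $Q(\kappa,\eta(\mathbf 1_Y))$ into tuples satisfying (E3), use the subset-lattice M\"obius identity to force $\bigcup_{D,v}S_{D,v}=U$, and use the $\omega$-degree to force disjointness, which is the paper's proof. Your additional bookkeeping remarks are consistent with the paper's definitions: multiset edge occurrences, distinct triples in $\mathcal C_v$, and empty $S$'s allowed in (E2).
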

\begin{proof}
Fix $U\subseteq U_0$.
Expand the right-hand side.
For fixed choices $(A,\mathcal D,(S_{D,v},\mathbf s_{D,v})_{D,v})$ satisfying (E3), the term corresponding to $Y\subseteq U$ survives in $Q(\kappa,\eta(\mathbf{1}_Y))$ exactly when $\bigcup_{D,v} S_{D,v}\subseteq Y$.
Thus these fixed choices are multiplied by the sum of $(-1)^{|U|-|Y|}$ over all $Y$ such that $\bigcup_{D,v} S_{D,v}\subseteq Y\subseteq U$.
By \Cref{lem:app-all-subuniverse-ie}, this sum is $1$ if $\bigcup_{D,v} S_{D,v}=U$ and is $0$ otherwise.
The coefficient of $\omega^{|U|}$ then keeps exactly the surviving choices satisfying $\sum_{D,v} |S_{D,v}|=|U|$.
Together with $\bigcup_{D,v} S_{D,v}=U$, this is equivalent to saying that $\{S_{D,v}\}_{D\in\mathcal D,\,v\in V_{\mathrm{in}}(D)}$ is a partition of $U$.
The vector $\mathbf s_{D,v}=(\alpha_{D,v},\beta_{D,v})$ is represented by the monomial $\xi^{\alpha_{D,v}}\zeta^{\beta_{D,v}}$ in the corresponding $\eta$-factor.
Thus the right-hand side sums $\xi^{\chi(A,\mathcal D,(S_{D,v},\mathbf s_{D,v})_{D,v})}\zeta^{c(A,\mathcal D,(S_{D,v},\mathbf s_{D,v})_{D,v})}$ exactly over tuples $(A,\mathcal D,(S_{D,v},\mathbf s_{D,v})_{D,v})\in\mathcal F_U$, and hence equals $P_U(\xi,\zeta)$.
\end{proof}

\Cref{lem:app-cost-from-q} implies the following.
For fixed values of $\xi$ and $\zeta$, once the values $Q(\kappa,\eta(\mathbf{1}_{Y}))$ are known for all $Y\subseteq U_0$, the right-hand sides in \Cref{lem:app-cost-from-q} for all $U\subseteq U_0$ can be computed by one M\"obius transform using \Cref{lem:app-all-subuniverse-ie}.
The coefficient of $\omega^{|U|}$ is obtained by interpolation in $\omega$, and interpolation in $\xi$ and $\zeta$ recovers all polynomials $P_U(\xi,\zeta)$ within the same $O^*(2^{|U_0|}W)$ evaluation bound, suppressing polynomial factors in the input size and in $k$.

\section{Evaluating the Pseudo-Ear-Packing Polynomial}\label{sec:2VCSSBP-evaluation}

It remains to evaluate $Q(\kappa,\eta)$ efficiently for fixed values of the edge weights $\kappa$ and boundary-pair weights $\eta$.
\Cref{subsec:evaluation-dp-onesum} reduces this task to evaluating polynomials \(R(\kappa,\tau)\), which are defined by the same kind of summation over edge choices and pseudo-ear packings as \(Q(\kappa,\eta)\), but with an ordinary connectedness constraint in place of the 2-vertex-connectivity constraint.
\Cref{subsec:evaluation-r-polynomials} computes these \(R\)-polynomials.
Finally, \Cref{subsec:evaluation-runtime} analyzes the running time.

\subsection{M\"obius inversion and dynamic programming on 1-sum trees}\label{subsec:evaluation-dp-onesum}
In this subsection, we reduce the evaluation of \(Q(\kappa,\eta)\) to the evaluation of \(R(\kappa,\tau)\).
We first replace the 2-vertex-connectivity requirement in \(Q\) by a M\"obius-weighted sum over 1-sum trees: in this sum, exactly the 2-vertex-connected terms survive, and all other terms cancel; see \Cref{lem:app-bag-tree-mobius-global}.
This sum is evaluated by dynamic programming with auxiliary polynomials \(Q^\circ\) and \(Q^\bullet\).
A transition for \(Q^\circ\) fixes the top bag \(Z\) and reduces the remaining task to evaluating an \(R\)-polynomial on \(Z\), where the ordinary connectedness constraint remains and the 1-sum-tree pieces attached outside \(Z\) are encoded by an attachment table \(\tau\).

\paragraph*{Treating 2-vertex connectivity by M\"obius inversion on 1-sum trees.}
A \emph{1-sum tree} \(T\) on a finite vertex set \(X\) is a bipartite tree whose nodes are the original vertices in \(X\) and a family \(\mathcal B_T\) of bag nodes.
Each bag node has degree at least two and is identified with its neighborhood in \(X\), called a bag.
For a connected graph \(H\) on \(X\), a 1-sum tree is \emph{consistent with \(H\)} if each bag is the vertex union of a nonempty collection of blocks of \(H\) whose union is connected in \(H\), and these collections partition the set of blocks of \(H\).
In particular, every edge of \(H\) is contained in a unique bag of \(T\), and \(H[B]\) is connected for every bag \(B\in\mathcal B_T\).
For an original vertex \(v\), let \(d_T(v)\) be its degree in \(T\), equivalently the number of bags containing \(v\).

Let $\mathcal M^{\mathrm{con}}$ be the set of pairs $(A,\mathcal D)\in\mathcal M$ for which $H_{A,\mathcal D}$ is connected.
Let $\mathcal T$ be the set of 1-sum trees on \(V(G)\) that are consistent with \(H_{A,\mathcal D}\) for at least one \((A,\mathcal D)\in\mathcal M^{\mathrm{con}}\), and let $\mathcal M_T^{\mathrm{con}}:=\{(A,\mathcal D)\in\mathcal M^{\mathrm{con}}:T\text{ is consistent with }H_{A,\mathcal D}\}$.

\begin{lemma}\label{lem:app-bag-tree-mobius-global}
We have
\[
    Q(\kappa,\eta)
    =
    \sum_{T\in\mathcal T}
    \left(
    \prod_{v\in V(G)}(-1)^{d_T(v)-1}(d_T(v)-1)!
    \cdot
    \sum_{(A,\mathcal D)\in\mathcal M_T^{\mathrm{con}}}
    \mathrm{wt}_{A,\mathcal D}(\kappa,\eta)
    \right).
\]
\end{lemma}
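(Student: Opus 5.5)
The plan is to exchange the order of summation and apply the overview form of \Cref{lem:overview-bag-tree-mobius}, proved in full, to each connected graph \(H_{A,\mathcal D}\). Every 1-sum tree consistent with some \(H_{A,\mathcal D}\) with \((A,\mathcal D)\in\mathcal M^{\mathrm{con}}\) belongs to \(\mathcal T\) by definition, and \(\mathcal M_T^{\mathrm{con}}\) collects exactly the pairs compatible with \(T\). So the right-hand side equals
\[
    \sum_{(A,\mathcal D)\in\mathcal M^{\mathrm{con}}}\mathrm{wt}_{A,\mathcal D}(\kappa,\eta)\sum_{T\in\mathcal T(H_{A,\mathcal D})}\prod_{v\in V(G)}(-1)^{d_T(v)-1}(d_T(v)-1)!,
\]
where \(\mathcal T(K)\) denotes the 1-sum trees consistent with \(K\). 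It then suffices to show that the inner sum is \(1\) if \(H_{A,\mathcal D}\) is 2-vertex-connected and \(0\) otherwise. Since \(\mathcal M^{\mathrm{2vcon}}\subseteq\mathcal M^{\mathrm{con}}\), this gives \(Q(\kappa,\eta)\).

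For the inner identity on a connected loopless multigraph \(K\) (\(H_{A,\mathcal D}\) has no self-loops because \(G\) has none), I would make the bijection of the sketch precise. Given a family \((\pi_v)_{v\in V(K)}\), where \(\pi_v\) is a partition of \(\mathcal B_K(v)\), define an equivalence on blocks generated by ``\(B\sim B'\) if they share a vertex \(v\) and lie in the same part of \(\pi_v\)''. The classes give bags (unions of blocks, connected because they are generated through shared vertices), and the incidence graph of vertices and bags gives \(T\).

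Two checks are needed.
\begin{itemize}
\item \(T\) is a tree and \(d_T(v)=|\pi_v|\). This is where the tree structure of the block-cut tree is used: two blocks at \(v\) in different parts of \(\pi_v\) cannot become merged through some other route, since any chain of blocks connecting them avoiding the \(v\)-merge would create a cycle in the block-cut tree. Acyclicity of \(T\) follows similarly, as \(T\) is obtained from the block-cut tree by contracting, for each \(v\), each part's star.
\item Every bag node has degree at least two, since each block has at least two vertices.
\end{itemize}
Conversely, a consistent \(T\) determines \(\pi_v\) by grouping the blocks of \(\mathcal B_K(v)\) by the bag containing them, and the two maps are mutually inverse. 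The step where a merged class containing blocks \(B,B'\) at \(v\) forces \(B,B'\) into the same part of \(\pi_v\) is the main obstacle. I would handle it via uniqueness of paths in the block-cut tree: a connected union of blocks containing both \(B\) and \(B'\) must pass through \(v\) between them.

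With the bijection in hand, the inner sum factorizes as \(\prod_v\sum_{\pi_v\in\Pi(\mathcal B_K(v))}\mu_{\Pi}(\pi_v,\hat1)\). By \Cref{lem:partition-lattice-mobius} and the defining recursion of \(\mu\), each factor equals \(\sum_{\pi\preceq\hat1}\mu(\pi,\hat1)\), which is \(1\) when \(|\mathcal B_K(v)|=1\) and \(0\) otherwise. The product is therefore nonzero, and equal to \(1\), exactly when every vertex lies in a single block. For connected \(K\) this means \(K\) is one block, which is 2-vertex-connected in the paper's convention, including the parallel-edge two-vertex case. This completes the plan.
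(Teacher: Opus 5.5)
Your proposal is correct and follows the paper's proof: you exchange the two sums, biject the 1-sum trees consistent with $H_{A,\mathcal D}$ with tuples of partitions of the sets $\mathcal B_{H}(v)$, and factor the multiplier into vertex-wise partition-lattice M\"obius sums. The paper phrases your equivalence-class construction as successive identifications of block nodes in the block-cut tree. Your explicit argument that blocks in different parts of $\pi_v$ cannot be merged through another route, which gives $d_T(v)=|\pi_v|$, spells out a step the paper leaves implicit. One small correction: the identity $\sum_{\pi}\mu(\pi,\hat 1)=[\,|\mathcal B_K(v)|=1\,]$ is the dual of the defining recursion rather than the recursion itself, and it follows most directly from \Cref{lem:mobius-inversion-top} with $a_x=[x=\hat 0]$.
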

\begin{proof}
Fix $(A,\mathcal D)\in\mathcal M^{\mathrm{con}}$ and let $H:=H_{A,\mathcal D}$.
Let $\mathcal B_H$ be the set of blocks of $H$, and let $\mathcal B_H(v):=\{B\in\mathcal B_H:v\in B\}$.
We claim that 1-sum trees $T$ consistent with $H$ are in bijection with choices, for every $v\in V(G)$, of a partition of $\mathcal B_H(v)$.

First fix such partitions.
Start with the block-cut tree of $H$.
For each original vertex $v$ and each part $\mathcal P$ of the chosen partition of $\mathcal B_H(v)$, identify all block nodes in $\mathcal P$ and delete parallel edges created at $v$.
Identifying several neighbors of one vertex in a tree, followed by deleting parallel edges, keeps the graph connected and creates no cycle; repeating this operation for all $v$ therefore gives a tree.
The vertices obtained by identifying block nodes are bags whose vertex sets are the unions of the original blocks identified into them.
For each such bag, the original blocks identified into it are connected through shared original vertices, and hence their union is connected in \(H\).
These collections partition the set of blocks of \(H\), so the resulting tree is consistent with \(H\).

Conversely, let $T$ be a 1-sum tree consistent with $H$.
By consistency, the blocks of \(H\) are partitioned among the bags of \(T\).
For each vertex $v$, put two blocks of $\mathcal B_H(v)$ in the same part exactly when they are contained in the same bag of $T$.
This defines a partition of $\mathcal B_H(v)$.
It remains to see that the construction recovers each bag of $T$.
Fix a bag $B$ of $T$.
By consistency, the blocks contained in \(B\) form a connected collection.
Hence any two of these blocks are joined by a sequence of blocks contained in \(B\), where consecutive blocks share an original vertex.
At each shared vertex, the corresponding two blocks are put in the same part of the partition recovered from \(T\), so the vertex-wise identifications merge all blocks contained in \(B\) into one bag.
Blocks contained in distinct bags of \(T\) are never put in the same part at any original vertex.
Applying this to every bag of $T$, the construction recovers exactly the bags of $T$, and hence recovers $T$.
Under this bijection, $d_T(v)$ is the number of parts in the partition of $\mathcal B_H(v)$.
Thus the multiplier of the fixed weight term $\mathrm{wt}_{A,\mathcal D}(\kappa,\eta)$ in the right-hand side is
\[
    \prod_{v\in V(G)}
    \sum_{\pi\in\Pi(\mathcal B_H(v))}
    (-1)^{|\pi|-1}(|\pi|-1)! .
\]
By \Cref{lem:partition-lattice-mobius}, the inner sum is $1$ if $|\mathcal B_H(v)|=1$ and $0$ otherwise.
Hence this multiplier is $1$ exactly when \(H\) satisfies the \(2\)-vertex-connectivity condition in the definition of \(\mathcal M^{\mathrm{2vcon}}\), and is $0$ otherwise.
Thus the right-hand side keeps precisely the pairs in $\mathcal M^{\mathrm{2vcon}}$, with their original weights.
\end{proof}

\paragraph*{Dynamic programming on 1-sum trees.}
For dynamic programming, we root 1-sum trees at an original vertex.
Fix an original vertex $r$ of a 1-sum tree $T$, and view $T$ as rooted at $r$.
For an original vertex $v$ of $T$, let $d^+_{T,r}(v)$ be the number of connected components of $T-v$ that do not contain $r$.
Equivalently, $d^+_{T,r}(v)=d_T(v)-1$ for $v\neq r$, while $d^+_{T,r}(r)=d_T(r)$ unless $T$ is the one-vertex graph.
When the root is clear, we write $d_T^+(v)$.

Let $X\subseteq V(G)$, $u\in X$, and $F\in\binom{\delta_G(u)}{2}$.
We say that a spanning pseudo-ear packing $\mathcal D$ of $X$ is \emph{$(u,F)$-open} if $F_{D_u}(u)=F$ for the unique pseudo-ear $D_u$ with $u\in V_{\mathrm{in}}(D_u)$, and $F_D(v)\subseteq E(G[X])$ for every $v\in X\setminus\{u\}$.
Thus $F$ is the only boundary pair allowed to use edges outside $G[X]$.
\Cref{fig:uf-open-packing} illustrates this condition.
In the figure, exactly one edge of $F$ leaves $G[X]$; in general, zero, one, or two edges of $F$ may leave $G[X]$.

\begin{figure}[t]
\centering
\begin{tikzpicture}[
  x=0.84cm,y=1cm,
  vertex/.style={circle,fill=black,inner sep=0pt,minimum size=2.2mm},
  root/.style={circle,fill=red!75!black,inner sep=0pt,minimum size=2.5mm},
  vertex label/.style={font=\scriptsize,inner sep=1pt},
  note/.style={font=\scriptsize},
  root edge/.style={line width=1.15pt,red!75!black},
  walk edge/.style={line width=0.9pt,blue!65!black},
  box/.style={draw=black!45,fill=black!4,rounded corners=2pt},
]
\draw[box] (0.95,-0.85) rectangle (4.05,1.95);
\node[note] at (4.22,1.76) {$X$};

\coordinate (x) at (2.5,2.85);
\coordinate (u) at (2.5,1.7);
\coordinate (a) at (1.35,0.55);
\coordinate (b) at (2.5,0.55);
\coordinate (c) at (3.65,0.55);
\coordinate (d) at (1.35,-0.6);
\coordinate (e) at (2.5,-0.6);
\coordinate (f) at (3.65,-0.6);

\draw[root edge] (x) -- (u) -- (a) -- (b) -- (c) -- (e);
\draw[walk edge] (c) -- (f) -- (e) -- (d) -- (a);

\node[vertex] at (x) {};
\node[root] at (u) {};
\foreach \v in {a,b,c,d,e,f}
  \node[vertex] at (\v) {};

\node[vertex label,right=4pt] at (x) {$x$};
\node[vertex label,right=4pt] at (u) {$u$};
\node[vertex label,above=4pt] at (a) {$a$};
\node[vertex label,above=4pt] at (b) {$b$};
\node[vertex label,above=4pt] at (c) {$c$};
\node[vertex label,below=4pt] at (d) {$d$};
\node[vertex label,below=4pt] at (e) {$e$};
\node[vertex label,below=4pt] at (f) {$f$};

\node[note,red!75!black] at (2.2,2.28) {$F$};
\end{tikzpicture}
\caption{A $(u,F)$-open spanning pseudo-ear packing of $X$, where $F=F_{D_u}(u)=\{ux,ua\}$.}
\label{fig:uf-open-packing}
\end{figure}

For $A\subseteq E(G[X])$ and a $(u,F)$-open spanning pseudo-ear packing $\mathcal D$ of $X$, let
\[
    H_X(A,\mathcal D):=
    \left(X,A\cup\bigcup_{D\in\mathcal D}(E(D)\cap E(G[X]))\right).
\]
Let $\mathcal C^\circ_{X,u,F}$ be the set of triples $(A,\mathcal D,T)$ satisfying the following conditions.
\begin{itemize}
    \item $A\subseteq E(G[X])$ and $\mathcal D$ is a $(u,F)$-open spanning pseudo-ear packing of $X$.
    \item The graph $H_X(A,\mathcal D)$ is connected.
    \item $T$ is a 1-sum tree of $X$ that is consistent with $H_X(A,\mathcal D)$, the vertex $u$ belongs to exactly one bag of $T$, and $T$ is rooted at $u$.
\end{itemize}
For $(A,\mathcal D,T)\in\mathcal C^\circ_{X,u,F}$, define
\[
    \mathrm{wt}^{-}_{A,\mathcal D,u}(\kappa,\eta):=
    \prod_{e\in A}\kappa_e
    \prod_{D\in\mathcal D}
    \left(
        \prod_{e\in E(D)\cap E(G[X])}\kappa_e
        \prod_{v\in V_{\mathrm{in}}(D)\setminus\{u\}}\eta_{v,F_D(v)}
    \right).
\]
Compared with $\mathrm{wt}_{A,\mathcal D}$, $\mathrm{wt}^{-}_{A,\mathcal D,u}$ omits the boundary-pair factor $\eta_{u,F}$ at $u$ and keeps only pseudo-ear edge occurrences in $E(D)\cap E(G[X])$.
By $(u,F)$-openness, the only possible uncharged edge occurrences are edges of the boundary pair $F$ at $u$ that lie outside $G[X]$.
Define
\[
    Q^\circ_{X,u,F}(\kappa,\eta):=
    \sum_{(A,\mathcal D,T)\in\mathcal C^\circ_{X,u,F}}
    \left(\prod_{v\in X\setminus\{u\}}(-1)^{d_T^+(v)}d_T^+(v)!\right)
    \mathrm{wt}^{-}_{A,\mathcal D,u}(\kappa,\eta).
\]
By definition, if $X=\{u\}$, then $Q^{\circ}_{\{u\},u,F}(\kappa,\eta)=1$.

For $X\subseteq V(G)$ with $v\in X$ and $d\geq 0$, define
\[
    Q^\bullet_{X,v,F,d}(\kappa,\eta):=
    \sum_{\{Y_1,\dots,Y_d\}\in\Pi_d(X\setminus\{v\})}
    \prod_{i=1}^d Q^\circ_{Y_i\cup\{v\},v,F}(\kappa,\eta).
\]
Since $\Pi_0(\emptyset)=\{\emptyset\}$ and $\Pi_d(\emptyset)=\emptyset$ for $d>0$, we have $Q^\bullet_{\{v\},v,F,0}=1$ and $Q^\bullet_{\{v\},v,F,d}=0$ for $d>0$.
The definition is evaluated by subset convolution.
\begin{lemma}\label{lem:app-attachment}
For $v\in V(G)$, $F\in\binom{\delta_G(v)}{2}$, and $X\subseteq V(G)\setminus\{v\}$, assume that the values $Q^\circ_{Y\cup\{v\},v,F}(\kappa,\eta)$ are known for all $Y\subseteq X$.
Then all values $Q^\bullet_{X\cup\{v\},v,F,d}(\kappa,\eta)$ with $d\geq0$ can be computed in $O^*(2^{|X|})$ arithmetic operations.
\end{lemma}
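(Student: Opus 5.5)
This is a direct application of \Cref{lem:app-partition-convolution}, with the universe taken to be $X$ instead of $U$. For every subset $C\subseteq X$, I set $x_C:=Q^\circ_{C\cup\{v\},v,F}(\kappa,\eta)$ when $C\neq\emptyset$, and $x_\emptyset:=0$. The value $x_\emptyset$ must be zero because \Cref{lem:app-partition-convolution} requires it; otherwise empty parts would be counted. By hypothesis, all of these values are known.

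For each $d\in\{0,\dots,|X|\}$, \Cref{lem:app-partition-convolution} then computes
\[
    p_d(S)=\sum_{\{C_1,\dots,C_d\}\in\Pi_d(S)}\prod_{j=1}^{d}x_{C_j}
\]
for every $S\subseteq X$ together, in $O^*(2^{|X|})$ arithmetic operations. I then compare with the definition of $Q^\bullet$. Since $v\notin X$, we have $(X\cup\{v\})\setminus\{v\}=X$, and the parts $Y_i$ of a partition in $\Pi_d(X)$ are nonempty, so $x_{Y_i}=Q^\circ_{Y_i\cup\{v\},v,F}(\kappa,\eta)$. Hence $Q^\bullet_{X\cup\{v\},v,F,d}(\kappa,\eta)=p_d(X)$ for each $d$.

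For $d>|X|$, $\Pi_d(X)$ is empty, so the value is $0$. The base case $X=\emptyset$ is handled by the convention $p_0(\emptyset)=1$ and agrees with $Q^\bullet_{\{v\},v,F,0}=1$. Iterating over the $|X|+1$ relevant values of $d$ contributes only a polynomial factor.

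The only point needing care is this bookkeeping: checking that the indexing of $Q^\bullet$ matches the unordered partitions of \Cref{lem:app-partition-convolution}, with no empty parts and with the $d=0$ convention agreeing. I do not expect any real obstacle beyond that.
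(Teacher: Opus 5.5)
Your proposal is correct and matches the paper's proof: both apply \Cref{lem:app-partition-convolution} on the ground set $X$ with $x_\emptyset:=0$ and $x_C:=Q^\circ_{C\cup\{v\},v,F}(\kappa,\eta)$ for nonempty $C$, then read off $Q^\bullet_{X\cup\{v\},v,F,d}(\kappa,\eta)=p_d(X)$. Your extra remarks on $d>|X|$ and on the $d=0$ convention are accurate bookkeeping that the paper leaves implicit.
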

\begin{proof}
Apply \Cref{lem:app-partition-convolution} to the ground set $X$ with $x_\emptyset:=0$ and $x_Y:=Q^{\circ}_{Y\cup\{v\},v,F}(\kappa,\eta)$ for nonempty $Y\subseteq X$.
For each $d\geq0$, the resulting partition sum for $X$ is exactly the sum in the definition of $Q^\bullet_{X\cup\{v\},v,F,d}(\kappa,\eta)$.
\end{proof}

The following lemma shows that it is enough to compute the values $Q^\bullet_{V(G),r,F,d}(\kappa,\eta)$ for the fixed vertex $r$.
\begin{lemma}\label{lem:app-unrooted-recovery}
We have
\[
    Q(\kappa,\eta)
    =
    \sum_{F\in\binom{\delta_G(r)}{2}}
    \sum_{d\geq 1}
    \eta_{r,F}(-1)^{d-1}(d-1)!
    Q^\bullet_{V(G),r,F,d}(\kappa,\eta).
\]
\end{lemma}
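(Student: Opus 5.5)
We start from \Cref{lem:app-bag-tree-mobius-global} and regroup the sum over triples \((T,A,\mathcal D)\) with \(T\) consistent with \(H_{A,\mathcal D}\) according to the local structure at the fixed vertex \(r\). Root each such \(T\) at \(r\). Then \(d_T(r)=d^+_{T,r}(r)=:d\geq 1\), since \(|V(G)|\geq 2\) and \(H_{A,\mathcal D}\) is connected, so \(r\) lies in at least one bag. For \(v\neq r\), \(d_T(v)-1=d^+_T(v)\). Hence the global Möbius coefficient factors as
\[
(-1)^{d-1}(d-1)!\prod_{v\neq r}(-1)^{d^+_T(v)}d^+_T(v)!,
\]
and the first factor is exactly the one appearing in the claimed formula. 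Let \(F\) be the boundary pair of the unique pseudo-ear \(D_r\) with \(r\in V_{\mathrm{in}}(D_r)\). The factor \(\eta_{r,F}\) is pulled out, which explains the sum over \(F\in\binom{\delta_G(r)}{2}\).

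Next, fix \(F\) and \(d\). The \(d\) bags containing \(r\) split \(T\) into subtrees \(T_1,\dots,T_d\) hanging at \(r\). Let \(Y_i\) be the set of original vertices of \(T_i\) other than \(r\), so that \(\{Y_1,\dots,Y_d\}\in\Pi_d(V(G)\setminus\{r\})\). We aim for a weight-preserving bijection between the triples \((A,\mathcal D,T)\) with \(F_{D_r}(r)=F\) and \(d_T(r)=d\), and the tuples consisting of such a partition together with triples \((A_i,\mathcal D_i,T_i)\in\mathcal C^\circ_{Y_i\cup\{r\},r,F}\). In the forward direction:
\begin{itemize}
\item \(A_i\) is the part of \(A\) inside \(G[Y_i\cup\{r\}]\);
\item \(\mathcal D_i\) consists of the pseudo-ears whose inner vertices lie in \(Y_i\), together with a suitable copy of \(D_r\);
\item \(T_i\) is the branch of \(T\), rooted at \(r\), in which \(r\) lies in exactly one bag.
\end{itemize}
Every edge of \(H_{A,\mathcal D}\) lies in a unique bag, so every edge not incident to \(r\) lies inside a unique \(G[Y_i]\), and an edge incident to \(r\) lies in \(G[Y_i\cup\{r\}]\) for exactly one \(i\). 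Hence \(A\) splits. Each \(H_X(A_i,\mathcal D_i)\) is the union of the blocks in branch \(i\), which is connected, and \(T_i\) is consistent with it.

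The main obstacle is the pseudo-ear \(D_r\) and the \((u,F)\)-open condition. For \(v\in Y_i\), the pair \(F_D(v)\) uses edges incident to \(v\). Each such edge lies in a bag containing \(v\), and such a bag belongs to branch \(i\), so the edge lies inside \(G[Y_i\cup\{r\}]\). The other pseudo-ears therefore have all their edge occurrences inside a single branch, where they are charged once, and they can be assigned to that branch.

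The pseudo-ear \(D_r\) itself may wander across several branches. It is a walk or cycle through \(r\) whose other inner vertices may lie in different \(Y_i\). Since \(V_{\mathrm{in}}\) must be partitioned and \(r\) is an inner vertex of all \(d\) copies, I plan the following split:
\begin{itemize}
\item Cut \(D_r\) at \(r\) into its (at most) two end segments, each starting with an edge of \(F\).
\item Each segment stays within one branch. Consecutive edges on it meet at an inner vertex \(v\neq r\), so both edges of \(F_D(v)\) lie in that vertex's branch.
\item In branch \(i\), the copy of \(D_r\) is the portion lying in \(Y_i\cup\{r\}\), with the edges of \(F\) retained as end edges leaving \(G[Y_i\cup\{r\}]\) when they belong to another branch.
\end{itemize}
This is exactly why \(\mathrm{wt}^-\) charges only occurrences in \(E(G[X])\) and omits \(\eta_{r,F}\). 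Each edge occurrence of \(D_r\) is then charged in exactly one branch, and \(\eta_{r,F}\) is charged once globally.

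For the converse direction I would glue the branches back together. Take the union of the \(A_i\), merge the \(d\) copies of the pseudo-ear through \(r\) (they share the pair \(F\) at \(r\)) into one pseudo-ear, and identify the trees \(T_i\) at \(r\). I would check that the merged \(D_r\) is a valid pseudo-ear, and that the glued \(T\) is consistent with the connected graph \(H_{A,\mathcal D}\), because the blocks of a 1-sum at \(r\) are the union of the blocks of the pieces.

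Summing over partitions then reproduces \(Q^\bullet_{V(G),r,F,d}\) by its definition, which yields the stated identity. I expect the delicate point to be precisely this handling of \(D_r\): ensuring that the splitting and merging are mutually inverse, and that end edges of \(F\) that lie outside a branch are counted exactly once.
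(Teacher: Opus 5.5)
Your proposal is correct and takes essentially the same route as the paper. Both arguments build a weight-preserving bijection between the global triples $(A,\mathcal D,T)$ with $F_{D_r}(r)=F$ and $d_T(r)=d$ and the tuples of rooted pieces in $\mathcal C^\circ_{Y_i\cup\{r\},r,F}$. They do this by identifying the copies of $r$ in the trees and splitting or merging the pseudo-ear through $r$ along $F$, with $\eta_{r,F}$ and $(-1)^{d-1}(d-1)!$ supplied by the multiplier. The paper states the gluing direction and asserts reversibility, whereas you start from splitting and spell out why every other pseudo-ear, and each end segment of $D_r$, stays inside a single branch.
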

\begin{proof}
We show that each summand in \Cref{lem:app-bag-tree-mobius-global} appears exactly once on the right-hand side with the same weight, including the same M\"obius factor.
Fix $F\in\binom{\delta_G(r)}2$ and $d\geq1$.
It suffices to show that the contribution
\[
    \eta_{r,F}(-1)^{d-1}(d-1)!
    \sum_{\{Y_1,\dots,Y_d\}\in\Pi_d(V(G)\setminus\{r\})}
    \prod_{i=1}^d Q^\circ_{Y_i\cup\{r\},r,F}(\kappa,\eta).
\]
is exactly the part of \Cref{lem:app-bag-tree-mobius-global} indexed by triples $(A,\mathcal D,T)$ such that $F_{D_r}(r)=F$ and $d_T(r)=d$, where $D_r$ is the pseudo-ear containing $r$.

Expand the expression above.
A summand is specified by a partition $\{Y_1,\dots,Y_d\}$ of $V(G)\setminus\{r\}$ and triples $(A_i,\mathcal D_i,T_i)\in\mathcal C^\circ_{Y_i\cup\{r\},r,F}$ for $i\in\{1,\dots,d\}$.
Identify the $d$ copies of $r$ in the rooted 1-sum trees $T_1,\dots,T_d$.
The resulting 1-sum tree $T$ satisfies $d_T(r)=d$, and the components of $T-r$ have vertex sets $Y_1,\dots,Y_d$.
Let $A:=\bigcup_{i=1}^d A_i$.

It remains to define the spanning pseudo-ear packing $\mathcal D$.
All pseudo-ears not containing a copy of $r$ are kept.
For each $i$, let $D_i^r$ be the pseudo-ear in $\mathcal D_i$ containing $r$ as an inner vertex, and write $F=\{e_1,e_2\}$.
For each $a\in\{1,2\}$, let $i_a$ be the unique index with $e_a\in E(G[Y_{i_a}\cup\{r\}])$.
If $i_1=i_2$, keep $D_{i_1}^r$ as the pseudo-ear through $r$.
If $i_1\neq i_2$, orient $D_{i_1}^r$ and $D_{i_2}^r$ as $(x,\dots,e_1,r,e_2,y)$ and $(z,e_1,r,e_2,\dots,w)$, and replace them by the walk pseudo-ear $(x,\dots,e_1,r,e_2,\dots,w)$.
For each $i\notin\{i_1,i_2\}$, the pseudo-ear $D_i^r$ has no inner vertex in $Y_i$ and no edge occurrence in $E(G[Y_i\cup\{r\}])$, by $(r,F)$-openness, so $D_i^r$ disappears after the copies of $r$ are identified.
This defines a spanning pseudo-ear packing $\mathcal D$ of $V(G)$.

The construction is reversible.
Given a triple $(A,\mathcal D,T)$ in \Cref{lem:app-bag-tree-mobius-global} with $F_{D_r}(r)=F$ and $d_T(r)=d$, deleting $r$ from $T$ gives the partition $\{Y_1,\dots,Y_d\}$.
For each $i$, the pseudo-ear counted in the factor $Q^\circ_{Y_i\cup\{r\},r,F}$ that contains $r$ as an inner vertex is obtained from $D_r$ by keeping the inner vertices $\{r\}\cup(V_{\mathrm{in}}(D_r)\cap Y_i)$ and using the same boundary pair $F$ at $r$.
If $V_{\mathrm{in}}(D_r)\cap Y_i=\emptyset$, this is the pseudo-ear whose only inner vertex is $r$.
Thus the expansion contains every summand of \Cref{lem:app-bag-tree-mobius-global} with $F_{D_r}(r)=F$ and $d_T(r)=d$ exactly once.

Finally, the weights are preserved.
The $\kappa$-factor is the product of the $\kappa$-factors from the $d$ rooted components, and the gluing above does not change the edge occurrences charged inside those components.
The $\eta$-factors and the M\"obius factors are matched in the same way.
Away from $r$, both types of factors come from the $Q^\circ$-terms.
At $r$, the multiplier supplies the missing factors: $\eta_{r,F}$ for the boundary-pair factor and $(-1)^{d-1}(d-1)!$ for the M\"obius factor, since $d_T(r)=d$.
Therefore the right-hand side equals the expression in \Cref{lem:app-bag-tree-mobius-global}.
\end{proof}

Fix an integer $h\geq 1$, and assume that all values $Q^\bullet_{X',v,F',d}(\kappa,\eta)$ with $v\in X'$ and $|X'|\leq h$ are already known.
We compute $Q^{\circ}_{X,u,F}(\kappa,\eta)$ for sets $X$ with $|X|=h+1$.
We try all choices of the top bag $Z$ containing $u$, and evaluate the parts attached below vertices of $Z$ by the already computed $Q^\bullet$-values.

\paragraph*{Attachment substitutions.}
Fix $u\in V(G)$ and $F\in\binom{\delta_G(u)}2$.
For a pair $F'$ of edges, let $V(F')$ be the set of endpoints of the two edges in $F'$.
Fix a nonnegative integer $h$.
Define the table $\tau=\tau^{h,u,F}(\kappa,\eta)$, whose entries are denoted by $\tau_{Y,v,F',L}$, as follows.
For each $Y\subseteq V(G)$ and $L\subseteq V(G)\setminus Y$, the entries indexed by $Y$ and $L$ are defined as follows.
For $v=u$ and $F'\in\binom{\delta_G(u)}2$, let
\[
    \tau_{Y,u,F',L}:=
    \begin{cases}
        1 & \text{if }u\in Y,\ F'=F,\text{ and }L=\emptyset,\\
        0 & \text{otherwise.}
    \end{cases}
\]
For $v\in Y\setminus\{u\}$ and $F'\in\binom{\delta_G(v)}2$, set $\tau_{Y,v,F',L}:=0$ unless $V(F')\setminus Y\subseteq L$, $L\cap V(F)=\emptyset$, and $|L\cup\{v\}|\leq h$.
If these conditions hold, let
\[
    \tau_{Y,v,F',L}
    :=
    \eta_{v,F'}
    \sum_{d\geq 0}(-1)^d d! Q^\bullet_{L\cup\{v\},v,F',d}(\kappa,\eta).
\]
For $Z\subseteq V(G)$ with $u\in Z$, let $\mathcal R_{Z,u,F}$ be the set of pairs $(A,\mathcal D)$ satisfying the following conditions.
\begin{itemize}
    \item $A\subseteq E(G[Z])$, and $\mathcal D$ is a spanning pseudo-ear packing of $Z$.
    \item If $D_u$ is the pseudo-ear with $u\in V_{\mathrm{in}}(D_u)$, then $F_{D_u}(u)=F$.
    \item The graph $\left(Z,A\cup\bigcup_{D\in\mathcal D}\left(E(D)\cap E(G[Z])\right)\right)$ is connected.
\end{itemize}
For $L\subseteq V(G)\setminus Z$, define
\[
    \begin{aligned}
    &R_{Z,u,F,L}(\kappa,\tau):=\\
    &
    \sum_{(A,\mathcal D)\in\mathcal R_{Z,u,F}}
    \left(
    \prod_{e\in A}\kappa_e
    \cdot
    \sum_{\substack{(L_v)_{v\in Z}\\ \biguplus_{v\in Z}L_v=L}}
    \prod_{D\in\mathcal D}
    \left(
        \prod_{e\in E(D)\cap E(G[Z])}\kappa_e
        \cdot
        \prod_{v\in V_{\mathrm{in}}(D)}
        \tau_{Z,v,F_D(v),L_v}
    \right)
    \right).
    \end{aligned}
\]
The following lemma shows that evaluating \(R_{Z,u,F,L}\) with this table \(\tau\) gives the desired value \(Q^\circ_{X,u,F}\).
Before proceeding, we spell out the role of the $\tau$-factors in \(R_{Z,u,F,L}\).
The factor at \(u\) is nonzero only for \(F'=F\) and \(L_u=\emptyset\), which ensures \(F_{D_u}(u)=F\) and that no bag other than \(Z\) contains \(u\).
For \(v\in Z\setminus\{u\}\), the condition \(V(F')\setminus Z\subseteq L_v\) ensures that \(F'\) does not leave \(Z\cup L_v\), and hence does not leave \(X=Z\cup L=Z\cup\bigcup_{v\in Z}L_v\).
Thus every non-root boundary pair lies in \(E(G[X])\), as required by \((u,F)\)-openness.
The condition \(L_v\cap V(F)=\emptyset\) prevents endpoints of \(F\) outside \(Z\) from being assigned to a set \(L_v\) with \(v\neq u\); such endpoints must remain outside \(X=Z\cup\bigcup_{v\in Z}L_v\) as part of the boundary pair \(F\).

\begin{figure}[t]
\centering
\begin{tikzpicture}[
  scale=1.08,
  transform shape,
  x=1cm,y=1cm,
  vertex/.style={circle,fill=black,inner sep=0pt,minimum size=2.0mm},
  inner vertex/.style={circle,fill=black,inner sep=0pt,minimum size=2.0mm},
  bag/.style={circle,draw=black,fill=white,line width=0.6pt,inner sep=0pt,minimum size=9mm,font=\small,align=center},
  tree edge/.style={line width=0.55pt,gray!75!black},
  f edge/.style={line width=1.75pt,line cap=round},
  f one/.style={f edge,red!70!black},
  boundary f one/.style={f one},
  f two/.style={f edge,blue!65!black},
  boundary f two/.style={f two},
  f three/.style={f edge,green!45!black},
  boundary f three/.style={f three},
  f four/.style={f edge,purple!70!black},
  boundary f four/.style={f four},
  f five/.style={f edge,cyan!60!black},
  boundary f five/.style={f five},
  f six/.style={f edge,teal!70!black},
  boundary f six/.style={f six},
  block region/.style={draw=black,rounded corners=5pt,fill=gray!8,line width=0.55pt},
  subtree box/.style={draw=black,rounded corners=3pt,line width=0.55pt},
  ear edge/.style={line width=0.75pt,line cap=round,line join=round},
  ear one/.style={ear edge,red!70!black},
  ear two/.style={ear edge,blue!65!black},
  ear three/.style={ear edge,green!45!black},
  ear four/.style={ear edge,purple!70!black},
  ear five/.style={ear edge,cyan!60!black},
  ear six/.style={ear edge,teal!70!black},
  note/.style={font=\scriptsize,inner sep=1pt},
  state label/.style={font=\small,align=center},
  every label/.style={font=\scriptsize,inner sep=1pt}
]
\node[state label] at (-4.55,-3.02) {(a)};

\draw[block region] (-6.10,1.90) rectangle (-3.00,4.24);
\node[note,anchor=west] at (-2.90,3.98) {$Z$};
\coordinate (zu) at (-4.55,3.88);
\coordinate (zuTop) at (-4.55,4.56);
\coordinate (zx) at (-4.95,3.00);
\coordinate (zy) at (-4.15,3.00);
\coordinate (za) at (-5.50,2.23);
\coordinate (zb) at (-3.60,2.23);
\coordinate (zaOut) at (-5.50,1.51);
\coordinate (zbOneOut) at (-4.05,1.51);
\coordinate (zbTwoOut) at (-3.15,1.51);
\draw[ear one] (zuTop) -- (zu) -- (zx) -- (za) -- (zaOut);
\draw[ear two] (zb) -- (zy) -- (zx);
\draw[ear three] (zb) -- (zbOneOut);
\draw[ear three] (zb) -- (zbTwoOut);
\draw[f one] (zx) -- (za);
\draw[f one] (za) -- (zaOut);
\draw[f three] (zb) -- (zbOneOut);
\draw[f three] (zb) -- (zbTwoOut);
\node[vertex,label=right:{$u$}] at (zu) {};
\node[inner vertex] at (zuTop) {};
\node[inner vertex,label=left:{$x$}] at (zx) {};
\node[inner vertex,label=right:{$y$}] at (zy) {};
\node[vertex,label=above:{$a$}] at (za) {};
\node[vertex,label=above:{$b$}] at (zb) {};
\node[inner vertex,label=below:{$s$}] at (zaOut) {};
\node[inner vertex,label=below left:{$p$}] at (zbOneOut) {};
\node[inner vertex,label=below right:{$q$}] at (zbTwoOut) {};

\begin{scope}[yshift=-1.25cm]
\coordinate (pa) at (-6.55,1.24);
\coordinate (paOut) at (-6.55,1.86);
\coordinate (paTL) at (-6.88,0.62);
\coordinate (paTR) at (-6.22,0.62);
\coordinate (paL) at (-6.88,0.00);
\coordinate (paR) at (-6.22,0.00);
\coordinate (paBL) at (-6.88,-0.62);
\coordinate (paBR) at (-6.22,-0.62);
\draw[subtree box] (-7.20,-1.08) rectangle (-5.90,1.42);
\node[note] at (-6.55,-1.30) {$\scriptstyle X_{a,1}$};
\draw[ear one] (paOut) -- (pa) -- (paTL) -- (paL) -- (paR) -- (paTR) -- (paTL);
\draw[ear six] (paL) -- (paBL) -- (paBR) -- (paR);
\draw[f one] (pa) -- (paOut);
\draw[f one] (pa) -- (paTL);
\node[vertex,label=right:{$a$}] at (pa) {};
\node[inner vertex,label=above:{$x$}] at (paOut) {};
\node[note,anchor=east,xshift=-4pt] at (paTL) {$s$};
\foreach \p in {paTL,paTR,paL,paR,paBL,paBR} {
  \node[inner vertex] at (\p) {};
}

\coordinate (qa) at (-4.53,1.24);
\coordinate (qbTL) at (-4.84,0.62);
\coordinate (qbTR) at (-4.22,0.62);
\coordinate (qbL) at (-4.84,0.00);
\coordinate (qbR) at (-4.22,0.00);
\coordinate (qbBL) at (-4.84,-0.62);
\coordinate (qbBR) at (-4.22,-0.62);
\draw[subtree box] (-5.13,-1.08) rectangle (-3.93,1.42);
\node[note] at (-4.53,-1.30) {$\scriptstyle X_{b,1}$};
\draw[ear three] (qbTL) -- (qa) -- (qbTR) -- (qbL) -- (qbTL);
\draw[ear four] (qbTR) -- (qbR) -- (qbBR) -- (qbBL) -- (qbR);
\draw[f three] (qa) -- (qbTL);
\draw[f three] (qa) -- (qbTR);
\node[vertex,label=right:{$b$}] at (qa) {};
\node[note,anchor=east,xshift=-4pt] at (qbTL) {$p$};
\node[note,anchor=west,xshift=4pt] at (qbTR) {$q$};
\foreach \p in {qbTL,qbTR,qbL,qbR,qbBL,qbBR} {
  \node[inner vertex] at (\p) {};
}

\coordinate (rb) at (-2.60,1.24);
\coordinate (rbOutL) at (-2.93,1.86);
\coordinate (rbOutR) at (-2.27,1.86);
\coordinate (rcTL) at (-2.92,0.62);
\coordinate (rcTR) at (-2.28,0.62);
\coordinate (rcL) at (-2.92,0.00);
\coordinate (rcR) at (-2.28,0.00);
\coordinate (rcBL) at (-2.92,-0.62);
\coordinate (rcBR) at (-2.28,-0.62);
\draw[subtree box] (-3.22,-1.08) rectangle (-1.98,1.42);
\node[note] at (-2.60,-1.30) {$\scriptstyle X_{b,2}$};
\draw[ear five] (rb) -- (rcTL) -- (rcTR);
\draw[ear five] (rcTR) -- (rcR) -- (rcBR) -- (rcBL) -- (rcL) -- (rcR);
\draw[f three] (rb) -- (rbOutL);
\draw[f three] (rb) -- (rbOutR);
\node[vertex,label=right:{$b$}] at (rb) {};
\node[inner vertex] at (rbOutL) {};
\node[inner vertex] at (rbOutR) {};
\node[note,anchor=south east,xshift=-5pt,yshift=4pt] at (rbOutL) {$p$};
\node[note,anchor=south west,xshift=5pt,yshift=4pt] at (rbOutR) {$q$};
\foreach \p in {rcTL,rcTR,rcL,rcR,rcBL,rcBR} {
  \node[inner vertex] at (\p) {};
}
\end{scope}

\begin{scope}[shift={(7.55,1.18)}]
\draw[block region] (-5.30,-0.33) rectangle (-2.20,2.01);
\node[note,anchor=west] at (-2.10,1.75) {$Z$};

\coordinate (cu) at (-3.75,1.65);
\coordinate (cuTop) at (-3.75,2.33);
\coordinate (cx) at (-4.15,0.77);
\coordinate (cy) at (-3.35,0.77);
\coordinate (ca) at (-4.70,0.00);
\coordinate (cb) at (-2.80,0.00);
\coordinate (cout) at (-4.70,2.16);
\coordinate (waTL) at (-5.95,-0.85);
\coordinate (waTR) at (-5.30,-0.85);
\coordinate (waL) at (-5.95,-1.45);
\coordinate (waR) at (-5.30,-1.45);
\coordinate (waBL) at (-5.95,-2.05);
\coordinate (waBR) at (-5.30,-2.05);
\coordinate (wbTL) at (-4.05,-0.85);
\coordinate (wbTR) at (-3.45,-0.85);
\coordinate (wbL) at (-4.05,-1.45);
\coordinate (wbR) at (-3.45,-1.45);
\coordinate (wbBL) at (-4.05,-2.05);
\coordinate (wbBR) at (-3.45,-2.05);
\coordinate (wcTL) at (-2.22,-0.85);
\coordinate (wcTR) at (-1.62,-0.85);
\coordinate (wcL) at (-2.22,-1.45);
\coordinate (wcR) at (-1.62,-1.45);
\coordinate (wcBL) at (-2.22,-2.05);
\coordinate (wcBR) at (-1.62,-2.05);

\draw[ear one] (cuTop) -- (cu) -- (cx) -- (ca) -- (waTL) -- (waL) -- (waR) -- (waTR) -- (waTL);
\draw[ear six] (waL) -- (waBL) -- (waBR) -- (waR);
\draw[ear two] (cb) -- (cy) -- (cx);
\draw[ear three] (wbTL) -- (cb) -- (wbTR) -- (wbL) -- (wbTL);
\draw[ear four] (wbTR) -- (wbR) -- (wbBR) -- (wbBL) -- (wbR);
\draw[ear five] (cb) -- (wcTL) -- (wcTR);
\draw[ear five] (wcTR) -- (wcR) -- (wcBR) -- (wcBL) -- (wcL) -- (wcR);

\node[vertex,label=right:{$u$}] at (cu) {};
\node[inner vertex] at (cuTop) {};
\node[inner vertex,label=left:{$x$}] at (cx) {};
\node[inner vertex,label=right:{$y$}] at (cy) {};
\node[vertex,label=above:{$a$}] at (ca) {};
\node[vertex,label=above:{$b$}] at (cb) {};
\node[note,anchor=east,xshift=-4pt] at (waTL) {$s$};
\node[note,anchor=east,xshift=-4pt] at (wbTL) {$p$};
\node[note,anchor=west,xshift=4pt] at (wbTR) {$q$};
\foreach \p in {waTL,waTR,waR,waBR,waBL,waL,wbTL,wbTR,wbR,wbBR,wbBL,wbL,wcTL,wcTR,wcR,wcBR,wcBL,wcL} {
  \node[inner vertex] at (\p) {};
}
\draw[subtree box] (-6.25,-2.35) rectangle (-4.92,-0.55);
\node[note] at (-5.59,-2.57) {$\scriptstyle X_{a,1}\setminus\{a\}$};
\draw[subtree box] (-4.40,-2.35) rectangle (-3.10,-0.55);
\node[note] at (-3.75,-2.57) {$\scriptstyle X_{b,1}\setminus\{b\}$};
\draw[subtree box] (-2.57,-2.35) rectangle (-1.27,-0.55);
\node[note] at (-1.92,-2.57) {$\scriptstyle X_{b,2}\setminus\{b\}$};
\node[state label] at (-3.95,-4.20) {(b)};
\end{scope}
\end{tikzpicture}
\caption{Obtaining a \((u,F)\)-open spanning pseudo-ear decomposition by gluing the \(Z\)-part and the \(X_{v,i}\)-parts in \Cref{lem:app-top-bag-substitution}. The parts before and after gluing are shown in (a) and (b), respectively. In (a), the thick colored segments indicate the edges in the pairs \(F_{D_v}(v)\). The \(Z\)-part and the \(X_{a,1}\)-part are glued at the common boundary pair \(F_{D_a}(a)=\{xa,as\}\), while the \(Z\)-part and the two \(X_{b,i}\)-parts are glued at the common boundary pair \(F_{D_b}(b)=\{bp,bq\}\).}
\label{fig:app-top-bag-substitution}
\end{figure}

\begin{lemma}\label{lem:app-top-bag-substitution}
For $X\subseteq V(G)$ with $u\in X$, $F\in\binom{\delta_G(u)}2$, and $|X|\geq2$, we have
\[
    Q^{\circ}_{X,u,F}(\kappa,\eta)
    =
    \sum_{\substack{Z\subseteq X\\ u\in Z,\ |Z|\geq 2}}
    R_{Z,u,F,X\setminus Z}(\kappa,\tau^{|X|-1,u,F}(\kappa,\eta)).
\]
\end{lemma}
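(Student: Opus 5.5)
We prove the identity by a weight-preserving bijection between the terms of the two sides. We expand both sides into sums over combinatorial objects. On the left, a term is a triple \((A,\mathcal D,T)\in\mathcal C^\circ_{X,u,F}\), weighted by the product of \((-1)^{d_T^+(v)}d_T^+(v)!\) over \(v\in X\setminus\{u\}\) and by \(\mathrm{wt}^{-}_{A,\mathcal D,u}\). Since \(|X|\ge2\) and \(u\) lies in exactly one bag of \(T\), the bag containing \(u\) is well defined. We call it the top bag \(Z\); it satisfies \(u\in Z\) and \(|Z|\ge2\). This \(Z\) indexes the outer sum on the right.

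Deleting the top bag node from \(T\) leaves, below each \(v\in Z\setminus\{u\}\), exactly \(d:=d_T^+(v)\) rooted subtrees. Their vertex sets, with \(v\) removed, form a partition \(\{Y_1,\dots,Y_d\}\) of a set \(L_v\). The sets \(L_v\) partition \(X\setminus Z\), and \(L_u=\emptyset\).

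On the right, we expand \(R_{Z,u,F,X\setminus Z}\) with the table \(\tau=\tau^{|X|-1,u,F}\). A term consists of a pair \((A_Z,\mathcal D_Z)\in\mathcal R_{Z,u,F}\), a decomposition \(\biguplus L_v=X\setminus Z\), and, inside each \(\tau_{Z,v,F',L_v}\), the following data: a number \(d\ge0\), a partition of \(L_v\) into \(d\) parts, and a triple from \(\mathcal C^\circ_{Y_i\cup\{v\},v,F'}\) for each part.

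The map from left to right proceeds in four steps.
\begin{itemize}
\item We split \(A\) into \(A\cap E(G[Z])\) and the edges inside each \(Y_i\cup\{v\}\). Consistency of \(T\) guarantees that every edge lies in a unique bag.
\item We split each pseudo-ear by the same cutting procedure used in the proof of \Cref{lem:app-unrooted-recovery}, applied at every \(v\in Z\). Inner vertices in \(Z\) go to \(\mathcal D_Z\), and inner vertices in \(Y_i\) go to the \(i\)th factor. Each piece through \(v\) keeps the boundary pair \(F'=F_D(v)\), so it is \((v,F')\)-open within \(Y_i\cup\{v\}\).
\item We check the side conditions of \(\tau\). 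The conditions \(V(F')\setminus Z\subseteq L_v\) and \(L_v\cap V(F)=\emptyset\) hold, the first by \((u,F)\)-openness and the second by the tree structure: an edge of a boundary pair between \(Z\) and \(Y_i\) must lie in the bag it belongs to, which is in the \(v\)-branch. The size bound \(|L_v\cup\{v\}|\le|X|-1\) holds because \(|Z|\ge2\).
\item We check connectivity of the \(Z\)-part: \(H[Z]\) is connected because \(Z\) is a bag, as noted after the definition of consistency.
\end{itemize}

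The inverse map glues the pieces back as in \Cref{fig:app-top-bag-substitution}. Concretely, we take the 1-sum tree whose top bag is \(Z\) and attach the subtrees of the factors at their roots \(v\). We merge pseudo-ears along the shared pair \(F'\) exactly as in \Cref{lem:app-unrooted-recovery}. Factors whose piece through \(v\) has no edge inside \(Y_i\cup\{v\}\) simply disappear.

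The main work is to verify that the glued tree is consistent with the glued graph. The blocks of the glued graph are exactly the blocks of the \(Z\)-part together with the blocks of the subparts. This is because the pieces meet only at the cut vertices \(v\), so every bag is still a connected union of blocks and the block collections partition correctly. In particular, \(Z\) is one bag, since \(H[Z]\) is connected; this is why \(R\) asks only for connectivity. The \((u,F)\)-openness of the glued packing follows from the \(\tau\) side conditions, as already explained before the lemma.

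Finally, we check that the weights agree. Each edge occurrence inside \(G[X]\) is charged exactly once: on the left through \(\mathrm{wt}^-\), and on the right either in \(R\) via \(E(D)\cap E(G[Z])\) or in the \(Q^\circ\) factors. The factor \(\eta_{v,F'}\) that each \(Q^\circ\) omits at its root is supplied by \(\tau\). The sign \((-1)^d d!\) in \(\tau\) equals the left-side factor at \(v\), since \(d=d_T^+(v)\). The inner factors at \(v\in Y_i\) have the same \(d^+\) in the subtree as in \(T\). At \(u\), \(\tau=1\) carries neither an \(\eta\)-factor nor a sign, matching \(\mathrm{wt}^-\) and the product over \(X\setminus\{u\}\).

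I expect the main obstacle to be the pseudo-ear splitting and gluing when a single pseudo-ear passes through \(v\) while both edges of \(F'\) leave \(Z\), possibly into two different branches \(Y_i\) and \(Y_j\). In that case we must check that the bijection is well defined with no double counting. I would handle this exactly as in the \(i_1\neq i_2\) case of \Cref{lem:app-unrooted-recovery}, reusing that argument verbatim at each \(v\in Z\setminus\{u\}\), with \(Z\) playing the role of an additional branch.
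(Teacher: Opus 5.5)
Your proposal is correct and follows essentially the same route as the paper. You take the unique bag containing $u$ as the top bag $Z$, and read off the sets $L_v$ and the child terms of $Q^\bullet$ from the branches of $T$ below each $v\in Z\setminus\{u\}$. You cut and glue pseudo-ears at each such $v$ exactly as in \Cref{lem:app-unrooted-recovery}, with $Z$ acting as an extra (0-th) piece, and you let $\tau$ supply the factors $\eta_{v,F'}$ and $(-1)^{d_v}d_v!$ that the child $Q^\circ$-terms omit.

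The only thing to tidy is how you justify the two $\tau$ side conditions. The condition $L_v\cap V(F)=\emptyset$ holds because an edge of $F$ inside $G[X]$ is an edge of $H_X(A,\mathcal D)$ at $u$, so it lies in the only bag containing $u$, namely $Z$. The condition $V(F')\setminus Z\subseteq L_v$ needs that same bag argument on top of $(u,F)$-openness: each edge of $F'$ lies in a bag containing $v$, and that bag is either $Z$ or in the branch below $v$.
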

\begin{proof}
Expand the right-hand side.
Since $(A_0,\mathcal D_0)\in\mathcal R_{Z,u,F}$ has $F_{D_u}(u)=F$, the definition of the entry $\tau$ at $u$ implies that $\tau_{Z,u,F_{D_u}(u),L_u}$ is nonzero exactly when $L_u=\emptyset$.
A nonzero term consists of the following data:
\begin{itemize}
    \item a set $Z$ with $u\in Z$ and $|Z|\geq2$, and a local pair $(A_0,\mathcal D_0)\in\mathcal R_{Z,u,F}$;
    \item a disjoint union $\biguplus_{v\in Z}L_v=X\setminus Z$ with $L_u=\emptyset$;
    \item for each $v\in Z\setminus\{u\}$, an integer $d_v\geq0$, a partition $\{X_{v,1}\setminus\{v\},\dots,X_{v,d_v}\setminus\{v\}\}$ of $L_v$, and, for each $j$, a term of $Q^\circ_{X_{v,j},v,F_{D_v}(v)}(\kappa,\eta)$.
\end{itemize}
Here $D_v$ denotes the unique pseudo-ear in $\mathcal D_0$ with $v\in V_{\mathrm{in}}(D_v)$.

We first construct, from a nonzero term described above, a triple in $\mathcal C^\circ_{X,u,F}$ whose unique bag containing $u$ has vertex set $Z$; see \Cref{fig:app-top-bag-substitution}.
For each $v\in Z\setminus\{u\}$ and $j\in\{1,\dots,d_v\}$, write the triple corresponding to the child part indexed by $X_{v,j}$ as $(A_{v,j},\mathcal D_{v,j},T_{v,j})$.
\begin{itemize}
    \item $A$ is $A_0\cup\bigcup_{v\in Z\setminus\{u\}}\bigcup_{j=1}^{d_v}A_{v,j}$.
    \item $T$ is obtained by taking $Z$ as the top bag and, for each $v\in Z\setminus\{u\}$ and each $j\in\{1,\dots,d_v\}$, attaching $T_{v,j}$ at its root $v$.
    \item $\mathcal D$ is obtained from the union of $\mathcal D_0$ and all $\mathcal D_{v,j}$ by the same gluing as in the proof of \Cref{lem:app-unrooted-recovery}, applied separately at each $v\in Z\setminus\{u\}$.
    Namely, treat the top bag $Z$ as the $0$-th piece: let $X_{v,0}:=Z$ and $D_{v,0}:=D_v$, and let $D_{v,j}$ be the pseudo-ear containing the root vertex $v$ in the $j$-th child part.
    These pseudo-ears all have the same boundary pair $F_{D_v}(v)$ at $v$, and we glue along this pair.
    The nonzero $\tau$-term gives $V(F_{D_v}(v))\setminus Z\subseteq L_v$, so each edge of $F_{D_v}(v)$ lies in exactly one of the pieces $X_{v,0},X_{v,1},\dots,X_{v,d_v}$.
    As in \Cref{lem:app-unrooted-recovery}, any pseudo-ear $D_{v,j}$ not using one of the two edges of $F_{D_v}(v)$ contributes nothing after the copies of $v$ are identified.
    All pseudo-ears not involved in these gluings are kept unchanged.
\end{itemize}
The following checks show that the constructed object belongs to $\mathcal C^\circ_{X,u,F}$ and that the construction is reversible.
\begin{itemize}
    \item The vertex set is exactly $X$: the top bag uses $Z$, and the attached child parts add the disjoint sets $X_{v,j}\setminus\{v\}$ whose union is $\biguplus_{v\in Z}L_v=X\setminus Z$.
    In the merge above, the only pseudo-ears that disappear without contributing to the new pseudo-ear have the single inner vertex $v$.
    Hence no vertex in $X\setminus Z$ is lost.
    \item The prescribed boundary pair at $u$ is $F$: by the definition of $\mathcal R_{Z,u,F}$, we have $F_{D_u}(u)=F$.
    \item The resulting pseudo-ear packing is $(u,F)$-open.  Indeed, for a child part on $X_{v,j}$, every boundary pair in that part other than $F_{D_v}(v)$ at $v$ is contained in $E(G[X_{v,j}])\subseteq E(G[X])$.  For a top-bag vertex $v\in Z\setminus\{u\}$, the $\tau$-entry used at $v$ is nonzero only if $V(F_{D_v}(v))\setminus Z\subseteq L_v$, and then $F_{D_v}(v)\subseteq E(G[Z\cup L_v])\subseteq E(G[X])$.  Thus the only boundary pair allowed to use an edge of $E(G)\setminus E(G[X])$ is $F$ at $u$.
    \item The graph $H_X(A,\mathcal D)$ is connected.  Indeed, the top graph on $Z$ is connected by the definition of $\mathcal R_{Z,u,F}$, and each child part gives a connected graph containing its root $v$.  Attaching these graphs at the corresponding vertices $v\in Z\setminus\{u\}$ preserves connectedness.  The merge of pseudo-ears described above does not change the set of edge occurrences in $E(G[X])$.
    \item The construction is reversible.  Given a triple in $\mathcal C^\circ_{X,u,F}$ whose unique bag containing $u$ has vertex set $Z$, delete this top bag.  The child subtrees attached at each $v\in Z\setminus\{u\}$ give the sets $X_{v,j}$.  As in \Cref{lem:app-unrooted-recovery}, the root pseudo-ear in the child part on $X_{v,j}$ is obtained from the pseudo-ear $D\in\mathcal D$ containing $v$ by keeping the inner vertices $\{v\}\cup(V_{\mathrm{in}}(D)\cap X_{v,j})$ and using the same boundary pair $F_D(v)$ at $v$; if $V_{\mathrm{in}}(D)\cap X_{v,j}=\emptyset$, this is the pseudo-ear whose only inner vertex is $v$.  The top bag gives $(A_0,\mathcal D_0)$.  Since the 1-sum tree is a tree, the sets $L_v=\biguplus_j(X_{v,j}\setminus\{v\})$ form a disjoint union of $X\setminus Z$.  Moreover, $V(F_D(v))\setminus Z\subseteq L_v$, because every endpoint of $F_D(v)$ outside $Z$ lies in the part below $v$; hence the corresponding $\tau$-entry is nonzero.
\end{itemize}
Since $|Z|\geq2$, each child part has vertex set size at most $|X|-1$, and hence is included by the threshold $h=|X|-1$.

The weights match under this bijection, as follows.
The set $A$ is the union of $A_0$ and the sets $A_{v,j}$, and the merge of pseudo-ears does not change the set of edge occurrences in $E(G[X])$.
Indeed, each pseudo-ear that disappears has no edge occurrence in $E(G[X])$.
Thus the product of the $\kappa$-variables is the product of the $\kappa$-variables from the top part and the child parts.
For the $\eta$-variables, the factor $\eta_{x,F_D(x)}$ for each $x\in X\setminus Z$ is supplied by the unique child term containing $x$.
The factor $\eta_{v,F_{D_v}(v)}$ for each $v\in Z\setminus\{u\}$ is supplied by $\tau_{Z,v,F_{D_v}(v),L_v}$.
The child terms $Q^\circ_{X_{v,j},v,F_{D_v}(v)}(\kappa,\eta)$ do not supply $\eta_{v,F_{D_v}(v)}$, since $Q^\circ$ omits the $\eta$-factor at its root.
For $u$, the entry $\tau_{Z,u,F,\emptyset}=1$ supplies no $\eta$-variable, which agrees with $\mathrm{wt}^{-}_{A,\mathcal D,u}$.
Thus every $\eta$-variable in $\mathrm{wt}^{-}_{A,\mathcal D,u}$ is supplied exactly once.
It remains to compare the M\"obius coefficients.
For $v\in Z\setminus\{u\}$, attaching $d_v$ child 1-sum trees below the top bag gives $d_T^+(v)=d_v$, and hence the M\"obius factor of $v$ in $Q^\circ_{X,u,F}(\kappa,\eta)$ is $(-1)^{d_v}d_v!$; this is the factor in the summand of $\tau_{Z,v,F_{D_v}(v),L_v}$ with this value of $d_v$.
For $x\in X\setminus Z$, the vertex $x$ lies in a unique child part, and its value of $d_T^+(x)$ is the same as in that child 1-sum tree; thus the M\"obius factor $(-1)^{d_T^+(x)}d_T^+(x)!$ is already present in the corresponding child term.
The root $u$ has no M\"obius factor in the definition of $Q^\circ_{X,u,F}(\kappa,\eta)$.
\end{proof}

Thus, to compute the values $Q^\circ_{X,u,F}(\kappa,\eta)$, it remains to compute $R_{Z,u,F,L}(\kappa,\tau)$ for $Z\subseteq X$ with $u\in Z$ and $|Z|\geq2$, and for $L=X\setminus Z\subseteq V(G)\setminus Z$, where $\tau$ is the table $\tau^{|X|-1,u,F}(\kappa,\eta)$.

\subsection{Computing \texorpdfstring{\(R\)}{R}-polynomials}\label{subsec:evaluation-r-polynomials}
We now compute the \(R\)-polynomials.
We use two auxiliary tables, \(M\) and \(R^{\mathrm{all}}\).
The table \(M\) counts one pseudo-ear with a prescribed set of inner vertices, while \(R^{\mathrm{all}}\) is the version of \(R\) without the connectedness constraint.
We compute \(R^{\mathrm{all}}\) from \(M\) in \Cref{lem:app-rall-expansion}, and then compute \(R\) from \(R^{\mathrm{all}}\) in \Cref{lem:app-local-z-substitution}.

Fix $Y\subseteq V(G)$.
For nonempty $I\subseteq Y$, let $\mathcal E_{Y,I}$ be the set of pseudo-ears $D$ of $G$ with $V_{\mathrm{in}}(D)=I$.
For $L\subseteq V(G)\setminus Y$, define
\[
    M_{Y,I,L}(\kappa,\tau):=
    \sum_{D\in\mathcal E_{Y,I}}
    \left(
        \prod_{e\in E(D)\cap E(G[Y])}\kappa_e
        \cdot
        \sum_{\substack{(L_v)_{v\in I}\\ \biguplus_{v\in I}L_v=L}}
        \prod_{v\in I}
        \tau_{Y,v,F_D(v),L_v}
    \right).
\]
A summand of $M_{Y,I,L}(\kappa,\tau)$ chooses $D$ with $V_{\mathrm{in}}(D)=I$ and sets $L_v$ for $v\in I$ with $\biguplus_{v\in I}L_v=L$; intuitively, $L_v$ is the set of vertices outside $Y$ assigned to the part attached at $v$.
The $\tau$-factor is used as follows.
If $u\in I$, then the factor at $u$ is nonzero only when $F_D(u)=F$ and $L_u=\emptyset$, and then it equals $1$.
For $v\in I\setminus\{u\}$, the factor $\tau_{Y,v,F_D(v),L_v}$ is nonzero only when $V(F_D(v))\setminus Y\subseteq L_v$, $L_v\cap V(F)=\emptyset$, and $|L_v\cup\{v\}|\leq h$.
Thus every endpoint of $D$ outside $Y$ that is incident with an edge of $F_D(v)$ for some $v\in I\setminus\{u\}$ lies in $\biguplus_{v\in I}L_v=L$.
The subscript $Y$ also records which edge occurrences are charged by $\kappa$, namely those inside $G[Y]$.
\Cref{fig:m-polynomial-path-ear} illustrates one summand of $M_{Y,I,L}(\kappa,\tau)$ for a walk pseudo-ear.
The non-inner vertices of the walk pseudo-ear are either in $Y$ or in the set $L_v$ corresponding to the inner vertex $v$ whose $F_D(v)$-edge reaches them; in the figure, $a\in Y$ and $b\in L_{v_3}\subseteq L$.

\begin{figure}[t]
\centering
\begin{tikzpicture}[
  x=1cm,y=1cm,
  vertex/.style={circle,fill=black,inner sep=0pt,minimum size=2mm},
  inner/.style={circle,fill=blue!65!black,inner sep=0pt,minimum size=2.2mm},
  ybox/.style={draw=black!45,rounded corners=2pt,line width=0.5pt},
  ibox/.style={draw=black!45,rounded corners=2pt,line width=0.55pt},
  lbox/.style={draw=black!45,densely dashed,rounded corners=2pt,line width=0.55pt},
  lpart/.style={draw=black!45,rounded corners=1.5pt,line width=0.45pt},
  branch/.style={densely dotted,line width=0.45pt,black!45},
  ear/.style={line width=1pt,blue!65!black},
  note/.style={font=\scriptsize,inner sep=1pt},
]
\draw[ybox] (-0.25,0.25) rectangle (5.25,3.05);
\node[note] at (5.02,2.82) {$Y$};
\draw[ibox] (0.35,0.55) rectangle (4.95,1.75);
\node[note,anchor=west] at (5.0,1.2) {$I=V_{\mathrm{in}}(D)$};
\draw[lbox] (-0.15,-1.8) rectangle (5.15,-0.45);
\draw[lpart] (0.3,-1.55) rectangle (1.5,-0.75);
\draw[lpart] (1.95,-1.55) rectangle (3.15,-0.75);
\draw[lpart] (3.6,-1.55) rectangle (4.8,-0.75);
\node[note] at (4.95,-0.68) {$L$};
\node[note] at (0.9,-1.32) {$L_{v_1}$};
\node[note] at (2.55,-1.32) {$L_{v_2}$};
\node[note] at (4.2,-1.32) {$L_{v_3}$};

\coordinate (a) at (0.9,2.45);
\coordinate (v1) at (0.9,1.1);
\coordinate (v2) at (2.55,1.1);
\coordinate (v3) at (4.2,1.1);
\coordinate (b) at (4.2,-1.1);

\draw[branch] (v1) -- (0.3,-0.75);
\draw[branch] (v1) -- (1.5,-0.75);
\draw[branch] (v2) -- (1.95,-0.75);
\draw[branch] (v2) -- (3.15,-0.75);
\draw[branch] (v3) -- (3.6,-0.75);
\draw[branch] (v3) -- (4.8,-0.75);
\draw[ear] (a) -- (v1) -- (v2) -- (v3) -- (b);
\node[vertex] at (a) {};
\foreach \p in {v1,v2,v3}
  \node[inner] at (\p) {};
\node[vertex] at (b) {};

\node[note,left=2pt] at (a) {$a$};
\node[note,below=2pt] at (v1) {$v_1$};
\node[note,below=2pt] at (v2) {$v_2$};
\node[note,below=2pt] at (v3) {$v_3$};
\node[note,right=2pt] at (b) {$b$};
\end{tikzpicture}
\caption{A walk pseudo-ear contributing to \(M_{Y,I,L}\).  The inner vertices \(I=V_{\mathrm{in}}(D)\) lie in \(Y\), and are drawn in the \(I\)-box.  The edge occurrences \(av_1\), \(v_1v_2\), and \(v_2v_3\) lie in \(G[Y]\) and contribute their \(\kappa\)-factors, while the edge \(v_3b\) leaves \(Y\) and is not counted by the \(\kappa\)-product.}
\label{fig:m-polynomial-path-ear}
\end{figure}

\begin{lemma}\label{lem:app-one-ear-held-karp}
For fixed $Y$ and $\tau$, all values $M_{Y,I,L}(\kappa,\tau)$ with $\emptyset\neq I\subseteq Y$ and $L\subseteq V(G)\setminus Y$ can be computed together in $O^*(2^{|V(G)|})$ arithmetic operations.
\end{lemma}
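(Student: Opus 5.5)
Fix $Y$ and $\tau$. The plan is a Held--Karp-style dynamic program that builds a walk pseudo-ear one inner vertex at a time. Alongside, we run a subset convolution that distributes the outside set $L$ among the inner vertices. Since $I\subseteq Y$ and $L\subseteq V(G)\setminus Y$ are disjoint, the pair $(I,L)$ is encoded by the single set $I\cup L\subseteq V(G)$. A table indexed by disjoint pairs over $V(G)$ with this combined encoding therefore has $2^{|V(G)|}$ relevant index sets, times polynomial factors for the current end vertex and the last edge.

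For walk pseudo-ears $(v_0,e_0,v_1,\dots,v_\ell,e_\ell,v_{\ell+1})$, define a partial table $W[I',L',v,e]$. It is the sum, over partial walks $(v_0,e_0,v_1,\dots,e_{t-1},v_t)$ with $v_t=v$, last edge $e=e_{t-1}$, inner-so-far set $\{v_1,\dots,v_{t-1}\}=I'$, and assignments $\biguplus_{w\in I'}L_w=L'$, of the product of $\kappa$-factors for edges inside $G[Y]$ and the $\tau$-factors at the vertices already closed off. A vertex $v_i$ is closed once both edges $e_{i-1},e_i$ are known, so $F_D(v_i)=\{e_{i-1},e_i\}$ is determined and $\tau_{Y,v_i,F_D(v_i),L_{v_i}}$ can be multiplied in. The transition from $W[I',L',v,e]$ chooses the next edge $e'\neq e$ at $v$ with $v\in Y\setminus I'$, requiring $e\ne e'$ as in the definition. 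It also chooses $L_v$ disjoint from $L'$, and it multiplies by $\tau_{Y,v,\{e,e'\},L_v}$ and by $\kappa_{e'}$ if $e'\in E(G[Y])$. The final step records the last endpoint freely. The union over $L_v$ is a subset convolution in the $L$-coordinate, which \Cref{lem:subset-convolution} handles.

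The intended cost of $O^*(2^{|V(G)|})$ comes from processing states in increasing $|I'|$ and treating the pair $(I',L')$ as one subset of $V(G)$. For fixed $I'$, the update over $L$ is a ranked subset convolution over $V(G)\setminus Y$, with polynomial overhead from the vertex $v$ and edges. Two points need checking. The endpoint conditions only require $\ell\ge1$ and distinct inner vertices; endpoints may coincide with inner vertices or lie outside $Y$, and only inner vertices need to lie in $Y$. Each walk is also counted twice, once in each direction, so we divide by $2$ or fix an orientation canonically, for instance by requiring the first inner vertex to precede the last in a fixed order, with care for $\ell=1$ and $e_0=e_1$-type symmetries. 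For cycles, we fix the minimum vertex of $I$ as the start and break the reversal symmetry. At the start vertex, $F_D$ is formed by the first and last edges, so the $\tau$ factor there is applied at closure. For 2-cycles on parallel edges, care is needed so that each cycle is counted once.

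The main obstacle is making the total work $O^*(2^{|V(G)|})$ rather than $O^*(3^{|V(G)|})$, because naively iterating over $I'$ and then over subsets $L'$ of the complement gives $2^{|Y|}2^{|V(G)\setminus Y|}$ states but convolutions whose cost may multiply. The plan is to transform in the $L$-coordinate once (a ranked zeta transform over $V(G)\setminus Y$). In the transformed domain, the disjoint-union distribution becomes a pointwise product, graded by $|L|$. The Held--Karp recursion over $I'\subseteq Y$ then runs pointwise for each transformed index, costing $2^{|Y|}\cdot 2^{|V(G)\setminus Y|}\cdot\poly$ in total. We invert at the end with \Cref{lem:fast-zeta-mobius}.
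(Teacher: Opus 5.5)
Your proposal is correct and follows the paper's proof. Like the paper, you use a Held--Karp dynamic program whose states record the set $J\subseteq Y$ of used inner vertices, the exact outside set $K\subseteq V(G)\setminus Y$ already assigned, the current end and the last edge, together with canonical orientations for walks, a minimum-vertex start for cycles, and subset convolution in the outside coordinate. The ``main obstacle'' you raise is not actually one: running a fast subset convolution over $V(G)\setminus Y$ separately for each fixed $J$, as the paper does, already costs $2^{|Y|}\cdot O^*(2^{|V(G)\setminus Y|})=O^*(2^{|V(G)|})$, so your transform-once, run-pointwise, invert-at-the-end variant is valid but not needed.
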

\begin{proof}
We use a Held--Karp type dynamic programming along the walk or cycle underlying a pseudo-ear.
For walk pseudo-ears, we use states $H_{x_1}[J,K,x,e]$.
The state counts partial oriented walks of the form $x_0-e_0-x_1-\cdots-x-e-y$.
Here $J\subseteq Y$ is the set of used inner vertices, and $K\subseteq V(G)\setminus Y$ is the exact outside set already used by completed inner vertices.
The analogous cycle states are denoted by $C_{x_1,e_0}[J,K,x,e]$.
They count partial oriented cycle walks of the form $x_0-e_0-x_1-\cdots-x-e-y$ with the same meaning of $J$ and $K$.
Let $E_G(x,y)$ denote the set of edges with endpoints $x$ and $y$.

A walk pseudo-ear has two orientations, and a cycle pseudo-ear with inner-vertex set $I$ has $2|I|$ possible traversals.
We count only one canonical traversal to avoid double-counting.
Fix a linear order of $V(G)$.
For a walk pseudo-ear with at least two inner vertices, the canonical orientation is the one whose first inner vertex is smaller than its last inner vertex; the one-inner-vertex case is handled directly and added once.
For a cycle pseudo-ear, the canonical traversal starts at the smallest vertex of $I$ and takes the direction in which the next inner vertex is smaller than the other neighbor of the start vertex; the two-inner-vertex case is handled directly and added once.
To summarize, for $x_1\in Y$, $e_1\in\delta_G(x_1)$, and $K\subseteq V(G)\setminus Y$, the initial walk states are
\[
H_{x_1}[\{x_1\},K,x_1,e_1]
:=
\sum_{\substack{e_0\in\delta_G(x_1)\\ e_0\neq e_1}}
\tau_{Y,x_1,\{e_0,e_1\},K}
\prod_{f\in\{e_0,e_1\}\cap E(G[Y])}\kappa_f .
\]
For $x\in J\setminus\{x_1\}$, $e\in\delta_G(x)$, and $K\subseteq V(G)\setminus Y$, the walk transition is
\[
H_{x_1}[J,K,x,e]
:=
\sum_{\substack{x'\in J\setminus\{x\},\, f\in E_G(x',x)\\ f\neq e}}
\sum_{K'\subseteq K}
H_{x_1}[J\setminus\{x\},K',x',f]\,
\tau_{Y,x,\{f,e\},K\setminus K'}
\prod_{g\in\{e\}\cap E(G[Y])}\kappa_g .
\]
Then, inductively, $H_{x_1}[J,K,x,e]$ equals the sum of $\prod_{z\in J}\tau_{Y,z,F_D(z),K_z}\prod_{f\in E(D)\cap E(G[Y])}\kappa_f$ over oriented partial walks $D=x_0-e_0-x_1-\cdots-x-e-y$ whose inner-vertex set is exactly $J$, whose first and last inner vertices are $x_1$ and $x$, and over disjoint sets $(K_z)_{z\in J}$ with union $K$; here $F_D(z)$ is the pair of edges of $D$ incident with $z$.

The cycle table uses the analogous formulas with the remembered edge $e_0$.
For $x_1,x_0,x_2\in Y$ with $x_2<x_0$, distinct edges $e_0\in E_G(x_0,x_1)$ and $e_1\in E_G(x_1,x_2)$, and $K\subseteq V(G)\setminus Y$, the initial cycle states are
\[
C_{x_1,e_0}[\{x_1\},K,x_1,e_1]
:=
\tau_{Y,x_1,\{e_0,e_1\},K}
\kappa_{e_1}.
\]
All other initial cycle states are set to $0$.
For $x\in J\setminus\{x_1\}$, $e\in\delta_{G[Y]}(x)$, and $K\subseteq V(G)\setminus Y$, the cycle transition is
\[
C_{x_1,e_0}[J,K,x,e]
:=
\sum_{\substack{x'\in J\setminus\{x\},\, f\in E_G(x',x)\\ f\neq e}}
\sum_{K'\subseteq K}
C_{x_1,e_0}[J\setminus\{x\},K',x',f]\,
\tau_{Y,x,\{f,e\},K\setminus K'}
\kappa_e.
\]
Since a cycle pseudo-ear uses only vertices of $Y$, every edge occurrence in the cycle table lies in $E(G[Y])$.
Then, inductively, $C_{x_1,e_0}[J,K,x,e]$ equals the sum of $\prod_{z\in J}\tau_{Y,z,F_D(z),K_z}\prod_{f\in E(D)\setminus\{e_0\}}\kappa_f$ over oriented partial cycle walks $D=x_0-e_0-x_1-e_1-x_2-\cdots-x-e-y$ whose inner-vertex set is exactly $J$, whose first and last inner vertices are $x_1$ and $x$, with $x_0,x_2\in Y$ and $x_2<x_0$, and over disjoint sets $(K_z)_{z\in J}$ with union $K$.
Here \(E(D)\) is regarded as a multiset of edge occurrences.
In particular, when a cycle state is completed with last edge $e=e_0$, the final edge occurrence contributes the factor $\kappa_e$ and supplies the missing charge for $e_0$.

Finally, $M_{Y,I,L}(\kappa,\tau)$ is obtained as the sum of the following terms.
\begin{itemize}
    \item If $I=\{x\}$, then for each two-edge subset $\{e_0,e_1\}$ of $\delta_G(x)$, add $\tau_{Y,x,\{e_0,e_1\},L}$, multiplied by $\kappa_f$ for each $f\in\{e_0,e_1\}$ that lies in $E(G[Y])$.
    \item If $I=\{x,y\}$ with $x<y$, then for each two-edge subset $\{e_0,e_1\}$ of $E_G(x,y)$ and each subset $K$ of $L$, add $\tau_{Y,x,\{e_0,e_1\},K}\tau_{Y,y,\{e_0,e_1\},L\setminus K}\kappa_{e_0}\kappa_{e_1}$.
    \item Add $H_{x_1}[I,L,x,e]$ over all $x_1,x\in I$ with $x_1<x$ and all $e\in\delta_G(x)$.
    \item If $|I|\ge 3$, add $C_{x_1,e_0}[I,L,x,e_0]$, where $x_1=\min I$, over all $x\in I\setminus\{x_1\}$ and all $e_0\in E_G(x_1,x)$.
\end{itemize}
The four cases above are disjoint and exhaustive.
For a walk pseudo-ear, the case \(|I|=1\) is handled by the first item, and the case \(|I|\geq2\) is handled by the third item.
In the latter case, \(x_1\) and \(x\) are the first and last inner vertices of the oriented walk.
The condition \(x_1<x\) selects exactly one of the two orientations and ensures that the walk is counted in exactly one direction.
For a cycle pseudo-ear, self-loops are absent, so \(|I|\geq2\).
The case \(|I|=2\) is handled by the second item, and the case \(|I|\geq3\) is handled by the fourth item.
In the latter case, \(x_1\), \(x_2\), and \(x_0\) are the first, second, and last vertices along the traversal of the cycle.
The condition $x_1=\min I$ uniquely selects the starting vertex, and the requirement $x_2<x_0$ in the initial cycle states selects exactly one of the two possible directions from $x_1$, ensuring that the cycle is counted from exactly one starting point and in exactly one direction.
Hence every pseudo-ear with $V_{\mathrm{in}}(D)=I$ contributes exactly once to $M_{Y,I,L}(\kappa,\tau)$.

It remains to analyze the running time.
There are $2^{|Y|}\cdot 2^{|V(G)\setminus Y|}=2^{|V(G)|}$ choices of the pair $(J,K)$.
For each fixed $J\subseteq Y$, the sums over $K'\subseteq K$ in the transitions are subset convolutions on the universe $V(G)\setminus Y$.
The sum over \(K\subseteq L\) in the two-inner-vertex cycle case is also a subset convolution on the same universe.
Hence these convolutions over all $J\subseteq Y$ cost $O^*(2^{|Y|}\cdot 2^{|V(G)\setminus Y|})=O^*(2^{|V(G)|})$ arithmetic operations.
The remaining endpoint and edge indices contribute only polynomial factors, so all tables for the fixed $Y$ are filled within the same bound.
\end{proof}

For fixed $u$ and $F$, let $\mathcal R^{\mathrm{all}}_{Y,u,F}$ be the set of pairs $(A,\mathcal D)$ satisfying the following conditions.
\begin{itemize}
    \item $A\subseteq E(G[Y])$, and $\mathcal D$ is a spanning pseudo-ear packing of $Y$.
    \item If $u\in Y$ and $D_u$ is the pseudo-ear with $u\in V_{\mathrm{in}}(D_u)$, then $F_{D_u}(u)=F$.
\end{itemize}
The choice of \(A\) is independent of the pseudo-ear packing \(\mathcal D\); all restrictions involving \(u\) and \(F\) concern only \(\mathcal D\).
For $L\subseteq V(G)\setminus Y$, define
\[
    \begin{aligned}
    &R^{\mathrm{all}}_{Y,u,F,L}(\kappa,\tau):=\\
    &\sum_{(A,\mathcal D)\in\mathcal R^{\mathrm{all}}_{Y,u,F}}
    \left(
    \prod_{e\in A}\kappa_e
    \cdot
    \sum_{\substack{(L_v)_{v\in Y}\\ \biguplus_{v\in Y}L_v=L}}
    \prod_{D\in\mathcal D}
    \left(
        \prod_{e\in E(D)\cap E(G[Y])}\kappa_e
        \cdot
        \prod_{v\in V_{\mathrm{in}}(D)}
        \tau_{Y,v,F_D(v),L_v}
    \right)
    \right).
    \end{aligned}
\]
Thus $R^{\mathrm{all}}$ is the version of $R$ in which the connectedness condition is not imposed.
The next lemma shows that these values are obtained from the one-pseudo-ear values $M_{Y,I,L}$.
\Cref{fig:rall-from-m} illustrates how several $M$-terms form one summand of $R^{\mathrm{all}}$.

\begin{figure}[t]
\centering
\begin{tikzpicture}[
  x=1cm,y=1cm,
  vertex/.style={circle,fill=black,inner sep=0pt,minimum size=2mm},
  innerA/.style={circle,fill=red!75!black,inner sep=0pt,minimum size=2.2mm},
  innerB/.style={circle,fill=green!55!black,inner sep=0pt,minimum size=2.2mm},
  innerC/.style={circle,fill=blue!70!black,inner sep=0pt,minimum size=2.2mm},
  innerD/.style={circle,fill=teal!70!black,inner sep=0pt,minimum size=2.2mm},
  ybox/.style={draw=black!45,rounded corners=2pt,line width=0.5pt},
  ibox/.style={draw=black!45,rounded corners=2pt,line width=0.55pt},
  lbox/.style={draw=black!45,densely dashed,rounded corners=2pt,line width=0.55pt},
  lpart/.style={draw=black!45,rounded corners=1.5pt,line width=0.45pt},
  guide/.style={densely dotted,line width=0.45pt,black!40},
  earA/.style={line width=1pt,red!75!black},
  earB/.style={line width=1pt,green!55!black},
  earC/.style={line width=1pt,blue!70!black},
  earD/.style={line width=1pt,teal!70!black},
  note/.style={font=\scriptsize,inner sep=1pt},
]
\draw[ybox] (-0.25,0.6) rectangle (9.9,3.75);
\node[note] at (9.67,3.52) {$Y$};
\draw[lbox] (-0.25,-1.15) rectangle (9.9,0.25);
\node[note] at (9.67,0.02) {$L$};
\draw[lpart] (0.75,-0.98) rectangle (1.95,-0.35);
\draw[lpart] (3.15,-0.98) rectangle (4.35,-0.35);
\draw[lpart] (5.55,-0.98) rectangle (6.75,-0.35);
\draw[lpart] (7.95,-0.98) rectangle (9.15,-0.35);
\node[note] at (1.35,-0.87) {$L_1$};
\node[note] at (3.75,-0.87) {$L_2$};
\node[note] at (6.15,-0.87) {$L_3$};
\node[note] at (8.55,-0.87) {$L_4$};

\draw[ibox] (0.75,0.9) rectangle (1.95,3.45);
\draw[ibox] (3.15,0.9) rectangle (4.35,3.45);
\draw[ibox] (5.55,0.9) rectangle (6.75,3.45);
\draw[ibox] (7.95,0.9) rectangle (9.15,3.45);
\node[note] at (1.65,1.05) {$I_1$};
\node[note] at (4.05,1.05) {$I_2$};
\node[note] at (6.45,1.05) {$I_3$};
\node[note] at (8.85,1.05) {$I_4$};
\foreach \x/\y in {0.75/1.95,3.15/4.35,5.55/6.75,7.95/9.15} {
  \draw[guide] (\x,0.9) -- (\x,-0.35);
  \draw[guide] (\y,0.9) -- (\y,-0.35);
}

\coordinate (a) at (1.35,-0.65);
\coordinate (v1) at (1.35,1.2);
\coordinate (v2) at (1.35,2.1);
\coordinate (v3) at (1.35,3.0);
\coordinate (w1) at (3.75,3.0);
\coordinate (w2) at (3.75,2.1);
\coordinate (w3) at (3.75,1.2);
\coordinate (b) at (6.15,-0.65);
\coordinate (z1) at (6.15,1.2);
\coordinate (z2) at (6.15,2.1);
\coordinate (z3) at (6.15,3.0);
\coordinate (r1) at (8.55,1.2);
\coordinate (r2) at (8.55,2.1);
\coordinate (r3) at (8.55,3.0);

\draw[earA] (a) -- (v1) -- (v2) -- (v3) -- (w1);
\draw[earB] (v2) -- (w3) -- (w2) -- (w1) -- (z3);
\draw[earC] (b) -- (z1) -- (z2) -- (z3) -- (w2);
\draw[earD] (r1) -- (r2) -- (r3) to[bend left=34] (r1);

\foreach \p in {a,b}
  \node[vertex] at (\p) {};
\foreach \p in {v1,v2,v3}
  \node[innerA] at (\p) {};
\foreach \p in {w1,w2,w3}
  \node[innerB] at (\p) {};
\foreach \p in {z1,z2,z3}
  \node[innerC] at (\p) {};
\foreach \p in {r1,r2,r3}
  \node[innerD] at (\p) {};
\end{tikzpicture}
\caption{Several \(M\)-terms forming one summand of \(R^{\mathrm{all}}\).  The sets \(I_1,I_2,I_3,I_4\) partition \(Y\), and the colored objects represent the corresponding pseudo-ears; the edges of \(A\) are omitted.  The set \(L\) is split as \(L_1\uplus L_2\uplus L_3\uplus L_4\), where \(L_j\) is the outside set assigned to the \(M\)-term on \(I_j\).  An end edge of one \(M\)-term may go to another \(M\)-term: for example, the red and blue pseudo-ears have endpoints in \(I_2\), and the green pseudo-ear has endpoints in \(I_1\) and \(I_3\).  At this stage, the resulting graph on \(Y\) need not be connected.}
\label{fig:rall-from-m}
\end{figure}

\begin{lemma}\label{lem:app-rall-expansion}
For nonempty $Y$ and $L\subseteq V(G)\setminus Y$, we have
\[
    R^{\mathrm{all}}_{Y,u,F,L}(\kappa,\tau)
    =
    \left(\prod_{e\in E(G[Y])}(1+\kappa_e)\right)
    \left(
    \sum_{t=1}^{|Y|}
    \sum_{\{I_1,\dots,I_t\}\in\Pi_t(Y)}
    \sum_{\substack{L_1,\dots,L_t\\ \biguplus_{j=1}^tL_j=L}}
    \prod_{j=1}^{t}M_{Y,I_j,L_j}(\kappa,\tau)
    \right).
\]
Moreover, for fixed $Y,u,F$ and $\tau$, all values $R^{\mathrm{all}}_{Y,u,F,L}(\kappa,\tau)$ with $L\subseteq V(G)\setminus Y$ can be computed from the values $M_{Y,I,L'}(\kappa,\tau)$ in $O^*(2^{|V(G)|})$ arithmetic operations.
\end{lemma}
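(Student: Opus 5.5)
The identity follows by expanding both sides into the same index set. The factor \(\prod_{e\in E(G[Y])}(1+\kappa_e)\) is the sum of \(\prod_{e\in A}\kappa_e\) over all \(A\subseteq E(G[Y])\). In \(\mathcal R^{\mathrm{all}}_{Y,u,F}\), the choice of \(A\) is independent of \(\mathcal D\), since no connectedness is imposed. So the \(A\)-part of the sum factors out exactly as this product, and it remains to show that the \(\mathcal D\)-part equals the partition sum of \(M\)-products.

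For the \(\mathcal D\)-part, a spanning pseudo-ear packing \(\mathcal D\) of \(Y\) is exactly an unordered collection of pseudo-ears \(D_1,\dots,D_t\) whose inner-vertex sets \(I_j:=V_{\mathrm{in}}(D_j)\) form a partition \(\{I_1,\dots,I_t\}\in\Pi_t(Y)\). Here \(1\le t\le|Y|\), because \(Y\) is nonempty and each \(I_j\) is nonempty. Conversely, such a partition together with a choice \(D_j\in\mathcal E_{Y,I_j}\) for each part determines \(\mathcal D\) uniquely. Since the parts are distinct sets, there is no overcounting from reordering.

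Next, a decomposition \(\biguplus_{v\in Y}L_v=L\) is the same as first choosing \(L_j:=\biguplus_{v\in I_j}L_v\) with \(\biguplus_j L_j=L\), and then splitting each \(L_j\) among the vertices of \(I_j\). The summand
\[
\prod_{D}\Bigl(\prod_{e\in E(D)\cap E(G[Y])}\kappa_e\prod_{v\in V_{\mathrm{in}}(D)}\tau_{Y,v,F_D(v),L_v}\Bigr)
\]
is a product over the pseudo-ears, and the factor for \(D_j\) depends only on \(D_j\) and on \((L_v)_{v\in I_j}\). By distributivity, the sum over \(D_j\in\mathcal E_{Y,I_j}\) and over splittings of \(L_j\) is exactly \(M_{Y,I_j,L_j}(\kappa,\tau)\).

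The one condition not yet accounted for is \(F_{D_u}(u)=F\) when \(u\in Y\). This is where care is needed. I will argue that it is automatically enforced by the table: by definition, \(\tau_{Y,u,F',L'}\) vanishes unless \(F'=F\) and \(L'=\emptyset\). So every term of \(M_{Y,I_j,L_j}\) with \(u\in I_j\) whose pseudo-ear violates \(F_{D_j}(u)=F\) contributes zero, and \(\tau_{Y,u,F,\emptyset}=1\) does not alter the weight of the surviving terms. If \(u\notin Y\), the condition is vacuous, and no \(\tau\)-factor at \(u\) occurs. This completes the identity.

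For the running time, fix \(Y\), \(u\), \(F\), and \(\tau\). The bracketed sum is a partition sum over \(Y\) combined with a disjoint-union (subset) convolution over \(L\subseteq V(G)\setminus Y\). I will encode it on the ground set \(U:=V(G)\) by setting \(x_C:=M_{Y,C\cap Y,C\setminus Y}(\kappa,\tau)\) when \(C\cap Y\neq\emptyset\), and \(x_C:=0\) otherwise. A partition of \(Y\cup L\) into parts each meeting \(Y\) then corresponds bijectively to a pair consisting of \(\{I_j\}\) and \((L_j)\). Applying \Cref{lem:app-partition-convolution} on \(V(G)\) and summing \(p_t(Y\cup L)\) over \(t\) gives all values in \(O^*(2^{|V(G)|})\) operations. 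Multiplying by the precomputed product \(\prod_{e\in E(G[Y])}(1+\kappa_e)\) costs polynomial time.

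In this encoding, one point to check is that \(L_j=\emptyset\) is allowed, which it is, since \(C=I_j\) is a valid nonempty set. Another is that parts disjoint from \(Y\) are excluded, which the choice \(x_C=0\) guarantees.
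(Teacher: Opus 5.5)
Your proposal is correct and follows the paper's proof essentially step by step. You expand $\prod_{e\in E(G[Y])}(1+\kappa_e)$ as the independent sum over $A$, identify spanning pseudo-ear packings with a partition of $Y$ plus one pseudo-ear per part, regroup $(L_v)_{v\in Y}$ into $(L_j)_j$, and let the $\tau$-factor at $u$ enforce $F_{D_u}(u)=F$. For the running time you use the same encoding $x_J:=M_{Y,J\cap Y,J\setminus Y}(\kappa,\tau)$ on the universe $V(G)$ and apply \Cref{lem:app-partition-convolution}, exactly as the paper does.
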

\begin{proof}
First,
\[
    \prod_{e\in E(G[Y])}(1+\kappa_e)
    =
    \sum_{A\subseteq E(G[Y])}\prod_{e\in A}\kappa_e .
\]
Thus the first factor is exactly the sum over choices of $A\subseteq E(G[Y])$, with weight $\prod_{e\in A}\kappa_e$, appearing in the definition of $R^{\mathrm{all}}_{Y,u,F,L}$.
Next fix $t$ and a partition $\{I_1,\dots,I_t\}\in\Pi_t(Y)$.
Expanding the $M$-terms and summing over the disjoint choices of $L_1,\dots,L_t$ gives
\[
    \begin{aligned}
    &\sum_{\substack{L_1,\dots,L_t\\ \biguplus_{j=1}^tL_j=L}}
    \prod_{j=1}^{t}M_{Y,I_j,L_j}(\kappa,\tau)\\
    &\quad =
    \sum_{(D_1,\dots,D_t)\in\mathcal E_{Y,I_1}\times\cdots\times\mathcal E_{Y,I_t}}
    \sum_{\substack{(L_v)_{v\in Y}\\ \biguplus_{v\in Y}L_v=L}}
    \prod_{j=1}^{t}
    \left(
        \prod_{e\in E(D_j)\cap E(G[Y])}\kappa_e
        \cdot \prod_{v\in I_j}\tau_{Y,v,F_{D_j}(v),L_v}
    \right).
    \end{aligned}
\]
Indeed, for each $j$, expanding $M_{Y,I_j,L_j}$ chooses a pseudo-ear $D_j\in\mathcal E_{Y,I_j}$ and a disjoint union $\biguplus_{v\in I_j}L_v=L_j$.
Since $I_1,\dots,I_t$ partition $Y$ and $L_1,\dots,L_t$ disjointly union to $L$, this is the same as choosing $(L_v)_{v\in Y}$ with $\biguplus_{v\in Y}L_v=L$.
Thus, after summing over $t$ and over the partitions of $Y$, the second factor in the statement is the sum over spanning pseudo-ear packings of $Y$:
\[
    \sum_{\mathcal D}
    \sum_{\substack{(L_v)_{v\in Y}\\ \biguplus_{v\in Y}L_v=L}}
    \prod_{D\in\mathcal D}
    \left(
        \prod_{e\in E(D)\cap E(G[Y])}\kappa_e
        \cdot \prod_{v\in V_{\mathrm{in}}(D)}\tau_{Y,v,F_D(v),L_v}
    \right),
\]
where $\mathcal D$ ranges over all spanning pseudo-ear packings of $Y$.
Each such packing appears once, because it is produced only from the partition \(\{V_{\mathrm{in}}(D):D\in\mathcal D\}\) of \(Y\).
If \(u\in Y\) and \(D_u\) is the pseudo-ear through \(u\), then the factor \(\tau_{Y,u,F_{D_u}(u),L_u}\) can be nonzero only when \(F_{D_u}(u)=F\), which is the condition imposed in \(\mathcal R^{\mathrm{all}}_{Y,u,F}\).
Thus the surviving pseudo-ear packings are exactly those allowed by \(\mathcal R^{\mathrm{all}}_{Y,u,F}\).
Because \(A\) and \(\mathcal D\) are chosen independently, multiplying this expansion by \(\prod_{e\in E(G[Y])}(1+\kappa_e)\) yields \(R^{\mathrm{all}}_{Y,u,F,L}(\kappa,\tau)\).

For the running-time assertion, define a value $x_J$ for each $J\subseteq V(G)$ by letting $x_J:=M_{Y,J\cap Y,J\setminus Y}(\kappa,\tau)$ if $J\cap Y\neq\emptyset$, and $x_J:=0$ otherwise.
Apply \Cref{lem:app-partition-convolution} to the universe $V(G)$ with weight $x_J$ assigned to each part $J\subseteq V(G)$.
For each $t$, \Cref{lem:app-partition-convolution} computes the left-hand side of
\[
    \sum_{\{J_1,\dots,J_t\}\in\Pi_t(Y\cup L)}
    \prod_{j=1}^t x_{J_j}
    =
    \sum_{\{I_1,\dots,I_t\}\in\Pi_t(Y)}
    \sum_{\substack{L_1,\dots,L_t\\ \biguplus_{j=1}^t L_j=L}}
    \prod_{j=1}^{t}M_{Y,I_j,L_j}(\kappa,\tau).
\]
Hence summing these values over $t=1,\dots,|Y|$ gives the second factor in the statement.
Thus all $L$-values are obtained in $O^*(2^{|V(G)|})$ arithmetic operations.
The factor $\prod_{e\in E(G[Y])}(1+\kappa_e)$ is then multiplied into every entry.
\end{proof}

The next lemma computes $R$ from the values $R^{\mathrm{all}}$ by imposing the connectedness condition.

\begin{lemma}\label{lem:app-local-z-substitution}
For a partition $\mathcal P$ of $Z$, let $Y_u(\mathcal P)$ be the part containing $u$.
Fix $h,Z,u,F$, let $\tau=\tau^{h,u,F}(\kappa,\eta)$, and let $L\subseteq V(G)\setminus Z$.
Then
\[
    R_{Z,u,F,L}(\kappa,\tau)
    =
    \sum_{t=1}^{|Z|}
    (-1)^{t-1}(t-1)!
    \sum_{\substack{\mathcal P=\{Y_1,\dots,Y_t\}\in\Pi_t(Z)\\ V(F)\cap Z\subseteq Y_u(\mathcal P)}}
    \left(
    \sum_{\substack{L_1,\dots,L_t\\ \biguplus_{i=1}^tL_i=L}}
    \prod_{i=1}^{t}
    R^{\mathrm{all}}_{Y_i,u,F,L_i}(\kappa,\tau)
    \right).
\]
Moreover, for fixed $Z,u,F$ and $\tau$, all values $R_{Z,u,F,L}(\kappa,\tau)$ with $L\subseteq V(G)\setminus Z$ can be computed from the values $R^{\mathrm{all}}_{Y,u,F,L'}(\kappa,\tau)$, where $\emptyset\neq Y\subseteq Z$ and $L'\subseteq V(G)\setminus Z$, in $O^*(2^{|V(G)|})$ arithmetic operations.
\end{lemma}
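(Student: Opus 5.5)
The plan is to prove the identity by partition-lattice M\"obius inversion applied to connected components. Then we obtain the running-time claim by reusing the partition convolution of \Cref{lem:app-partition-convolution}.

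\textbf{Decomposing $R^{\mathrm{all}}$ into connected pieces.} Fix a term of $R^{\mathrm{all}}_{Z,u,F,L}$. It consists of a pair $(A,\mathcal D)$ with $A\subseteq E(G[Z])$ and $\mathcal D$ a spanning pseudo-ear packing of $Z$, together with a splitting $(L_v)_{v\in Z}$. Let $K:=(Z,A\cup\bigcup_D(E(D)\cap E(G[Z])))$, and let $\sigma$ be its partition of $Z$ into connected components. For any partition $\mathcal P\succeq\sigma$ (each part a union of components), the term splits uniquely into terms of $R^{\mathrm{all}}_{Y_i,u,F,L_i}$, one for each part $Y_i$.

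To see this, note the following for each part $Y_i$:
\begin{itemize}
\item every edge of $A$ lies inside a single part;
\item a pseudo-ear whose inner vertices meet $Y_i$ has all its $G[Z]$-edge occurrences inside $Y_i$, because consecutive inner vertices are joined by edges of $K$, so $V_{\mathrm{in}}(D)$ lies in one component;
\item edge occurrences leaving $Y_i$ but landing in $Z$ can occur only at end edges, and these go uncharged when computed with respect to $Y_i$.
\end{itemize}

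This last point is the main subtlety. With respect to $Y_i$, an end edge going to another part of $Z$ is not in $E(G[Y_i])$, so it is correctly uncharged both in the component term and in $R$ (where it lies in no component edge set). The $\tau$-factors, however, are indexed by $Y$: $\tau_{Y_i,v,F',L_v}$ requires $V(F')\setminus Y_i\subseteq L_v$, whereas $\tau_{Z,v,F',L_v}$ requires only $V(F')\setminus Z\subseteq L_v$. So I must check that an edge of $F_D(v)$ with its other endpoint in $Z\setminus Y_i$ is impossible in a surviving term. That holds for $v\ne u$ once connectivity constraints are accounted for, since an edge of $F_D(v)$ inside $G[Z]$ is an edge of $K$ and hence stays in a component. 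At $u$, the $\tau$-entry ignores $Y$ beyond $u\in Y$, so the condition $V(F)\cap Z\subseteq Y_u(\mathcal P)$ must be imposed explicitly. Conversely, for $v\neq u$ the condition $V(F')\setminus Y_i\subseteq L_v$ with $L_v$ disjoint from $Z$ forces $F'$-endpoints in $Z$ to lie in $Y_i$. I will check that these two directions make the $\tau$-factors agree for all $\mathcal P\succeq\sigma$ satisfying the $V(F)$ condition.

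\textbf{Summing over partitions.} Once this is done, the inner expression $\sum_{L_1,\dots,L_t}\prod_i R^{\mathrm{all}}_{Y_i,u,F,L_i}$ equals the sum over terms $(A,\mathcal D,(L_v))$ of $R^{\mathrm{all}}_{Z,u,F,L}$ (with $\tau$ still indexed by the parts) whose component partition $\sigma$ refines $\mathcal P$. Here $\mathcal P\succeq\sigma$ automatically gives $V(F)\cap Z\subseteq Y_u$, because $F$-edges inside $Z$ are edges of $K$ through $u$. Summing with coefficients $(-1)^{t-1}(t-1)!=\mu_{\Pi(Z)}(\mathcal P,\{Z\})$ gives each term the multiplier $\sum_{\mathcal P\succeq\sigma}\mu(\mathcal P,\hat1)$. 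By \Cref{lem:mobius-inversion-top,lem:partition-lattice-mobius}, this sum is $1$ if $\sigma=\{Z\}$ and $0$ otherwise. Therefore exactly the connected terms survive, and these are the terms of $R_{Z,u,F,L}$ with $\tau$ indexed by $Z$.

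\textbf{Running time.} Set $x_J:=R^{\mathrm{all}}_{J\cap Z,u,F,J\setminus Z}$ for $J\subseteq V(G)$ with $J\cap Z\ne\emptyset$. If $u\in J$ but $V(F)\cap Z\not\subseteq J$, set $x_J:=0$, and set $x_J:=0$ otherwise. Apply \Cref{lem:app-partition-convolution} on the universe $V(G)$ to obtain $p_t(Z\cup L)$ for every $t$ and $L$. Then weight by $(-1)^{t-1}(t-1)!$ and sum. Every part must meet $Z$, so partitions of $Z\cup L$ correspond exactly to the pairs (partition of $Z$, distribution of $L$). This gives $O^*(2^{|V(G)|})$ operations.
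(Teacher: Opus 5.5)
Your proposal is correct and follows essentially the paper's route. For a fixed term, you show that the product over the parts of $\mathcal P$ counts it, with matching $\tau$-factors, exactly when its component partition refines $\mathcal P$ (using the $\tau_{Y_i}$ condition for $v\neq u$ and the explicit $V(F)$ condition at $u$). You then apply the partition-lattice M\"obius identity on $[\sigma,\hat 1]$, and get the running time from the same zeroed table $x_J$ fed to \Cref{lem:app-partition-convolution}. One small correction: an end edge from $Y_i$ to another part of $Z$ lies in $E(G[Z])$, so it would be charged in $R$ and would be an edge of $K$, not ``uncharged in $R$''. The right point, which you then make, is that such an edge cannot occur at all when $\sigma$ refines $\mathcal P$.
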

\begin{proof}
Fix $(A,\mathcal D)\in\mathcal R^{\mathrm{all}}_{Z,u,F}$ and a tuple $(L_v)_{v\in Z}$ with $\biguplus_{v\in Z}L_v=L$.
Let $H_Z(A,\mathcal D):=(Z,A\cup\bigcup_{D\in\mathcal D}(E(D)\cap E(G[Z])))$.
We show that the right-hand side adds
\[
    \prod_{e\in A}\kappa_e
    \cdot
    \prod_{D\in\mathcal D}
    \left(
        \prod_{e\in E(D)\cap E(G[Z])}\kappa_e
        \cdot
        \prod_{v\in V_{\mathrm{in}}(D)}
        \tau_{Z,v,F_D(v),L_v}
    \right)
\]
with coefficient $1$ if $H_Z(A,\mathcal D)$ is connected and with coefficient $0$ otherwise.
The equality in the lemma follows from this coefficient statement, since \(R_{Z,u,F,L}\) is obtained from \(R^{\mathrm{all}}_{Z,u,F,L}\) by imposing this connectedness condition.
Let $\mathcal K$ be the partition of $Z$ into the connected components of $H_Z(A,\mathcal D)$.

Fix a partition $\mathcal P=\{Y_1,\dots,Y_t\}$ of $Z$, and let \(L_i=\biguplus_{v\in Y_i}L_v\).
We prove that the contribution of $(A,\mathcal D,(L_v)_{v\in Z})$ is counted by $\prod_{i=1}^{t}R^{\mathrm{all}}_{Y_i,u,F,L_i}(\kappa,\tau)$ exactly when $\mathcal K$ is a refinement of $\mathcal P$.
Suppose first that $\prod_{i=1}^{t}R^{\mathrm{all}}_{Y_i,u,F,L_i}(\kappa,\tau)$ counts this contribution.
Edges of $A$ counted in the $i$-th factor lie in $E(G[Y_i])$.
Consider a pseudo-ear \(D\) counted in the $i$-th factor and an edge occurrence of \(H_Z(A,\mathcal D)\) leaving \(Y_i\).
Let \(v\in V_{\mathrm{in}}(D)\cap Y_i\) be the endpoint of this occurrence in \(Y_i\), and let \(w\in Z\setminus Y_i\) be the other endpoint.
If \(v=u\), then this occurrence is an edge of \(F\), but edges in \(F\) cannot join two parts of \(\mathcal P\), because \(V(F)\cap Z\subseteq Y_u(\mathcal P)\).
If \(v\neq u\), then \(\tau_{Y_i,v,F_D(v),L_v}\) is nonzero, so the condition \(V(F_D(v))\setminus Y_i\subseteq L_v\) forces \(w\in L_v\subseteq L_i\subseteq V(G)\setminus Z\), contradicting \(w\in Z\).
Thus $H_Z(A,\mathcal D)$ has no edge joining two parts of $\mathcal P$, so every connected component of $H_Z(A,\mathcal D)$ is contained in one part of $\mathcal P$.

Conversely, assume that $\mathcal K$ is a refinement of $\mathcal P$.
Then every edge of $H_Z(A,\mathcal D)$ lies inside one part $Y_i$, and each pseudo-ear of $\mathcal D$ has all its inner vertices in one part $Y_i$.
Since $F_{D_u}(u)=F$, every vertex of $V(F)\cap Z$ lies in the component of $H_Z(A,\mathcal D)$ containing $u$, and hence \(V(F)\cap Z\subseteq Y_u(\mathcal P)\).
Therefore $\prod_{i=1}^{t}R^{\mathrm{all}}_{Y_i,u,F,L_i}(\kappa,\tau)$ counts the contribution of $(A,\mathcal D,(L_v)_{v\in Z})$.

It remains to show that the \(\tau\)-factor in \(R^{\mathrm{all}}_{Y_i,u,F,L_i}\) is equal to the corresponding \(\tau\)-factor in \(R^{\mathrm{all}}_{Z,u,F,L}\).
By the refinement assumption, for each \(D\in\mathcal D\) with \(V_{\mathrm{in}}(D)\subseteq Y_i\) and each \(v\in V_{\mathrm{in}}(D)\setminus\{u\}\), no endpoint of an edge in \(F_D(v)\) lies in \(Z\setminus Y_i\).
Thus the condition \(V(F_D(v))\setminus Y_i\subseteq L_v\) holds if and only if \(V(F_D(v))\setminus Z\subseteq L_v\) holds.
All other conditions in the definition of \(\tau\) are the same for \(Y_i\) and \(Z\), so \(\tau_{Y_i,v,F_D(v),L_v}=\tau_{Z,v,F_D(v),L_v}\).

It follows that the contribution of $(A,\mathcal D,(L_v)_{v\in Z})$ receives the coefficient $\sum_{\mathcal P}(-1)^{|\mathcal P|-1}(|\mathcal P|-1)!$, where the sum ranges over the partitions of $Z$ obtained by merging parts of $\mathcal K$.
By \Cref{lem:partition-lattice-mobius}, this coefficient is $1$ when $\mathcal K$ has one part and $0$ otherwise.
Hence exactly the tuples with connected $H_Z(A,\mathcal D)$ remain, which is precisely the definition of $R_{Z,u,F,L}(\kappa,\tau)$.

For the running-time assertion, fix $t\in\{1,\dots,|Z|\}$.
For this fixed $t$, we compute the sum over all partitions $\mathcal P\in\Pi_t(Z)$ in the $t$-th summand of the statement, before multiplying by $(-1)^{t-1}(t-1)!$.
Define a table $x$ on $2^{V(G)}$ as follows.
For $J\subseteq V(G)$, let $x_J:=R^{\mathrm{all}}_{J\cap Z,u,F,J\setminus Z}(\kappa,\tau)$ if $J\cap Z\neq\emptyset$ and either $u\notin J\cap Z$ or $V(F)\cap Z\subseteq J\cap Z$; otherwise let $x_J:=0$.
For every $L\subseteq V(G)\setminus Z$, \Cref{lem:app-partition-convolution} computes the left-hand side of
\[
    \sum_{\{J_1,\dots,J_t\}\in\Pi_t(Z\cup L)}
    \prod_{j=1}^{t}x_{J_j}
    =
    \sum_{\substack{\mathcal P=\{Y_1,\dots,Y_t\}\in\Pi_t(Z)\\ V(F)\cap Z\subseteq Y_u(\mathcal P)}}
    \sum_{\substack{L_1,\dots,L_t\\ \biguplus_{i=1}^tL_i=L}}
    \prod_{i=1}^tR^{\mathrm{all}}_{Y_i,u,F,L_i}(\kappa,\tau),
\]
for all choices of $L$ in $O^*(2^{|V(G)|})$ arithmetic operations.
The condition $V(F)\cap Z\subseteq Y_u(\mathcal P)$ comes from the extra requirement $V(F)\cap Z\subseteq J\cap Z$ imposed in the definition of $x_J$ only when $u\in J\cap Z$: for this unique part $J$, we have $J\cap Z=Y_u(\mathcal P)$.
The right-hand side is the sum over $\mathcal P$ in the $t$-th summand of the statement, without the factor $(-1)^{t-1}(t-1)!$.
We multiply these values by $(-1)^{t-1}(t-1)!$ and add them to the table for $R_{Z,u,F,\cdot}(\kappa,\tau)$.
Summing over $t$ gives all $L$-values in $O^*(2^{|V(G)|})$ arithmetic operations.
\end{proof}

\subsection{Runtime analysis}\label{subsec:evaluation-runtime}
We now summarize the formulas obtained so far.
\Cref{lem:app-one-ear-held-karp} computes the one-pseudo-ear values $M$.
\Cref{lem:app-rall-expansion} computes $R^{\mathrm{all}}$ from these $M$-values, and \Cref{lem:app-local-z-substitution} computes the connected values $R$ from $R^{\mathrm{all}}$.
\Cref{lem:app-top-bag-substitution} computes $Q^\circ$ from $R$ using $Q^\bullet$-values on smaller vertex sets through the substitution $\tau$, while \Cref{lem:app-attachment} computes $Q^\bullet$ from $Q^\circ$.
Thus $Q^\circ$ and $Q^\bullet$ are filled together in increasing order of the vertex set size.
\Cref{lem:app-unrooted-recovery} then recovers $Q$ from $Q^\bullet$.
Finally, \Cref{lem:app-cost-from-q} expresses the desired weight polynomials $P_U$ in terms of evaluations of $Q$.
Thus the remaining task is to make this layered computation explicit and to bound the total running time.

\begin{lemma}\label{lem:app-local-bag}
Fix a nonnegative integer $h$.
Assume that all values $Q^\bullet_{X',v,F',d}(\kappa,\eta)$ with $v\in X'$ and $|X'|\leq h$ are known.
Then, for this fixed $h$, all values $R_{Z,u,F,L}(\kappa,\tau^{h,u,F}(\kappa,\eta))$ with $u\in Z$, $|Z|\geq2$, $L\subseteq V(G)\setminus Z$, and $|Z|+|L|=h+1$ can be computed in $O^*(4^{|V(G)|})$ time.
\end{lemma}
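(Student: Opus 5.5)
We compute, for the fixed $h$, the pipeline $\tau\to M\to R^{\mathrm{all}}\to R$, batching over the outside set $L$ at every stage. The cost bound then comes from summing $2^{|V(G)|}$ over all candidate sets $Y$ or $Z$.

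\emph{Step 1: build $\tau$.} Fix $u$ and $F\in\binom{\delta_G(u)}2$; there are polynomially many such pairs. We first fill the table $\tau=\tau^{h,u,F}(\kappa,\eta)$. An entry $\tau_{Y,v,F',L}$ depends on $Y$ only through the test $V(F')\setminus Y\subseteq L$, and its value is $\eta_{v,F'}\sum_{d}(-1)^dd!\,Q^\bullet_{L\cup\{v\},v,F',d}$. It is nonzero only when $|L\cup\{v\}|\le h$, and for such $L$ the required $Q^\bullet$-values are known by assumption. We therefore precompute the base quantity $\eta_{v,F'}\sum_d(-1)^dd!\,Q^\bullet_{L\cup\{v\},v,F',d}$ for all $v,F',L$ in $O^*(2^{|V(G)|})$ time. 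Each entry indexed by $Y$ is then obtained by a polynomial-time membership test, so no table is ever materialized beyond $O^*(4^{|V(G)|})$ size. Indeed, the pairs $(Y,L)$ with $Y\cap L=\emptyset$ number $3^{|V(G)|}$, and we may simply materialize them.

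\emph{Step 2: compute $M$ and $R^{\mathrm{all}}$.} For every $Y\subseteq V(G)$, we run \Cref{lem:app-one-ear-held-karp} with this $\tau$. This gives all $M_{Y,I,L}$ in $O^*(2^{|V(G)|})$ operations for that $Y$. We then apply \Cref{lem:app-rall-expansion}, which yields all $R^{\mathrm{all}}_{Y,u,F,L}$ in another $O^*(2^{|V(G)|})$ operations. Summed over the $2^{|V(G)|}$ choices of $Y$, this costs $O^*(4^{|V(G)|})$.

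\emph{Step 3: impose connectedness to get $R$.} For every $Z\ni u$ with $|Z|\ge2$, we apply \Cref{lem:app-local-z-substitution}. This needs the values $R^{\mathrm{all}}_{Y,u,F,L'}$ for $Y\subseteq Z$ and $L'\subseteq V(G)\setminus Z$, all of which Step 2 computed. It produces all $R_{Z,u,F,L}$ in $O^*(2^{|V(G)|})$ operations per $Z$, hence $O^*(4^{|V(G)|})$ in total. Finally we retain only the entries with $|Z|+|L|=h+1$; restricting to them does not reduce the cost but is harmless.

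\emph{Main obstacle: the reused $\tau$-entries.} The delicate point is that \Cref{lem:app-local-z-substitution} evaluates $R^{\mathrm{all}}$ on subsets $Y_i\subsetneq Z$ using $\tau_{Y_i,\cdot}$-entries, not $\tau_{Z,\cdot}$-entries. We must check that Step 2 computed exactly these entries, with the same $h,u,F$ and with $L'$ ranging over subsets of $V(G)\setminus Z\subseteq V(G)\setminus Y_i$. This holds because Step 2 ranges over all $Y$ and all $L\subseteq V(G)\setminus Y$.

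\emph{Arithmetic cost.} We also need each arithmetic operation to cost polynomial time. All evaluations are over integers or a suitable field at interpolation points, with polynomially bounded bit length; this is standard given polynomially bounded degrees. Hence $O^*(4^{|V(G)|})$ arithmetic operations translate into $O^*(4^{|V(G)|})$ time.
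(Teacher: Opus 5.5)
Your proposal is correct and follows the paper's own proof: fix \(u\) and \(F\), run the Held--Karp lemma and the \(R^{\mathrm{all}}\) expansion for every \(Y\), then the connectedness inversion of \Cref{lem:app-local-z-substitution} for every \(Z\). Each set costs \(O^*(2^{|V(G)|})\), which gives \(O^*(4^{|V(G)|})\) in total. Your explicit construction of the \(\tau\)-table and your check that all the needed \(R^{\mathrm{all}}_{Y_i,u,F,L'}\) inputs are available are details the paper leaves implicit, and both are sound.
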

\begin{proof}
Fix $u$ and $F$, and let $\tau$ be the table $\tau^{h,u,F}(\kappa,\eta)$.
\Cref{lem:app-one-ear-held-karp} computes all $M_{Y,I,L}(\kappa,\tau)$ for a fixed $Y$ in $O^*(2^{|V(G)|})$ time.
Over all $Y\subseteq V(G)$, this costs $2^{|V(G)|}\cdot O^*(2^{|V(G)|})=O^*(4^{|V(G)|})$.
From the $M$-tables, the last assertion of \Cref{lem:app-rall-expansion} computes all $R^{\mathrm{all}}_{Y,u,F,L}(\kappa,\tau)$ for a fixed $Y$ in $O^*(2^{|V(G)|})$ time.
Over all $Y\subseteq V(G)$, this costs $2^{|V(G)|}\cdot O^*(2^{|V(G)|})=O^*(4^{|V(G)|})$.
From the $R^{\mathrm{all}}$-tables, the last assertion of \Cref{lem:app-local-z-substitution} computes all $R_{Z,u,F,L}(\kappa,\tau)$ for a fixed $Z$ in $O^*(2^{|V(G)|})$ time.
Over all $Z\subseteq V(G)$, this costs $2^{|V(G)|}\cdot O^*(2^{|V(G)|})=O^*(4^{|V(G)|})$.
The entries with $|Z|+|L|=h+1$ are among the computed $R$-values.
The choices of $u$ and $F$ only contribute polynomial factors.
\end{proof}

\begin{lemma}\label{lem:app-deterministic-q}
Given values of $\kappa_e$ and $\eta_{v,F}$, the value $Q(\kappa,\eta)$ can be computed in $O^*(4^{|V(G)|})$ arithmetic operations.
\end{lemma}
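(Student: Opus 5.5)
The computation is a layered induction on $h=|X|$, filling $Q^\circ$ and $Q^\bullet$ together and then reading off $Q$. Fix a root vertex $r$.

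\textbf{Base case.} Set $Q^\circ_{\{v\},v,F}=1$, $Q^\bullet_{\{v\},v,F,0}=1$, and $Q^\bullet_{\{v\},v,F,d}=0$ for $d>0$, for every $v$ and every $F\in\binom{\delta_G(v)}2$.

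\textbf{Inductive step.} Suppose that for some $h\geq1$ all values $Q^\bullet_{X',v,F',d}(\kappa,\eta)$ with $|X'|\leq h$ are known. Then:
\begin{enumerate}
\item Build the table $\tau^{h,u,F}(\kappa,\eta)$ for every $u$ and $F$. Each nonzero entry $\tau_{Y,v,F',L}$ with $v\neq u$ uses only $Q^\bullet_{L\cup\{v\},v,F',d}$ with $|L\cup\{v\}|\leq h$, so these are already known. Each such entry is an $\eta$-factor times a polynomially long alternating sum over $d$. There are $O^*(4^{|V(G)|})$ index pairs $(Y,L)$, so building the table takes $O^*(4^{|V(G)|})$ operations.
\item Apply \Cref{lem:app-local-bag} with this $h$ to get every $R_{Z,u,F,L}$ with $|Z|\geq2$ and $|Z|+|L|=h+1$, in $O^*(4^{|V(G)|})$ operations.
\item For every $X$ with $|X|=h+1$, every $u\in X$ and every $F$, obtain $Q^\circ_{X,u,F}$ from \Cref{lem:app-top-bag-substitution} by summing $R_{Z,u,F,X\setminus Z}$ over $Z\subseteq X$ with $u\in Z$ and $|Z|\geq2$. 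The table used is exactly $\tau^{|X|-1,u,F}=\tau^{h,u,F}$. Summed over all $X$, this costs $O^*(3^{|V(G)|})$.
\item Compute the $Q^\bullet$-values on sets of size $h+1$ from the $Q^\circ$-values via \Cref{lem:app-attachment}. This needs only $Q^\circ$ on sets of size at most $h+1$, all of which are now known.
\end{enumerate}
Running the partition convolution of \Cref{lem:app-partition-convolution} once for each pair $(v,F)$ over the universe $V(G)\setminus\{v\}$ gives $Q^\bullet$ on all subsets at once, in $O^*(2^{|V(G)|})$. Since there are only $|V(G)|$ layers, I could even redo this in every layer without exceeding the bound.

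\textbf{Final assembly.} After the layer $h=|V(G)|-1$, the values $Q^\bullet_{V(G),r,F,d}$ are available. \Cref{lem:app-unrooted-recovery} then gives $Q(\kappa,\eta)$ as a polynomial-size sum over $F$ and $d$.

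\textbf{Running time.} There are $|V(G)|$ layers, each costing $O^*(4^{|V(G)|})$, so the total is $O^*(4^{|V(G)|})$ arithmetic operations.

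\textbf{Main obstacle.} The hard part is the ordering argument: $\tau$ in layer $h$ must reference only already-computed $Q^\bullet$-values. This is guaranteed by the cap $|L\cup\{v\}|\leq h$ built into $\tau$, together with $|Z|\geq2$, which forces every child part to be strictly smaller than $X$.

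A second point is that \Cref{lem:app-local-bag} returns only the $R$-values with $|Z|+|L|=h+1$. Since $\tau$ depends on $h$, I would not reuse $R$-values from earlier layers, and I would check that entries with smaller $|X|$ were already finalized in their own layer.

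Finally, the degenerate case $|V(G)|=2$ is covered directly: the top bag there is $Z=V(G)$, which falls under the inductive step with $h=1$.
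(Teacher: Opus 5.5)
Your proposal is correct and follows the paper's proof: fill $Q^\circ$ and $Q^\bullet$ layer by layer in increasing order of $|X|$. In each layer, use \Cref{lem:app-local-bag} to get the $R$-values, then \Cref{lem:app-top-bag-substitution} for $Q^\circ$, then \Cref{lem:app-attachment} for $Q^\bullet$, and finally \Cref{lem:app-unrooted-recovery} to read off $Q$, for a total of $O^*(4^{|V(G)|})$. Your explicit accounting for building the $\tau$-tables and your variant of recomputing $Q^\bullet$ globally in each layer are harmless refinements of steps the paper leaves implicit or does set by set in $O^*(3^{|V(G)|})$ total.
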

\begin{proof}
We compute $Q^\circ$ and $Q^\bullet$ in increasing order of the vertex-set size.
The initial values are $Q^\circ_{\{u\},u,F}=1$, and \Cref{lem:app-attachment} gives the corresponding singleton $Q^\bullet$-values.
Assume that all $Q^\bullet$-values on vertex sets of size at most $h$ are known.
Then \Cref{lem:app-local-bag} computes all $R_{Z,u,F,L}(\kappa,\tau^{h,u,F}(\kappa,\eta))$ with $|Z|+|L|=h+1$ in $O^*(4^{|V(G)|})$ time.
By \Cref{lem:app-top-bag-substitution}, these $R$-values give all $Q^\circ_{X,u,F}$ with $|X|=h+1$, which costs $\sum_{|X|=h+1}O^*(2^{|X|})=O^*(3^{|V(G)|})$ time.
After this layer is filled, \Cref{lem:app-attachment} computes the $Q^\bullet$-values on vertex sets of size $h+1$.

Over all layers, the applications of \Cref{lem:app-local-bag} cost $O^*(4^{|V(G)|})$ time.
For fixed $v$ and $F$, computing all values $Q^\bullet_{X,v,F,d}$ from the corresponding $Q^\circ$-values by \Cref{lem:app-attachment} costs only $\sum_{X\subseteq V(G)}O^*(2^{|X|})=O^*(3^{|V(G)|})$ time.
The choices of $v$ and $F$ only contribute polynomial factors.
Finally, \Cref{lem:app-unrooted-recovery} gives $Q(\kappa,\eta)$ from the computed $Q^\bullet$-values in polynomial time.
\end{proof}

\begin{lemma}\label{lem:deterministic-2vcssbp-all-subuniverses}
For a \textsc{2VCSS-BP} instance with universe $U_0$, all output entries for prescribed subuniverses $U\subseteq U_0$ can be computed in $O^*(4^{|V(G)|}2^{|U_0|}W)$ time, suppressing polynomial factors in $k$, where the variant for $U$ replaces (E2) by the condition that $\{S_v\}_{v\in V(G)}$ is a partition of $U$.
\end{lemma}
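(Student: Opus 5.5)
The plan is to assemble the pieces already established. By \Cref{lem:bp_to_bpm}, for every prescribed $U\subseteq U_0$ the output entry of \textsc{2VCSS-BP} coincides with that of \textsc{2VCSS-BP-M}. That entry consists of the pairs $(\alpha,\beta)\in\Omega$ dominating some monomial $\xi^{\alpha'}\zeta^{\beta'}$ with nonzero coefficient in $P_U(\xi,\zeta)$. So it suffices to recover the coefficients of every $P_U$ with $\xi$-degree at most $k$ and $\zeta$-degree at most $kW$. The output sets are then read off by a polynomial-size scan over $\Omega$ per $U$, or over all $U$ at once with a $2^{|U_0|}\,\mathrm{poly}(k,W)$ overhead.

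To get the coefficients, I would use \Cref{lem:app-cost-from-q}. Fix numeric values of $\xi,\zeta,\omega$ from a sufficiently large field, or from the rationals with exact integer arithmetic. For every $Y\subseteq U_0$, evaluate $Q(\kappa,\eta(\mathbf 1_Y))$ by \Cref{lem:app-deterministic-q}. The inputs are $\kappa_e=\xi^{\chi_e}\zeta^{c_e}$ and $\eta_{v,F}(\mathbf 1_Y)$, which is computed directly from $\mathcal C_v$ in polynomial time. Each such call costs $O^*(4^{|V(G)|})$, so all $Y$ together cost $O^*(4^{|V(G)|}2^{|U_0|})$.

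Next, one M\"obius transform over the subset lattice (\Cref{lem:app-all-subuniverse-ie}) gives, for every $U$, the value $\sum_{Y\subseteq U}(-1)^{|U|-|Y|}Q(\kappa,\eta(\mathbf 1_Y))$ in $O^*(2^{|U_0|})$ further operations. I would repeat this for $|U_0|+1$ values of $\omega$ and extract the coefficient of $\omega^{|U|}$ by \Cref{lem:interpolation}. The total $\omega$-degree is at most $|U_0|$ times the number of inner vertices, so polynomially many points suffice. Then I would interpolate in $\xi$ and $\zeta$. The degrees of $P_U$ need care, because pseudo-ear occurrences may repeat edges and the $\mathbf s$-vectors add up. Still, the $\xi$-degree is bounded by a polynomial in $|E(G)|,|V(G)|,k$, and the $\zeta$-degree by that polynomial times $W$. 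So $\mathrm{poly}\cdot O(W)$ evaluation points suffice, giving a total of $O^*(4^{|V(G)|}2^{|U_0|}W)$ evaluations-times-cost.

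The main obstacle is bit complexity. To have deterministic $O^*$ time, the arithmetic operations counted in \Cref{lem:app-deterministic-q} must be on numbers of polynomial bit length. The M\"obius coefficients $(d-1)!$ and the evaluation points have polynomial bit length. Likewise, the values of $Q$ are integers bounded by $2^{\mathrm{poly}}$ times powers of points of magnitude $\mathrm{poly}\cdot W$, so their bit length is polynomial in the input and in $\log W$. Thus every arithmetic operation costs polynomial time. Alternatively, I would work modulo a prime exceeding every coefficient bound, but that requires an argument that nonzero coefficients stay nonzero. I would prefer exact integer arithmetic with Lagrange interpolation over $\mathbb Q$. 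This avoids cancellation issues and keeps the algorithm deterministic, which completes the $O^*(4^{|V(G)|}2^{|U_0|}W)$ bound.
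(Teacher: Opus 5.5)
Your overall assembly matches the paper's: \Cref{lem:bp_to_bpm}, then \Cref{lem:app-cost-from-q}, one evaluation of $Q$ per Boolean substitution via \Cref{lem:app-deterministic-q}, a M\"obius transform over $U_0$, and interpolation in $\omega,\xi,\zeta$. However, the bit-complexity step, which you rightly identify as the crux, fails as you argue it. The $\zeta$-degree of $P_U$ can be of order $\mathrm{poly}\cdot W$, so you need that many distinct $\zeta$-points. Over $\mathbb Z$ or $\mathbb Q$, such a set must contain points $z$ whose magnitude or height is of order $\mathrm{poly}\cdot W$ (or its square root). For such a point, a single substituted weight $\kappa_e=\xi^{\chi_e}z^{c_e}$ already has about $c_e\log|z|$ bits, i.e.\ up to $\Theta(W\log W)$. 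The values of $M$, $R^{\mathrm{all}}$, $R$, $Q^\circ$, $Q^\bullet$ and $Q$ can have $\mathrm{poly}\cdot W\log W$ bits. Your claim that these numbers have bit length polynomial in the input and $\log W$ confuses the size of the point $z$ with the size of $z^{D}$, where $D$ is the degree. Exact rational arithmetic therefore multiplies the $O^*(4^{|V(G)|}2^{|U_0|}W)$ operation count by roughly another factor $W$. You lose the linear dependence on $W$ that the lemma, and hence the final $O^*(36^kW)$ bound, requires.

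The paper instead performs all evaluations and interpolations over $\mathbb F_p$ for primes $p$ only slightly larger than the interpolation degrees, so every number has $O(\log(\mathrm{poly}\cdot W))$ bits. The coefficients of $P_U$ are integers of absolute value at most $2^{\mathrm{poly}}$, so polynomially many such primes and the Chinese remainder theorem recover them exactly. This stays deterministic, since primes of size about $\mathrm{poly}\cdot W$ can be found by a sieve within the budget. It also makes your worry about nonzero coefficients vanishing modulo $p$ moot, because no single prime has to exceed the coefficient bound.

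A secondary point: you compute each $\eta_{v,F}(\mathbf 1_Y)$ ``directly from $\mathcal C_v$'', which costs about $|\mathcal C_v|$ per $Y$. In the intended use (\Cref{lem:runtime_twovca_via_oracle}), $\mathcal C_v$ contains a triple for every $S\subseteq U_0$, so this amounts to roughly $4^{|U_0|}$ rather than $2^{|U_0|}$ work. The paper obtains all values $\eta_{v,F}(\mathbf 1_Y)$, $Y\subseteq U_0$, with one fast zeta transform (\Cref{lem:fast-zeta-mobius}) per interpolation tuple and per pair $(v,F)$.
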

\begin{proof}
By \Cref{lem:app-cost-from-q}, it is enough to evaluate $Q$ after the substitutions
\[
    \kappa_e=\xi^{\chi_e}\zeta^{c_e}
    \quad\text{and}\quad
    \eta_{v,F}=\sum_{\substack{S\subseteq U_0,\ \mathbf s=(\alpha,\beta)\\ (F,S,\mathbf s)\in\mathcal C_v}}
    \omega^{|S|}\xi^{\alpha}\zeta^{\beta}\prod_{p\in S}\gamma_p,
\]
where the variables $\gamma_p$ range over all $2^{|U_0|}$ Boolean substitutions and $(\omega,\xi,\zeta)$ ranges over tuples of interpolation points.
The \(\omega\)-degree and \(\xi\)-degree are bounded by a polynomial in the input size, and the \(\zeta\)-degree is bounded by \(W\) times such a polynomial.
Since interpolation uses one more point than the corresponding degree bound in each variable, it uses \(O^*(W)\) tuples \((\omega,\xi,\zeta)\) of interpolation points.
For fixed values of \(\omega,\xi,\zeta\) and fixed \(v,F\), all values \(\eta_{v,F}(\mathbf 1_Y)\) over \(Y\subseteq U_0\) are obtained by the fast zeta transform of \Cref{lem:fast-zeta-mobius} on \(U_0\).
Thus all Boolean substitutions of the \(\eta\)-variables are computed in \(O^*(2^{|U_0|})\) arithmetic operations for each interpolation tuple.
Since there are \(O^*(W)\) interpolation tuples, these fast zeta transforms contribute \(O^*(2^{|U_0|}W)\) arithmetic operations in total.
Thus the total cost of the $Q$-evaluations, including the construction of the substituted \(\eta\)-values, is $2^{|U_0|}\cdot O^*(W)\cdot O^*(4^{|V(G)|})=O^*(4^{|V(G)|}2^{|U_0|}W)$, because each specialization is evaluated in $O^*(4^{|V(G)|})$ time by \Cref{lem:app-deterministic-q}.
\Cref{lem:app-cost-from-q} recovers all polynomials $P_U(\xi,\zeta)$ for $U\subseteq U_0$ simultaneously, using one M\"obius transform over $U_0$ for each value of $\omega,\xi,\zeta$.
These M\"obius transforms cost $O^*(2^{|U_0|}W)$ arithmetic operations in total.
After recovering the polynomials $P_U(\xi,\zeta)$, we inspect the coefficients with exponents in $\Omega$ to determine the upward-closed output entry for each $U\subseteq U_0$.
By \Cref{lem:bp_to_bpm}, this is also the output entry for the original \textsc{2VCSS-BP} instance.

It remains to justify that the arithmetic operations above can be implemented within the same $O^*$-bound in bit complexity.
Every integer coefficient reconstructed by interpolation is a signed sum of at most $2^{\poly(|V(G)|+|E(G)|+|U_0|)}$ terms, each of which is a product of polynomially many M\"obius factors $(d-1)!$ with $d\leq |V(G)|$.
Hence the reconstructed coefficients have absolute value at most $2^{\poly(|V(G)|+|E(G)|+|U_0|)}$.
We perform the evaluations and interpolations modulo primes \(p\) larger than all interpolation degrees.
Then the whole computation is valid over \(\mathbb F_p\), and interpolation recovers the desired coefficients modulo \(p\).
Since the interpolation degrees for \(\omega\) and \(\xi\) are polynomial and the interpolation degree for \(\zeta\) is \(O^*(W)\), such primes have bit length polynomial in the input size and \(\log W\).
Polynomially many such primes suffice to reconstruct the integer coefficients by the Chinese remainder theorem.
Thus each arithmetic operation in the stated arithmetic-operation bound is on integers with polynomially many bits plus \(O(\log W)\) bits.
\end{proof}

\twovcamainthm*
\begin{proof}
By \Cref{lem:deterministic-2vcssbp-all-subuniverses}, the all-subuniverse oracle required in \Cref{lem:runtime_twovca_via_oracle} is available with $\rho=4$ and $\nu=2$.
Thus \textsc{2VCA} can be solved deterministically in $O^*(\max\{3,4+2\}^{2k}W)=O^*(36^kW)$ time.
\end{proof}

\section*{Acknowledgements}
Tomohiro Koana was supported in part by JST CREST Grant Number JPMJCR24Q2 and JST ERATO Grant Number JPMJER2301.

\section*{Acknowledgement of AI Assistance}
The authors used OpenAI's ChatGPT/Codex as an AI assistance tool in preparing this manuscript.
The overall algorithmic strategy, the informal formulations and intuition for the lemmas, and the proof plans were developed by the authors.
The tool was used to help turn the authors' informal materials into formal statements and proofs, draft and revise explanatory text, refine definitions and notation, and check consistency.
It was also used to assist with surveying related techniques and work, obtaining review-style comments on drafts, drafting the technical overview, and editing LaTeX/TikZ source for the figures.
All mathematical claims, proofs, algorithms, references, and figures were reviewed, verified, and substantially revised by the authors.
The authors assume responsibility for all content.

\bibliography{refs}

\end{document}